\def\dOi{12(1:1)2016}
\subjclass{F.1.1, F.3.1, J.1}
\newcounter{analphabet}
{\rm%
 \begin{list}%
 {(\arabic{analphabet})}%
 {\usecounter{analphabet}%
  \addtolength{\labelwidth}{5mm}%
  \addtolength{\leftmargin}{-2mm}%
  \setlength{\rightmargin}{0pt}%
  \setlength{\itemsep}{1mm}%
  \setlength{\parsep}{0pt}}}%
 {\end{list}}
\def\keyword#1{\mathsf{#1}}
\def\truek{\keyword{true}}
\def\falsek{\keyword{false}}
\def\ifk{\mathsf{if}}
\def\thenk{\mathsf{then}}
\def\elsek{\mathsf{else}}
\def\PAR{\mid}
\newcommand{\INACT}{\ensuremath{\mathbf 0}}
\newcommand{\SEND}[2]{#1!\langle#2\rangle}
\newcommand{\RECEIVE}[2]{#1?(#2)}
\newcommand{\SEL}[2]{#1!#2.}
\newcommand{\BRANCHS}[3]{#1 ? \{ #2. #3 \}} 
\newcommand{\BRANCH}[2]{#1 ? \{ #2 \}}
\newcommand{\IF}[2]{\ifk\ #1\ \thenk\ #2\ \elsek\ }
\newcommand{\VAR}[2]{#1[#2]}
\newcommand{\RECtriv}[1]{\mathop{\mathrm{rec}}{}#1.}
\let\REC\RECtriv
\newcommand{\FREE}[1]{\ensuremath{\mathsf{free}\,#1}}
\newcommand{\ASSIGN}[2]{\ensuremath{#1:=#2}}
\newcommand{\READ}[2]{\ensuremath{\mathsf{read}\, #1(#2)}}
\newcommand{\RECN}[5]{(\mathop{\mathrm{rec}^{#3}}#1(#2).#4;#5)}
\newcommand{\RECNnp}[5]{\mathop{\mathrm{rec}^{#3}}#1(#2).#4;#5}
\def\sendp{\SEND ke.P}
\def\receivep{\RECEIVE k{x}.P}
\def\selp{\SEL klP}
\def\branchps{\BRANCHS{k}{l_i}{P_i}}    
\def\ifp{\IF ePQ}
\def\varp{\VAR X{\tilde k}}
\def\recp{\REC XP}
\def\recnp{\RECN XiePQ}
\newcommand{\fn}{\operatorname{\mathsf{fn}}}      
\newcommand{\dom}{\operatorname{{\mathsf{dom}}}}
\def\rulename#1{[\text{\sc #1}]\xspace}
\def\sendr{\rulename{\E-Out}}
\def\recvr{\rulename{\E-In}}
\def\branchr{\rulename{\E-Bra}}
\def\selr{\rulename{\E-Sel}}
\def\inactr{\rulename{\E-Inact}}
\def\concr{\rulename{\E-Par}} 
\def\varr{\rulename{\E-Var}}
\def\defr{\rulename{\E-Rec}}
\def\varnr{\rulename{\E-VarP}}
\def\defnr{\rulename{\E-RecP}}
\def\parr{\rulename{Par}}	
\def\ifr{\rulename{\E-Cond}}
\def\branch{\&}
\def\one{\mathtt{end}}
\newcommand{\branchst}[1][T]{\branch\{l_i[L_i].{#1}_i\}_{i\in I}}
\newcommand{\selst}[1][T]{\mathord\oplus\{l_i[L_i].{#1}_i\}_{i\in I}}
\newcommand{\cbranchst}[1][T]{\branch\{l_i{\color{blue}[L_i]}.{#1}_i\}_{i\in I}}
\newcommand{\cselst}[1][T]{\mathord\oplus\{l_i{\color{blue}[L_i]}.{#1}_i\}_{i\in I}}
\def\recvvt{?.T}
\def\sendvt{!.T}
\def\thro wt{\throwT;U}
\def\rect{\mu t.T}
\newcommand{\proves}{\vdash}
\newcommand{\AT}[2]{#1\colon #2}
\def\grmeq{\; ::= \;}
\def\grmor{\; \mid \;}
\newcommand{\Case}[1]{\vspace{.4ex}\noindent\textbf{Case} {#1}\textbf.}
\def\fps@figure{tp}      
\def\fps@table{tp}
\newif\ifny\nytrue
\newif\ifvv\vvtrue
\newcommand{\VASCO}[1]
{\ifvv{\color{blue}{#1}}\else{#1}\fi}
\newcommand{\lto}[1]{\xrightarrow{#1}}
\newcommand{\transition}[1]{\lto{#1}}
\newcommand{\infinitetraces}[1]{\mathop{\mathsf{tr}}(#1)}
\newcommand{\selectiontraces}[1]{\mathop{\mathsf{str}}(#1)}
\newcommand{\responsivetraces}[1]{\mathop{\mathsf{tr}_{\!R}}(#1)}
\newcommand{\responsiveselectiontraces}[1]{\mathop{\mathsf{str}_{\!R}}(#1)}
\newcommand{\support}[1]{\mathop{\mathsf{res}}(#1)}
\newcommand{\emphasize}[1]{\emph{#1}}
\newcommand{\N}{\mathbb N}
\renewcommand{\int}{\mathsf{int}}
\newcommand{\dual}[1]{\overline{#1}}
\let\emphasise\emph
\let\domain\dom
\def\ltag#1{\rulename{#1}\quad}
\def\rtag#1{\quad\rulename{#1}}
\newcommand{\responded}{\mathop{\mathsf{res}}}
\newcommand{\requested}{\mathop{\mathsf{req}}}
\newcommand{\selected}{\mathop{\mathsf{sel}}}
\newcommand{\grey}[1]{\fcolorbox{lightgray}{lightgray}{#1}}
\DeclareMathOperator{\subject}{\mathsf{subj}}
\newcommand{\standard}{\mathop{\mathsf{std}}}
\newif\iflong
\newenvironment{proof*}
    {\iflong\proof\else\expandafter\comment\fi}%
    {\iflong\endproof\else\expandafter\endcomment\fi}
\newcommand{\shortcut}[1][6mm]{}
\theoremstyle{plain}
\theoremstyle{plain}
\theoremstyle{plain}
\theoremstyle{plain}
\theoremstyle{plain}
\theoremstyle{definition}
\DeclareRobustCommand*\cal{\@fontswitch\relax\mathcal}
\begin{document} 
\def\paragraph#1{\bigskip\noindent{\it #1}\ }

\title[Type-checking Liveness for Collaborative Processes]{Type-checking Liveness for Collaborative Processes with Bounded and
Unbounded Recursion\rsuper*}

\author[S.~Debois]{S\o ren Debois\rsuper a}
\author[T.~Hildebrandt]{Thomas Hildebrandt\rsuper b}
\author[T.~Slaats]{Tijs Slaats\rsuper c}
\author[N.~Yoshida]{Nobuko Yoshida\rsuper d}

\address{{\lsuper{a,b,c}}IT University of Copenhagen, Rued Langgaards Vej 7,
  2300 Copenhagen S, Denmark
}
\email{\{debois@itu.dk,hilde@itu.dk,tslaats@itu.dk\}}

\address{{\lsuper c}Exformatics A/S,
  Dag Hammerskjölds Allé 13,
  2100 København Ø, Denmark}

\address{{\lsuper d}Imperial College London,
Department of Computing,  
180 Queen's Gate, 
South Kensington Campus,
SW7 2AZ, 
United Kingdom
}
\email{yoshida@doc.ic.ac.uk} 

\thanks{This work supported in part by the Computational Artifacts
project (VELUX 33295, 2014-2017); by the Danish Agency for Science, Technology
and Innovation; by EPSRC EP/K011715/1, EPSRC EP/K034413/1, and EPSRC
EP/L00058X/1, EU project FP7-612985 UpScale, and EU COST Action IC1201
BETTY.}

\keywords{Session types, Business processes, Liveness, Bounded
    recursion, Process algebra, Typing system}
\titlecomment{{\lsuper*}Full version of Extended Abstract previously
presented at FORTE '14.}

\begin{abstract}
We present the first session typing system guaranteeing request-response liveness properties for possibly non-terminating communicating processes.
The types augment the branch and select types of the standard binary session types 
with a set of required responses, indicating  that whenever a particular label is selected, a set of other labels, its responses, must eventually also be selected. We prove that these extended types are strictly more expressive than standard session types. We provide a type system for a process calculus similar to a subset of collaborative BPMN processes 
with internal (data-based) and external (event-based) branching, message passing, bounded and unbounded looping. We 
prove that this type system is sound, i.e., it guarantees request-response liveness for dead-lock free processes. We exemplify the use of the calculus and type system on a concrete example of an infinite state system.
\end{abstract}

\maketitle

\section{Introduction}
Session types were originally introduced as typing systems for particular
$\pi$-calculi, modelling the interleaved execution of two-party protocols. A
well-typed process is guaranteed freedom from race-conditions as well as
communication compatibility, usually referred to as session
fidelity~\cite{DBLP:conf/esop/HondaVK98,DBLP:journals/entcs/YoshidaV07,DBLP:journals/iandc/Vasconcelos12}.
Session types have subsequently been studied intensely, with much work on
applications, typically to programming languages,
e.g.,~\cite{DBLP:journals/iandc/Dezani-CiancagliniDMY09,hu:2008,DBLP:conf/icdcit/HondaMBCY11,DBLP:conf/coordination/MostrousV11}.
A~number of generalisations of the theory have been proposed, notably to
multi-party session types~\cite{DBLP:conf/popl/HondaYC08}. Multi-party session
types have a close resemblance to choreographies as found in standards for
business process modelling languages such as BPMN~\cite{BPMN20} and WS-CDL, and
have been argued in theory to be able to provide typed BPMN
processes~\cite{DBLP:conf/esop/DenielouY12}.


Behavioral types usually furnish \emph{safety} guarantees, notably progress and lock-freedom \cite{DBLP:journals/corr/abs-1010-5566,DBLP:conf/concur/BettiniCDLDY08,DBLP:conf/coordination/CoppoDPY13,lock4,DBLP:conf/coordination/VieiraV13}.
In contrast, in this paper we extend binary session types to allow specification of
\emph{liveness}---the property of a process eventually ``doing something good''.
Liveness  properties are usually verified by model-checking techniques~\cite{DBLP:journals/fmsd/DammH01,DBLP:conf/dfg/BrillDKWW04,DBLP:conf/esec/CheungGK98}, requiring a state-space exploration.
In the present paper we show that a fundamental class of liveness properties,
so-called \emph{request-response} properties, can be dealt with by type rules,
that is, without resorting to state-space exploration.
As a consequence, we can deal statically with infinite state systems as exemplified below. Also, liveness properties specified in types can be understood and used as interface specifications and for compositional reasoning.
\begin{figure}[htp]
\centering
\shortcut
\includegraphics[width=\textwidth]{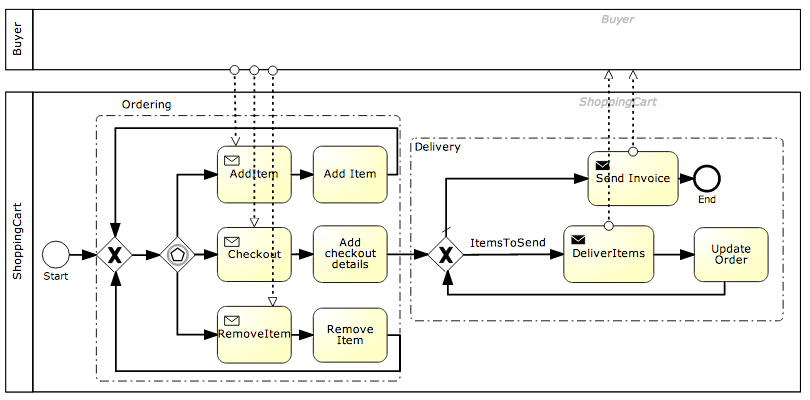} 
\caption{A Potentially Non-live Shopping Cart BPMN Process}   
\shortcut
\label{fig:nonlive}
\end{figure}

As an example, the above  diagram  contains two pools: The Buyer and the
ShoppingCart. Only the latter specifies a process, which has two parts:
Ordering and Delivery. Ordering is a loop starting with an event-based
gateway, branching on the message received by the customer. If it is AddItem
or RemoveItem,  the appropriate item is added or removed from the order,
whereafter the loop repeats. If it is Checkout, the loop is  exited, and the
Delivery phase commences. This phase is again a loop, delivering the ordered
items and then sending the invoice  to the buyer.

A buyer who wants to communicate \emph{safely} with the Shopping Cart, must
follow the protocol described above, and in particular must be able to receive
an unbounded number of items before receiving the invoice. Writing
\newcommand{\AI}{\ensuremath{\mathsf{AI}}}%
\newcommand{\RI}{\ensuremath{\mathsf{RI}}}%
\newcommand{\CO}{\ensuremath{\mathsf{CO}}}%
\newcommand{\DI}{\ensuremath{\mathsf{DI}}}%
\newcommand{\SI}{\ensuremath{\mathsf{SI}}}%
$\AI,\RI,\CO,\DI$, and $\SI$ for the actions ``Add Item'', ''Remove Item'',
``Checkout'', ``Deliver Items'' and ``Send Invoice''; we can describe this
protocol from the point of view of the Shopping Cart with a session type: \[ 
\mu t. \&\{\AI.?.t, \RI.?.t, \CO.?.\mu t'. ⊕\{\DI.!.t',\SI.!.\one\} \}\;.
\]
This session type  can be regarded as a \emph{behavioral} interface, specifying that the process first expects to receive either an $\AI$ (AddItem), $\RI$ (RemoveItem) or a $\CO$ (CheckOut) event. The two first events must be  followed by a message (indicated by ``?''), which in the implementation provides the item to be added or removed, after which the protocol returns to the initial state. The checkout event is followed by a message (again indicated by a ``?'') after which the protocol enters a new loop, either sending a $\DI$ (DeliverItem) event followed by a message (indicated by a ``!'') and repeating, or sending an $\SI$ (SendInvoice) event followed by a message (the invoice) and ending.

However, standard session types can not specify the very relevant
\emph{liveness} property, that a \CO\ (checkout) event is \emph{eventually}
followed by a \SI\ (send invoice) event. This is an example of a so-called \emph{request-response} property: an action (the request) must be followed by a particular response. 
In this paper we conservatively extend binary session types to  specify such request-response properties,  and we show that this extension is strictly more expressive than standard session types. We do so by annotating the checkout selection in the type with the required response:
\[
\mu t. \&\{\AI.?.t, \RI.?.t, \CO{\color{blue}[\{\SI\}]}.?.\mu t'. ⊕\{\DI.!.t',\SI.!.\one\} \}\;.
\]
 Intuitively: ``if \CO\ is selected, then subsequently also \SI\ must be selected.''
\begin{figure}
\centering
\includegraphics[width=.4\textwidth]{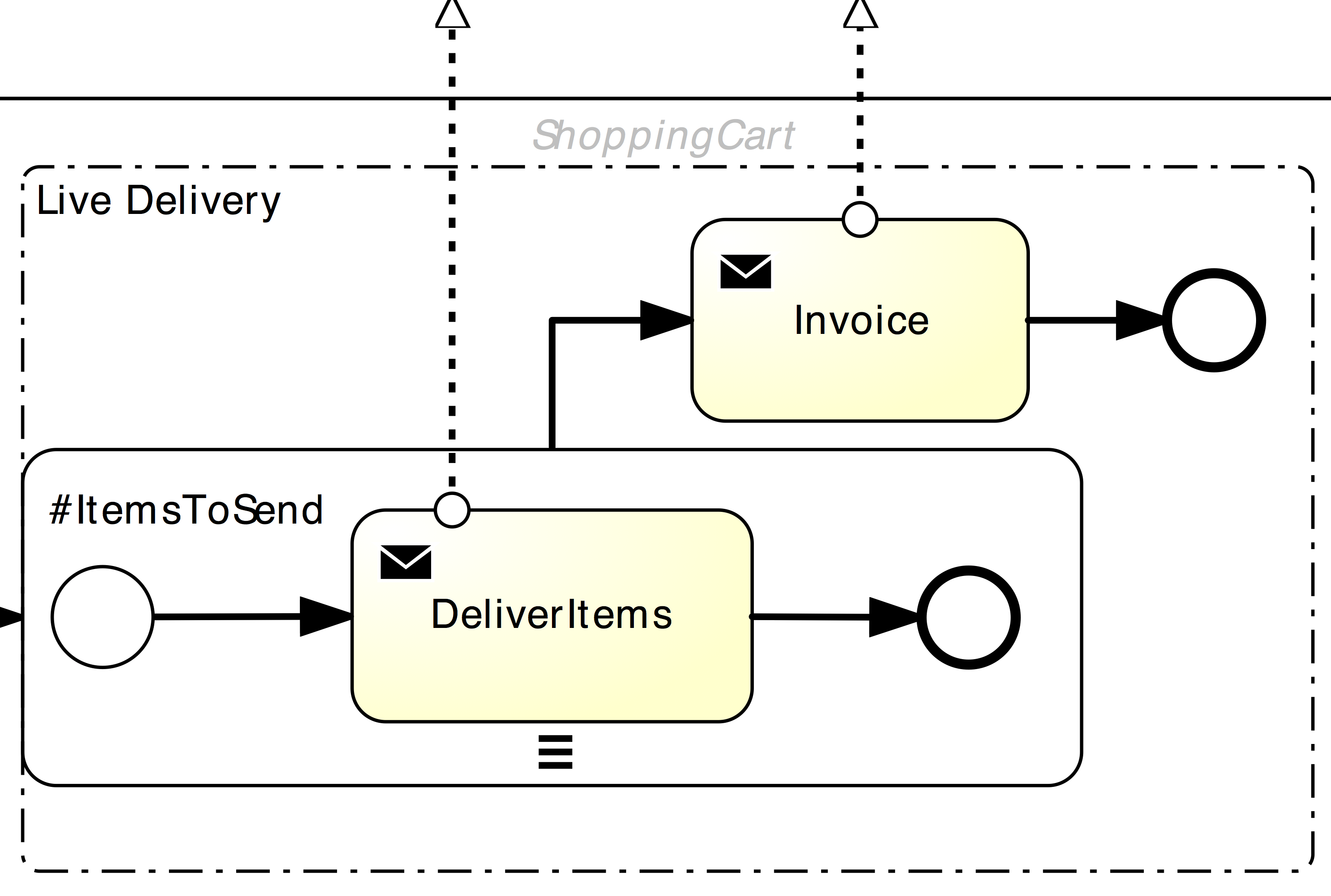} 
\caption{Live delivery with MI Sub-Process}   
\label{fig:live}
\end{figure}

Determining from the flow graph alone if this request-response property is guaranteed is in general not possible: Data values dictate whether the second loop terminates.
However, we can remove this data-dependency by replacing the loop with a %
bounded iteration. In BPMN this can be realised by a Sequential Multiple Instance Sub-process, which sequentially executes a (run-time determined) number of instances of a sub-process. With this, we may re-define  Delivery as in 
Fig.~\ref{fig:live}, yielding a
re-defined Shopping Cart process which has the request-response property. 

In general, we need also be able to check processes where responses are requested within (potentially) infinite loops. 
The type system we present gives such guarantees, essentially by collecting all requested responses in a forward analysis, exploiting that potentially infinite loops can guarantee a particular response only if every path through the loop can; and that order (request-response vs response-request) is in this case irrelevant.
We prove that, if the system is lock free, then the typing system indeed guarantees that all requested responses are eventually fulfilled. Lock-freedom is needed because, as is well known, collaborative processes with interleaved sessions may introduce dependency locks. Lock-freedom is well-studied for both $\pi$-calculus,~e.g., \cite{Kobayashi2002122}, and binary session types~\cite{DBLP:journals/corr/abs-1010-5566,DBLP:conf/concur/BettiniCDLDY08,DBLP:conf/coordination/CoppoDPY13,lock4,DBLP:conf/coordination/VieiraV13}, or may alternatively be achieved by resorting to global types~\cite{DBLP:conf/popl/HondaYC08}.

In summary, our contributions are as follows. 
\begin{itemize}
\item  We extend  binary session types with a notion of \emph{required response}.  
\item  We prove that this extension  induces a strictly more expressive language class than standard session types.
\item  We give a typing system conservatively extending standard binary session types which gives the further guarantee that a lock-free well-typed process will, in any execution, provide all its required responses.
\item We exemplify the use of these extended types to guarantee both safety and liveness properties for a non-trivial, infinite state collaborative process, which exhibits both possibly infinite looping and bounded iteration. 
\end{itemize}

\bigskip\noindent{\it Related work.}
There is a vast amount of work on verification of collaborative processes. Most
of the works take a model-checking approach, where the system under verification is represented as a kind of automaton or Petri Nets. An example that explicitly addresses collaborative business processes is~\cite{DBLP:conf/bpm/RoaCV11}, 
which however does not cover liveness properties. 
Live Sequence Charts (LSCs)~\cite{DBLP:journals/fmsd/DammH01} is a
conservative extension of Message Sequence Charts allowing to distinguish possible
(may) from required (must) behaviour, and thus the specification of liveness
properties. LSCs can be mapped to symbolic timed
automata~\cite{DBLP:conf/dfg/BrillDKWW04} but relies as all model-checking
approaches on abstraction techniques for reducing a large or even infinite
state space to a tractable size. Here the work
in~\cite{DBLP:conf/esec/CheungGK98} is interesting for the fact that the
model-checking can be split on components.  The work
in~\cite{DBLP:conf/lics/KobayashiO09} allows for model-checking of ML programs
by a translation to higher-order recursion schemes. Interestingly, the model-checking problem is reduced to a type-checking problem, but rely on a
technique for generation of a specific type system for the property of
interest. In contrast, our approach is based on a single type system directly
applicable for the process language at hand, where the (less general) liveness
and safety properties of interest are specified as the type to be checked and
can also be used as interface descriptions of processes. The fair subtyping of
\cite{DBLP:conf/icalp/Padovani13}, the only work on session types addressing
liveness we are aware of, details  a  liveness-preserving subtyping-relation
for a session types-like CCS calculus. Here liveness is taken to mean the
ability to always eventually output a special name, whereas in the present
work, we consider the specification of fine-grained request-response liveness
properties---``\emph{if} something happens, something else must happen''.

\paragraph {Overview of this paper.} 
This article presents a full version of an extended abstract that
appeared at FORTE '14 \cite{DBLP:conf/forte/DeboisHSY14}.
The present paper includes the detailed definitions and
explanations, many more examples, and complete proofs. In particular, the
formal development for both basic correctness of the typing system in Section
\ref{section-typing} as well as for the
liveness results in Section~\ref{section-liveness} was mostly absent from the
extended abstract; we believe these, in particular the latter,  to be of
independent interest. 

We proceed as follows. In 
Section~\ref{section-terms} we define  our  
calculus and its LTS-semantics. 
In Section~\ref{section-types} we extend binary session types with specification of request-response liveness properties, give transition semantics for types, 
and sketch a proof that the extended types induce a strictly larger class of languages than does standard types. In Section~\ref{section-typing}
we define exactly how types induce a notion of liveness on processes. In Section~\ref{section-livenesstyping} we give our extended typing rules for sessions with responses and state its subject reduction result. 
In Section~\ref{section-liveness} we 
prove that the extended typing rules guarantees liveness for lock-free processes. 
Finally, in Section~\ref{section-conclusion} we conclude.
\iflong\else
For want of space, this paper omits details and proofs; for these, refer to \cite{full}.
 \fi

 We assume basic familiarity with $\pi$-calculus and binary session types, in
 particular the formulation of the latter in terms of polarised channels. A good
 introductory reference is \cite{Gay-Hole:2005}; an extended discussion of the
 motivations for and ramifications of polarised channels is
 \cite{DBLP:journals/entcs/YoshidaV07}.



\section{Process Terms and Semantics}
\label{section-terms}
Processes communicate only via named communication (session) channels
by synchronizing send and receive actions or synchronizing select and branch events (as in standard session typed $\pi$-calculus). The session typing rules presented
in the next section guarantees that  there is always at most one active send
and receive action for a given channel.
To distinguish dual ends of communication channels, 
we  employ
\emph{polarised names}  
\cite{DBLP:journals/acta/GayH05,DBLP:journals/entcs/YoshidaV07}: If $c $ is
a channel name, $c ^ +$ and $c ^ -$ are the dual ends of the channel
$c$. We call these \emphasise {polarised channel names},  with
``+'' and ``-'' polarities. If $k$ is a polarised  channel name,  we
write $\overline {k}$ for 
\iflong the dual polarised channel name,
\else its dual,
\fi e.g., $\overline {c
^ +} = c ^ -$. 
\iflong
The syntax of
processes is given below.  
\fi
\def\prefix{M}
\iflong
\par\smallskip
\noindent Meta-variables:
\[
    \begin{array}{ll}
      c & \text{ channel names} \\
      p & \text{ polarities~} +, - \\
      k, h & \text{ polarised channel  names~} (c^p) \\
      x & \text{ data variables}\\
      v & \text{ data values, 
                including natural numbers and $\truek,\falsek$}\\
      e & \text{ data expressions, including data variables and values}\\ 
      l & \text{ selection labels} \\
    X, Y & \text{ process variables} \\
    \end{array}
\]
\else
In general $c$ ranges over channel names; $p$ over polarities $+,-$; $k, h$ over polarised
channel names; $x$ over data variables; $i$ over recursion variables (explained below); $v$ over data values including numbers, strings
and booleans; $ e$ over data expressions; and finally $ X, Y$ over process
variables.
\fi
\iflong
\noindent Process syntax:
\fi
\begin{align*}
   P \grmeq
        & \sendp 
      \grmor  \RECEIVE kx.P  
      \grmor 
			  \SEL klP  
      \grmor  {\BRANCHS k{l_i}{P_i}}_{i\in I} 
   \grmor \INACT \grmor P\! \PAR\! Q  
      \\
      \grmor &         \recp
                 \grmor  \recnp 
                 \grmor  \varp
                 \grmor \ifp
  \end{align*}

%
 
The first four process constructors are for taking part in a communication. These are standard for  session typed $\pi$-calculi, except that for simplicity of presentation, we only allow data to be sent (see Section~\ref{section-conclusion}). The process $\sendp$ sends data 
 $v$ over channel $k$  when $e\Downarrow v$,
 and proceeds as $P$.  Dually, $\receivep$ receives a data value over channel $k$ and substitutes it for the $x$ binding in $P$. 
 A \emph{branch} process 
 ${\BRANCHS k{l_i}{P_i}}_{i\in I} $ offers a choice between labels $l_i$, proceeding to $P_i$ if the $i$'th
 label is chosen.  
The process \INACT{} is the standard inactive process (termination), and $P | Q$ is the parallel composition of processes $P$ and $Q$.

Recursion comes in two forms: a general, potentially non-terminating
recursion $\recp$, where $X$ binds in $P$; and a primitive
recursion, guaranteed to
terminate, with syntax $\recnp$.  The latter process, when $e$ evaluates to $0$,
evolves to $Q$;  when $e$ evalutes to
$n+1$, it evolves to $P\{n/i\}\{\RECN Xi{e-1}PQ\}$, i.e., the process becomes
$P$ with $n$ substituted for $i$, and the same process except for a decreased  $e$ 
substituted for the process variable $X$. 
\iflong
We assume the following conventions:
\begin{equation}
  \label{conventions}
  \begin{array}{l}
    \text{In $\recnp$, $\INACT$ does not occur in $P$}. \\
    \text{In $\recnp$, no process variable but $X$ occurs free in $P
    $}. \\
    \text{In $\recnp$, there is no sub-term $\REC Y{R}$ or $
    \RECN YieRS$ in $P$}.     \\
    \text{In $\recnp$, there is no sub-term $R|S$ of $P$.}
  \end{array}
\end{equation}
\else
By convention in $\recnp$ neither of  $\INACT$, $\REC Y{P'}$, $\RECN Yi{e'}{P'}{P''}$ and $P'|P''$ occurs
as subterms of $P$. 
\fi
 These conventions 
 ensure that the process $\recnp$ will eventually terminate the loop and execute $Q$. 
Process variables $\VAR X{\tilde k}$ mention the channel names $\tilde k$ active at unfolding time for technical reasons. 
  
 We define the free polarised names $\fn(P)$  of \iflong a process \fi $
 P$ as usual, with $\fn(\varp)=\tilde k$;  substitution of process variables from $\varp\{P/X\} = P$; and finally value substitution $P\{v/x\}$ in the obvious way, e.g., $\sendp\{v/x\} = 
 \SEND k{e\{v/x\}}.(P\{v/x\})$. Variable substitution can never affect channels.

\begin{exa}
\label{example-pi}
We now show how to model the example BPMN process given in the introduction. To illustrate the possibility of type checking infinite state systems, we use a persistent data object 
represented by 
a process $\mathrm{DATA}(o)$ communicating on a session channel $o$. 
\shortcut
\[
\mathrm{DATA}(o)=
\REC X{\;\RECEIVE {o^+}x.\; \REC Y{
	\;o^+?
	\begin{cases} 
		\mathsf{read}.\;\SEND {o^+}x.\;\VAR Y{o^+}\\
		\mathsf{write}.\;\VAR X{o^+}\\
		\mathsf{quit}.\;\INACT
	\end{cases}
	}}
\]
After having received its initial value, this process repeatedly accepts
commands $\mathsf{read}$  and $\mathsf{write}$ on the session channel $o$  for
respectively reading and writing its value, or the command $\mathsf{quit}$ for discarding the data object.

To make examples more readable, we employ the following shorthands. 
We write
$\mathsf{init} (o,v).P$ for $\SEND {o^-}v.P$, which initializes the data object;
we write
$\FREE o.P$ for
$\SEL{o^-}{\mathsf{quit}}P$, the process which terminates the data object session;
we write
 $\READ ox.P$ for $\SEL{o^-}{\mathsf{read}.\; \RECEIVE{o^-}x}P$,
 the process which loads the value of the data object $o$ into the process-local variable $x$;
 and finally, we write $\ASSIGN oe.P$ for $\SEL{o^-}{\mathsf{write}} \SEND{o^-}{e}.P$, the process which
sets the value of the data-object $o$.

The shopping cart process can then be modelled as 
\[
P(Q) = 
\mathrm{DATA}(o) \; \PAR\;
\mathsf{init}(o,\epsilon).\;
\REC X{k}	
	\begin{cases} 
		\AI.\; \RECEIVE kx.\; \READ oy.\; \ASSIGN o{add(y,x)}.\; \VAR X{ko^-}\\
		\RI.\; \RECEIVE kx.\; \READ oy.\; \ASSIGN o{rem(y,x)}.\; \VAR X{ko^-}\\
		\CO.\; \RECEIVE kx.\; \READ oy.\; \ASSIGN o{add(y,x)}.\; Q
	\end{cases}
\]
Here $k$ is the session channel shared with the customer  and $o$ is the session channel for communicating with  the data object modelling order data. We assume our expression language has suitable operators
``add'' and ``rem'' for adding and removing items from the order. Finally, the process $Q$ is a stand-in for either the (non live) delivery part of the BPMN process in Fig.~\ref{fig:nonlive} or the live delivery part shown 
in Fig.~\ref{fig:live}. 
The non-live delivery loop can be 
represented by the process 
\[
D_0 = 
\REC Y \; \READ oy. \; 
 \ifk \, n(y)>0 \;
\begin{array}{l}
\thenk\; 
\SEL {k}{\DI} \; 
\SEND {k}{next(y)}.\; \ASSIGN{o}{update(y)}. \;
\VAR Y{ko^-} \\
\elsek\;
		{\SEL {k}{\SI}\;
		 \SEND {k}{inv(y)}.\;\FREE o. \INACT}
\end{array}
\]
where $n(y)$ is the integer expression computing from the order $y$ the number of items to send, $next(y)$, ${update(y)}$ and ${inv(y)}$  are, respectively, the next item(s) to be sent; an update of the order to mark that these items have indeed been sent; and the invoice for the order.
Whether this process terminates depends on the data operations.  
Using instead bounded iteration, live delivery becomes:
\begin{multline*}
D = \READ oy. \;
\RECN Yi{n(y)}
  {\\\SEL k{\DI} \READ oy. \;
   \SEND k{pickitem(y, i)}. \VAR Y{ko^-}} 
  {\\ \SEL k\SI \;\READ oy.\; \SEND k{inv(y)}.\;
  \FREE o. \INACT}
\end{multline*}
(The second line is the body of the loop; the third line is the continuation.)
Here $pickitem(y,i)$ is the expression extracting the $i$th item from the order $y$.
\qed
\end{exa} 

\bigskip\noindent{\it Transition Semantics.}
\label{section-semantics}
 We give a labelled transition semantics in Fig~\ref{fig:transitions}. We assume a total evaluation relation $e\Downarrow v$;
note 
 the absence  of a  structural congruence. 
  \iflong
 Transition labels 
 for processes are  on one of the following forms. \[%
     λ\grmeq k!v\grmor  k?v\grmor  k⊕
         l\grmor k \& l \grmor \tau \grmor \tau:l
\]
\fi  
We assume $τ$ is  neither a channel nor a polarised channel. Define
     $\subject(k!v) =\subject (k?v) =\subject (k\& l) = \subject(k ⊕ l) = k$
     and $\subject ( τ)= \subject(\tau:l) = τ$, and 
     \iflong 
     define as a technical convenience
     \fi
     $\dual {τ} = τ$.
\def\B{\ref{fig:transitions}}
\begin {figure}
\shortcut[10mm]
\begin{gather*} 
\ltag{\B-Out} %
\frac{e\Downarrow v}{ \sendp\transition {k!v} P }
\;\;\overline k\not\in\fn(P)
\qquad
\ltag{\B-In} %
\frac{}{ \receivep\transition {k?v} P\{v/x\} } %
\;\;\overline k\not\in\fn(P)
\\[4mm] %
\ltag{\B-Sel}
\frac{}{ \selp\transition { k⊕ l} P } 
\;\;\overline k\not\in\fn(P)
\qquad %
\ltag{\B-Bra} %
\frac{}{ \branchps_{i
    \in I}\transition {k\& l_i} P_i } %
    \;\;\overline k\not\in\fn(P_i)
    \quad
\\[4mm] %
\ltag{\B-ParL}    %
\frac{ P
        \transition {λ} Q }{ P |  P'\transition {λ}  Q | P' }
    \qquad\overline{\subject( λ)} \not \in\fn(P')
    \\[4mm] %
\ltag{\B-Com1} %
\frac{ P  \transition {\overline{k}!v}
        P'\qquad Q \transition {k?v} Q' }
    { P | Q\transition {\tau}  P' | Q' } \qquad %
    \ltag{\B-Com2} %
\frac{ P  \transition {\overline{k} ⊕ l} P'\qquad Q
    \transition {k\& l} Q' }{ P | Q\transition {\tau:l}  P' | Q'    }
\\[4mm] %
\ltag{\B-Rec} %
\frac{
    P\{\recp/X\}
    \transition{λ} Q 
}{
    \recp\transition {λ} Q }
\qquad\qquad
\ltag{\B-Prec0} %
\frac{ e\Downarrow 0\qquad Q\transition{ λ}  R }{
\recnp  \transition {λ} R } 
\\[4mm]                    
\ltag{\B-PrecN}  
\frac{ 
   e\Downarrow n+1\qquad P\{n/i\}\{\RECN XinPQ/X\}\transition {λ} R 
}{
    \recnp\transition {λ} R 
} %
\\[4mm]
\ltag{\B-CondT}
\frac{
e \Downarrow \truek\qquad P\transition{λ} P'
}{
\ifp\transition {λ} P'
}
\qquad
\ltag{\B-CondF}
\frac{
e \Downarrow \falsek\qquad Q\transition{λ} Q'
}{
\ifp\transition {λ} Q'
}
\end{gather*}
\caption {Transition semantics for terms}
\label{fig:transitions}
\end{figure}
 We use these rules along with
symmetric rules for \rulename{\B-ParL} and \rulename{\B-Com1/2}.

Compared to standard CCS or $\pi$ semantics, there are two significant changes:
(1) In the \rulename{\B-ParL}, a transition  $λ$ of $P$ is \emph{not} preserved
by parallel 
\iflong composition \fi
if
the co-channel of the subject of $\lambda$ is in $P'$; and (2) in prefix rules,  the co-name of the subject cannot appear in the continuation.
We impose (1) because if
the co-channel of the subject of $\lambda$ is in $P'$, then intuitively
the process $P|P'$ already contains both sides of that session, and so 
does not offer synchronisation on that session to the environment, only 
to $P'$. 
For instance, the process
\[\SEND{c^+}v.Q|\RECEIVE{c^-}x.R\] does \emph{not} have a transition
$\SEND{c^+}v.Q|\RECEIVE{c^-}x.R\transition{c^+!v} Q
|\RECEIVE{c^-}x.R$. Such a transition would be useless: no
environment $U$ able to receive on $c^-$ could be put in parallel with
$P$ and form a well-typed process, because both $U$ and $\RECEIVE{c^-}d.R$ would
contain the name $c^-$ free.
The reason for (2) is similar: If a process $\sendp\transition{k!v
} P$, and $P$ contains $\overline k$,  again no well-typed environment for that process can contain $\overline k$.
For (2), the side-condition could have been expressed as a well-formedness on
syntax in the present setting; however, anticipating a future extension of the
formalism to encompass also delegation, we have chosen the present condition on
semantics instead.

\iflong
 In recent papers
 \cite{DBLP:journals/iandc/Vasconcelos12,DBLP:conf/ppdp/DardhaGS12,DBLP:conf/coordination/VieiraV13},
 session types have been presented  not with polarised names, but rather with seemingly disparate names, connected by a 
 new-name operator, e.g., one writes $( ν x
 y)(\SEL xl | \BRANCHS yl0)$ to form a session with endpoints $x,y$. This latter formulation---while 
 elegant for reduction semantics---is not viable for the present transition
 semantics.  Without the ability to recognise the two ends $c ^ +,  c ^ -$ of a polarised
 channels as either end of a session, we cannot express
the rules \rulename{\B-Par} nor \rulename{\B-Com}.
\fi

\iflong
\begin{lem}
If $P\transition {\lambda} Q$ then $\dual{\subject(\lambda)}\not\in\fn(Q)$. 
\label{lemma-transition-no-coname}
\end{lem}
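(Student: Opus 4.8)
The plan is to argue by induction on the derivation of $P\transition{\lambda}Q$, with a case split on the last rule of Fig.~\ref{fig:transitions} applied. For the four axioms \rulename{\B-Out}, \rulename{\B-In}, \rulename{\B-Sel} and \rulename{\B-Bra}, the subject of $\lambda$ is the polarised channel $k$ of the prefix, so $\dual{\subject(\lambda)}=\overline k$, and the residual $Q$ is the continuation of the prefix (under a value substitution in the input case). In each case the goal $\overline k\notin\fn(Q)$ is precisely the side condition $\overline k\notin\fn(\cdot)$ attached to the rule; for \rulename{\B-In} one additionally uses that value substitution does not affect channels, so that $\fn(P\{v/x\})=\fn(P)$, as remarked in Section~\ref{section-terms}.

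For the communication rules \rulename{\B-Com1} and \rulename{\B-Com2} the emitted label is $\tau$ or $\tau{:}l$, so $\subject(\lambda)=\tau$ and $\dual\tau=\tau$, which by assumption is neither a channel nor a polarised channel and hence lies in no set $\fn(\cdot)$; the statement then holds vacuously. For \rulename{\B-ParL} (and its symmetric counterpart) the transition is $P\mid P'\transition{\lambda}Q\mid P'$, derived from $P\transition{\lambda}Q$ with side condition $\dual{\subject(\lambda)}\notin\fn(P')$; the induction hypothesis yields $\dual{\subject(\lambda)}\notin\fn(Q)$, and since $\fn(Q\mid P')=\fn(Q)\cup\fn(P')$ the conclusion follows.

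Finally, for \rulename{\B-Rec}, \rulename{\B-Prec0}, \rulename{\B-PrecN}, \rulename{\B-CondT} and \rulename{\B-CondF}, the label and the residual of the conclusion coincide with those of the single transition premise, so the claim is immediate from the induction hypothesis. There is no genuine obstacle here; the only point needing a moment's attention is the input axiom, where one must observe that substituting a data value for a data variable neither introduces nor removes free channel names.
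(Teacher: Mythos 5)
Your proof is correct and is exactly the argument the paper has in mind: its own proof consists of the single phrase ``straightforward induction on the derivation of the transition,'' and your case analysis (side conditions on the prefix axioms, $\subject(\lambda)=\tau$ for the communication rules, the side condition plus the induction hypothesis for \rulename{\B-ParL}, and direct appeal to the induction hypothesis for the recursion and conditional rules) fills in precisely those details. The one point you rightly flag --- that $\fn(P\{v/x\})=\fn(P)$ because value substitution cannot affect channels --- is also the paper's stated convention, so nothing is missing.
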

\begin{proof}
Straightforward induction on the derivation of the transition.
\end{proof}
\fi


\section{Session Types with Responses}
\label{section-types} 

In this section, we 
generalise binary session types to \emph{session types with responses}. In addition to providing the standard communication safety properties, these also allow us to specify request-response liveness properties.

Compared to standard session types, we do not consider delegation (name passing). Firstly, as illustrated by our example calculus, the types are already expressive enough to cover a non-trivial subset of collaborative processes. Secondly, as we show in the end of the section, session types with responses are already strictly more expressive than standard session types with respect to the languages they can express.
Thus, as we also address in Section~\ref{section-conclusion}, admitting delegation and answering the open question about how response obligations can be safely exchanged with the environment, is an interesting direction for future work which is beyond the scope of the present paper. 

We first define request-response liveness
in the abstract. 
  In general, we shall take it to be the property that
``a request is eventually followed by a response''. 
\iflong
For now, we will not
concern ourselves exactly what ``requests'' and ``responses'' are or what

it means for a response to fulfil a request.
\fi
  
\begin{defi} A \emphasise {request-response structure} is a tuple $(A,
    R, \requested,\responded)$ where $ A$ is a set of actions,  $ R$ is a set
    of responses, and $\requested: A\to R$ and $\responded: A\to R$ are maps
    defining the set of responses that an action requests and performs,
    respectively. 
\end{defi}

\paragraph {Notation.} 
\iflong In this setting, response liveness is naturally a property of
sequences. \fi
We write $ε$ for the empty sequence, we let $ φ, ψ$ range over
finite sequences, and we let $α, β,  ɣ$ range over finite or infinite sequences.
We write sequence concatenation by juxtaposition, i.e., $ φα$. 

\begin{defi} Suppose $( A,  R, \requested,\responded)$ is a
    request-response structure and $ α$ a sequence over $ A$.  Then the
     \emph{responses} $\support{ α}$ of $ α$ is defined by $\support{ α}
    =\cup\{\responded(a) | \exists \varphi,\beta.\; α = \varphi a\beta\}$. Moreover, $α$ is \emphasise{live} iff $α = φ  a β  \implies
    \requested(  a ) ⊆  \support { β}$.  \end{defi}

\iflong
\paragraph{Notation.} We shall be specially interested in liveness  of
sequences of transitions. A \emph{finite transition sequence of length $n$} is
a pair of sequences $(s_i)_{i<n}$ and $(t_i)_{i<n-1}$
s.t.~$s_i\transition{t_i}s_{i+1}$ for $i<n$.  An \emph{infinite transition
sequence} is a pair of sequences $(s_i)_{i\in\N}$ and $(t_i)_{i\in\N}$
s.t.~$s_i\transition{t_i}s_{i+1}$. A finite or infinite transition sequence of
a state $s$ is a finite or infinite transition sequence with $s_1=s$.  We
write  $(s_i, t_i)_{i\in\N}$ for infinite sequences and $((s_i, t_i)_{i<n},
s_n)$ for  finite sequences, giving the final state explicitly.
Slightly abusing notation, we sometimes write 
$(s_i,t_i)_{i\in I}$ or even just $(s_i,t_i)$ for a finite or infinite transition sequence, saying that 
it is a transition sequence \emph{of} $s_1$ over $I$.  
\fi

 \begin{defi}[LTS  with responses] Let $(S, L,
    \transition{})$ be an LTS. When the set of labels $L$ is the set of actions of a
    request-response structure $\rho$, we say that $(S, L, \transition{}, \rho) $  is an
    \emphasise {LTS with responses}, and that a 
    transition sequence of this LTS is \emphasise {live} when its underlying
    sequence of labels is.  \end{defi}

\noindent Next,  we present the syntax of types.
%
\iflong\relax\else
Let $l$ range over labels and $L$ sets of labels. 
\fi
\begin{gather*}
      \begin{array}{ll}
    \iflong
      \mathcal L & \;\text{a countably infinite set of labels} \\
      l & \;\text{ranges over~$\mathcal L$} \\
      L &  \;\text{ranges over~$\mathcal P(\mathcal L)$}
  \\\strut
  \\
  \fi
  S,T &\grmeq 
     \cbranchst \grmor 
	\cselst 
\grmor \sendvt \grmor \recvvt	
	\grmor 
 \rect \grmor t 
	\grmor  \one 
  \end{array}
\end{gather*}
%

By convention, the $l_i$ in each $\branchst$ resp.~$\selst$ are distinct. 

 A session type is  a (possibly infinite)  tree of
actions permitted for  one partner of a two-party communication.  The
type $\cbranchst$, called \emph{branch}, is the type of \emph{offering} a choice
between  different continuations.  If the partner chooses the
label $l_i$,  the session proceeds as $T_i$. Compared to standard session types, making the choice $l_i$
 also requests  a subsequent response on every label mentioned in the set of labels $L_i$; we formalise this in the notion of  \emph{responsive trace} below.
Dual to branch is \emph{select} $\cselst$: the type of \emph{making} a choice
between different continuations.  Like branch, making a choice $l_i$ requests every label in
$L_i$ as  future responses.  
The type 
 $\sendvt$ and 
  $\recvvt$ are the types of sending and receiving
data values. 
As mentioned above, channels cannot be communicated. Also, we have deliberately omitted types of values (e.g. integers, strings, booleans) being sent, since this can be trivially added and we want to focus on the behavioural aspects of the types.  
Finally, session types with responses include recursive types. We take
the
equi-recursive  view, identifying a type $T$ and its unfolding into
a potentially  infinite tree. 
%
We define the central notion of \emph{duality} between types
%
as the symmetric relation induced coinductively by
the following rules. \begin {gather} \frac{}{\one\bowtie\one} \quad \frac{
    T\bowtie T'  }{ \sendvt \bowtie \;\recvvt' } \quad \frac{ T_i\bowtie
    T'_i\quad J ⊆ I } { \branchst \bowtie ⊕\{l_j [L'_j]. T'_j \}_{ j\in J} } 
    \label{eq-duality}
\end {gather} 
The first rule says that dual processes agree on when communication ends; the second that if a process sends a message, its dual must receive; and the third says that if one process offers a branch, its dual must choose among the offered choices. However, required responses do not need to match: the two participants in a session need not agree on the notion of
liveness for the collaborative session.

\begin{exa}
\label{ex:tp}
Recall from Ex.~\ref{example-pi} the processes $\mathrm{DATA}(o)$ encoding data-object and $P(D)$ encoding the (live) shopping-cart process. The former treats the channel $o$ as
\[
T_D = \mu t. ?.  \mu s.  \&
\{
		\mathsf{read}. !. s, \;
		\mathsf{write}. t, \;
		\mathsf{quit}.\one \;
		\}\;.
      \] 
      The latter treats its channel $k$ to the buyer as 
\[T_P = \mu t. \&\{\AI.?.t, \; \RI.?.t, \; \CO{\color{blue}[\{\mathsf{SI}\}]}.
?.\mu t'. ⊕\{\DI.!.t', \; \SI.!.\one\} \}\;.
\]
To illustrate both responses in unbounded recursion and 
duality of disparate responses, note that the $P(D)$ actually treats its data
object channel~$o^-$ according to the type $T_{E} = \mu t. !.  \mu s.  \oplus
\{
		\mathsf{read}. ?. s, \;
        \mathsf{write\color{blue}{[\{read\}]}}. t, \;
		\mathsf{quit}.\one \;
		\}
$, i.e., every write is eventually followed by a read. However, $T_D\bowtie T_E$: the types $T_E$ and $T_D$ are nonetheless dual.
\qed
\end{exa}



Having defined the syntax of session types with responses, we proceed to give
their semantics.  The meaning of a session type is the  possible
sequences of  communication actions it allows, requiring that pending
responses eventually be done. Formally, we equip session types with a
labeled transition semantics in Fig.~\ref{figure-basic-type-transitions}. 
\def\A{\ref{figure-basic-type-transitions}}%
\begin{figure}
    \shortcut
\begin {gather*} 
\begin{array}{rl}
  \text{Type transition labels: }\quad&
  ρ \grmeq !\grmor ?\grmor \& l [L]\grmor ⊕ l [L] \\
    \text{Type transition label duality: }\quad&
  !\bowtie ?\quad\text{and}\quad \& l[L]\bowtie  ⊕l[L']\\
  \end{array}
  \\
  \strut
\iflong  \\\fi
    \ltag{\A-Out} \frac{}{ \sendvt
          \transition{!} T } \qquad \frac{}{ \recvvt \transition {?} T}
        \rtag{\A-In} \\[2mm]   \ltag{\A-Bra} \frac{ i\in I }{ \branchst
            \transition{\&l_i[L_i]} T_i } \qquad \frac{ i\in I }{ \selst
                \transition {⊕l_i[L_i]} T_i} \rtag{\A-Sel}
\end{gather*}
\shortcut[4mm]
\caption{Transitions of types (1)} 
\shortcut
\label{figure-basic-type-transitions}
\end{figure}

We emphasise that under  the equi-recursive view of session types,
the  transition system of a recursive type $T$  may in general be infinite.

Taking actions $A$ to be the set of labels ranged over by $ρ$, and recalling
that $\cal L$ is our universe of labels for branch/select, we obtain a
request-response structure $(A,\cal P(\cal L), \requested, \responded)$ with
the latter two operators defined as follows.  
\shortcut[2mm]
\[ \begin{array}{ll} 
 \responded(!) = \responded(?)=  ∅ \quad&
  \responded(\& l [L])  =\responded(⊕ l [L]) = \{l\}\\
\requested(!
        ) = \requested(?)   =  \emptyset & \requested(\& l [L]) =\requested(⊕ l [L]) = L
\end{array} \] 
\iflong In the right-hand column, selecting \else
Selecting \fi a
label  $l$ performs the response $l$; pending responses
$L$ associated with that label are conversely requested. 
The LTS of Fig.~\ref{figure-basic-type-transitions} is thus one with responses, and we may speak of its transition sequences being live or not.

\begin {defi} Let $T$ be a type. We define: \begin{enumerate}
\item  The \emph{traces} $\infinitetraces {T}=\{ ( ρ _i)_{i\in I} | (T_i,ρ_i)_{i\in I}$ 
transition sequence of $T$ \}
\item The \emph{responsive traces}
    $\responsivetraces {T} = \{ α ∈\infinitetraces {T} | 
     α$ live $ \}$.  \end{enumerate} 
\end {defi} 
That is, in
    the responsive traces any  request is followed
    by a response.  
\begin {defi} 
    A type $T$ is  a \emph{standard} session type if it requests
        no responses, that is, every occurrence of $L$ in it is has
        $L=\emptyset$. 
\iflong
        Define an operator $\selected(-)$ as follows, lifting it
        pointwise to sequences.%
\[
    \selected(!)=\selected(?) =  ε
    \qquad
    \selected(\& l [L]) =\selected(⊕  l [ L]) = l 
\]
\else
Define $\selected(\rho)=l$ when $\rho=\& l [L]$ or $\rho =⊕  l [ L]$, otherwise $\epsilon$; lift $\selected(-)$ to sequences by union. 
\fi
We then
    define: \begin{enumerate} 
\item The \emph{selection traces}
            $\selectiontraces{T}=\{\selected( α) | α ∈\infinitetraces {T}\}$
\item The \emphasise{responsive selection traces}
            $\responsiveselectiontraces {T} = \{\selected(α) |  α
                ∈\responsivetraces {T}\}$.  
\item The \emphasize {languages of
                standard session types} \\  ${\cal T}  = \{ 
                \selectiontraces {T} |  \text{$T$ is a standard session type}\}$.  
\item
        The \emphasize {languages of responsive session types} 
        \\ ${\cal R} =\{\responsiveselectiontraces
            {T}| \text{$T$ is a session type with responses}\}$.
\end{enumerate} 
\label{definition-standard-session-type}
\end {defi} 

\noindent That is, we compare standard
        session types and session types of responses by considering the
        sequences of branch/select labels they admit. This follows 
        recent work on multi-party session types and
        automata \cite{DBLP:conf/esop/DenielouY12,DBLP:conf/icalp/DenielouY13}.

\iflong
A fine point: because the $\selected(-)$ map is lifted pointwise and maps ``no
selection'' to the empty sequence $ε$, this definition of languages is oblivious
to send and receive.  E.g, if $φ_S, ψ_T$  are the unique traces of the two
types $S=\;!. ⊕l.⊕ l'.\one$ and $T=  ⊕ l.?.⊕ l'.\one$, then $\selected( φ_S)
=\selected( ψ_T) = ll'$. We formalise this 
insight in the following lemma.
\begin{lem} Let $T$ be a standard session type.   There
    exists a session  type $T'$   with no occurrences of send $\sendvt$ or
    receive $\recvvt$ s.t.~$\selectiontraces{T}=\selectiontraces{T'}$.
    \label{lemma-send-receive-doesnt-matter} \end{lem}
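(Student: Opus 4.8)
The plan is to transform $T$ into $T'$ by structural induction on the (equi-recursive) tree of $T$, replacing every send/receive prefix by its continuation while leaving branch/select unchanged, and then argue that the $\selected$-image of the trace sets coincide. Concretely, I would define a map $\|\cdot\|$ on types (working coinductively, since the tree may be infinite) that commutes with $\branchst$ and $\selst$ — i.e. $\|\branchst[T]\| = \branch\{l_i[\emptyset].\|T_i\|\}_{i\in I}$ and likewise for select, which incurs no obligations since $T$ is standard — and satisfies $\|\sendvt\| = \|T\|$, $\|\recvvt\| = \|T\|$, $\|\one\| = \one$, with recursion handled by the equi-recursive identification. Setting $T' = \|T\|$, I need $\selectiontraces{T} = \selectiontraces{T'}$.

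The key observation is the remark already made in the excerpt: because $\selected(!) = \selected(?) = \varepsilon$, the labelling function is blind to \rulename{\A-Out} and \rulename{\A-In} transitions, so dropping those prefixes from the type changes the raw trace set $\infinitetraces{\cdot}$ but not its $\selected$-image. I would make this precise with a small simulation-style lemma: every transition sequence $(T_i, \rho_i)_{i\in I}$ of $T$ projects, by deleting all steps whose label is $!$ or $?$ and keeping the branch/select steps, to a transition sequence of $T'$ with the same $\selected$-value; conversely every transition sequence of $T'$ lifts to one of $T$ by reinserting the (finitely many, between consecutive branch/select points, or cofinally many at the tail) send/receive steps that the definition of $\|\cdot\|$ elided. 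Both directions are a routine induction/coinduction following the shape of the tree, using that between any two consecutive branch/select nodes of $T$ there sit only finitely many send/receive nodes (each such node has a unique successor), so the deletion and reinsertion are well-defined on infinite sequences as well.

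From this correspondence, $\selected(\infinitetraces{T}) = \selected(\infinitetraces{T'})$, which is exactly $\selectiontraces{T} = \selectiontraces{T'}$ by definition. Since $\|\cdot\|$ by construction produces a type in which no $\sendvt$ or $\recvvt$ occurs, this $T'$ witnesses the claim.

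The main obstacle I anticipate is bureaucratic rather than conceptual: making the deletion/reinsertion of $!,?$-steps rigorous for \emph{infinite} transition sequences under the equi-recursive view. One has to rule out the degenerate possibility that an infinite trace of $T$ consists eventually of only send/receive steps (it cannot, because after any $\sendvt$ or $\recvvt$ the type is strictly smaller in the syntactic tree, so an infinite run must pass through infinitely many branch/select nodes, or else terminate at $\one$ — either way the projection is well-defined), and symmetrically that reinsertion of the finitely-many elided prefixes between branch/select nodes yields a genuine (possibly infinite) transition sequence of $T$. Handling the recursion variables cleanly — i.e. checking that $\|\cdot\|$ respects unfolding, $\|\mu t.T\| = \|T\{\mu t.T/t\}\|$ up to the equi-recursive identification — is the one place where I would be careful, but it follows the standard pattern for coinductively-defined operations on equi-recursive session types.
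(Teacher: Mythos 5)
There is a genuine, though repairable, gap: your dismissal of the degenerate case is false under the equi-recursive view, and with it the well-definedness of your erasure map. The type $\mu t.\,{?}.t$ is a perfectly legal standard session type, and its equi-recursive unfolding is an infinite chain consisting solely of receive nodes; likewise $\&\{l[\emptyset].\,\mu s.\,{!}.s\}$ has an infinite run that after one branch step consists only of send steps. So your claim that ``after any $\sendvt$ or $\recvvt$ the type is strictly smaller in the syntactic tree, so an infinite run must pass through infinitely many branch/select nodes or terminate at $\one$'' does not hold: there is no well-founded measure on unfolded trees, and infinite traces whose tail is all $!/?$ do exist. Concretely this breaks your construction in two places: the coinductive definition $\lVert{?}.T\rVert=\lVert T\rVert$, $\lVert{!}.T\rVert=\lVert T\rVert$ is not productive along such a path (syntactically, erasing $\mu t.\,{?}.t$ yields the non-contractive $\mu t.t$), so $T'$ is simply undefined for these inputs; and the justification you give for the projection of infinite traces rests on the false claim.

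The repair is easy and worth stating: send every subtree from whose root no branch/select node is reachable (equivalently, every subterm whose erasure would be non-contractive) to $\one$. The projection of a trace whose tail is all send/receive steps is then a \emph{finite} transition sequence of $T'$, which is admissible since $\infinitetraces{\cdot}$ contains all transition sequences, not only maximal ones; and since such tails contribute only $\varepsilon$ to $\selected$, the equality $\selectiontraces{T}=\selectiontraces{T'}$ is unaffected in both directions (in the lifting direction you need not re-append the elided infinite tail at all). Regularity is preserved by erasure, so $T'$ is still expressible in the grammar. For comparison: the paper states this lemma without proof, treating it as the formalisation of the remark that $\selected$ maps $!$ and $?$ to $\varepsilon$; your erase-and-simulate argument is exactly the intended formalisation of that remark, so apart from the defect above your route is the natural one.
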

\fi

\begin{exa}
The type $T_P$ of Example~\ref{ex:tp} has (amongst others) the two selection traces:
$t=\AI\,\CO\, \DI\, \DI\, \SI$ 
and
$u=\AI\, \CO\, \DI\, \DI\, \DI\, \cdots$.
Of these, only $t$ is responsive; $u$ is not, since it never selects $\SI$ as required by its $\CO$ action. That is, $t,u\in\selectiontraces{T_P}$ and $t\in\responsiveselectiontraces{T_P}$, but $u\not\in\responsiveselectiontraces{T_P}$.
\qed
\end{exa}

\iflong
\begin{lem}[Session types with responses are deterministic] (1). If
    $T\transition { ρ }$ and $T\transition {  ρ '}$ and $\selected( ρ )
    =\selected( ρ ')\not=\epsilon$, then $ ρ  =  ρ '$.  (2). Consider equally
    long finite transition sequences $((T_i, ρ _i)_{i<n}, T_n)$ and $((S_i, ρ
    '_i)_{i<n}, S_n)$. If $T_1=S_1$ and for each $i<n$ $ ρ _i= ρ '_i$,  then
    also $T_i=S_i$ for each $i\leq n$.  \label{345} \end{lem}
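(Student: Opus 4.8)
The plan is to prove both parts by a routine case analysis on the transition rules of Fig.~\ref{figure-basic-type-transitions}. Under the equi-recursive view a type is identified with its (possibly infinite) unfolding tree, and the recursion constructor $\mu t.\,$ carries no transition rule of its own; hence the head of any type $T$ is one of $\sendvt$, $\recvvt$, $\branchst$, $\selst$, $\one$, or a type variable $t$, and precisely this head determines which rule (if any) applies.

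For (1) I would case on the head of $T$. If the head is $\one$ or a variable, $T$ has no transitions and the claim is vacuous. If the head is $\sendvt$ or $\recvvt$, then every outgoing transition carries the label $!$ or $?$, so $\selected(\rho)=\epsilon$, contradicting $\selected(\rho)\ne\epsilon$; again vacuous. If the head is $\branchst$, then by \rulename{\A-Bra} both $\rho=\&l_i[L_i]$ and $\rho'=\&l_j[L_j]$ for some $i,j\in I$, and $l_i=\selected(\rho)=\selected(\rho')=l_j$; since the labels of a branch are pairwise distinct by convention, $i=j$, whence $\rho=\rho'$. The select case is symmetric, via \rulename{\A-Sel}.

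For (2) I would first isolate the auxiliary \emph{forward determinism} fact: if $U\transition{\rho}V$ and $U\transition{\rho}W$ then $V=W$. This follows again by casing on the head of $U$: for $\sendvt$ and $\recvvt$ the target is the unique continuation; for $\branchst$ (resp.\ $\selst$) the label $l_i$ occurring in $\rho=\&l_i[L_i]$ determines $i$ by distinctness of labels, hence fixes the target $T_i$. Part (2) then follows by induction on $i$: the base case $T_1=S_1$ is the hypothesis, and if $T_i=S_i$, then from $T_i\transition{\rho_i}T_{i+1}$, $S_i\transition{\rho'_i}S_{i+1}$ and $\rho_i=\rho'_i$ we get $T_{i+1}=S_{i+1}$ by forward determinism. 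There is no genuine obstacle here; the only point demanding care is the equi-recursive bookkeeping noted above, which makes ``the head of $T$'' well defined and ensures that recursion introduces no extra nondeterminism.
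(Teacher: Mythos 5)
Your proposal is correct and follows essentially the same route as the paper's own proof: part (1) reduces to the convention that labels in a branch/select are pairwise distinct, and part (2) is an induction on the length of the sequence using exactly that determinism of targets (your separately stated ``forward determinism'' fact is just the paper's inline observation that there is at most one $S$ with $T_k\transition{\rho_k}S$). The only difference is that you spell out the head-of-type case analysis and the equi-recursive bookkeeping that the paper leaves implicit, which is fine but not a new argument.
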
 
\begin{proof*}
    (1). Immediate from the convention that each label in a branch or selected
    is distinct.  (2).  By induction on $n$. The base case is trivial.
    For $n=k+1$ we have by the induction hypothesis $T_k=S_k$. By convention, each label in a
    branch or select is distinct, so there is at most one $S$ with
    $T_k\transition{ ρ _k} S$. But then $S=T_{k+1}=S_{k+1}$.  \end{proof*}
\fi

\begin {thm} The language of session types with responses $\cal R$ is
    strictly more expressive than that of standard session types $\cal T$; that
    is,  $\cal T \subset \cal R$.  \label {theorem-expressivity} \end {thm}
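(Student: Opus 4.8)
The plan is to prove the two halves of $\cal T\subseteq\cal R$ and $\cal R\not\subseteq\cal T$ separately. For the inclusion $\cal T\subseteq\cal R$, observe that a standard session type is by definition a session type with responses in which every label set $L_i$ is empty. For such a type every trace is vacuously live: if $\alpha = \varphi\,\rho\,\beta$ then $\requested(\rho)=\emptyset\subseteq\support\beta$ regardless of $\beta$, since $\requested(\&l[\emptyset])=\requested(\oplus l[\emptyset])=\emptyset$ and $\requested(!)=\requested(?)=\emptyset$. Hence $\responsivetraces T=\infinitetraces T$ and therefore $\responsiveselectiontraces T=\selectiontraces T$. Thus every language $\selectiontraces T\in\cal T$ is also $\responsiveselectiontraces T\in\cal R$, giving $\cal T\subseteq\cal R$.

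For strictness, I would exhibit a single session type with responses whose responsive selection language lies outside $\cal T$. A natural candidate mimics the shopping-cart example: take
\[
T \;=\; \mu t.\; \oplus\{\, a[\emptyset].t,\;\; b[\{c\}].\mu t'.\oplus\{\, d[\emptyset].t',\;\; c[\emptyset].\one\,\}\,\}.
\]
Its selection traces $\selectiontraces T$ are all words of the form $a^{n_0} b\, d^{n_1}$ (never selecting $c$, taking the inner loop forever) together with $a^{n_0} b\, d^{n_1} c$ (eventually exiting), plus their infinite prefixes $a^\omega$ and $a^{n_0} b d^\omega$. Restricting to responsive traces kills exactly those traces that select $b$ but never subsequently select $c$: since selecting $b$ requests $c$, a live trace containing $b$ must have $c$ occurring later. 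So $\responsiveselectiontraces T = \{\, a^{n_0} b\, d^{n_1} c \mid n_0,n_1\in\N\,\}\;\cup\;\{\,a^\omega\,\}$ — every finite word in it that uses $b$ must also end (after some $d$'s) with a $c$.

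The heart of the argument — and the step I expect to be the main obstacle — is showing that this language $\responsiveselectiontraces T$ is not $\selectiontraces{S}$ for any standard session type $S$. The key structural fact is that $\selectiontraces S$ for a standard type is always \emph{prefix-closed up to the branching structure}: more precisely, because transitions of a standard type never ``require'' anything, whenever a state $S_i$ in a transition sequence offers a branch or select with some label $l$ in it, the trace that stops just before taking $l$, and also every trace that continues past $l$, all lie in $\selectiontraces S$; there is no mechanism forcing a particular continuation. In our language, however, $ab d \notin \responsiveselectiontraces T$ while $abdc\in\responsiveselectiontraces T$, yet the only place a standard type $S$ generating $abdc$ could diverge from generating $abd$ is at a select/branch node, and at such a node the shorter trace $abd$ (obtained by instead taking a different branch, or by the type simply being at $\one$) would also have to be admitted — contradicting $abd\notin\responsiveselectiontraces T$ unless $S$ has \emph{no} alternative at that node, i.e. after $abd$ the type $S$ must select $c$ and nothing else, forever on every path; but then $abd d c$ would be forced out while $abddc$ is in our language, and iterating, the ``must eventually do $c$ but may do arbitrarily many $d$'s first'' pattern cannot be captured by any fixed tree of finite out-degree whose every node offers unrestricted continuations. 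I would formalise this via a pumping-style lemma: in $\selectiontraces S$, if a finite trace $\varphi d^m$ is extendable to $\varphi d^m c\in\selectiontraces S$ for all $m$, then by König's lemma / the equi-recursive tree structure there is a state reachable after some $\varphi d^k$ from which $d$ is selectable and the residual type is the same as after $\varphi d^{k'}$ for some $k'<k$, so looping gives $\varphi d^\omega$ but \emph{also}, by the no-requirement property, allows stopping or branching away to yield $\varphi d^k$ itself in $\selectiontraces S$ — contradicting that $\varphi d^k=abd\cdots\notin\responsiveselectiontraces T$. Since $\responsiveselectiontraces T\in\cal R$ and $\responsiveselectiontraces T\notin\cal T$, combined with the inclusion above we conclude $\cal T\subset\cal R$.
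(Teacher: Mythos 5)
Your overall route is sound and genuinely different from the paper's. The paper's witness is $\mu t.\oplus\{a[b].t,\; b[a].t\}$, whose responsive selection traces are the infinite words with infinitely many $a$s and $b$s; it then shows no standard type matches by a limit argument: every $a^k$ is a prefix of a trace of the putative standard type $S$, so by determinism of type transitions these finite witnessing sequences cohere and yield $a^\omega\in\selectiontraces{S}$, which is not live. You instead choose a witness whose separation is visible on \emph{finite} words and argue via a closure property of standard selection languages; your handling of the inclusion $\mathcal T\subseteq\mathcal R$ is exactly the content of the paper's ``immediate by definition'' remark and is fine.

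The problem is the step you yourself call the main obstacle: it is both over-engineered and, as formalised, broken. Under the paper's definitions a trace is the label sequence of an \emph{arbitrary} (not necessarily maximal) finite or infinite transition sequence, so $\selectiontraces{S}$ is prefix-closed for every type $S$; hence from $abdc\in\selectiontraces{S}=\responsiveselectiontraces{T}$ you immediately get $abd\in\responsiveselectiontraces{T}$, contradicting that no live trace of your $T$ has selection word $abd$. The whole ``unless $S$ has no alternative at that node'' case split and the pumping/K\"onig detour are therefore unnecessary, and the detour itself contains a false move: from the repeated state you claim one may ``stop or branch away to yield $\varphi d^k$ itself''; branching away yields $\varphi d^k c$, which \emph{is} in your language and gives no contradiction, while stopping is legitimate only because non-maximal finite traces count as traces --- precisely the fact that makes the detour superfluous. (The appeal to a finite-state residual also needs justification the sketch does not give; the paper deliberately avoids it by using determinism instead.) Finally, your description of $\responsiveselectiontraces{T}$ omits the live finite traces $a^{n}$, though the membership facts you actually use ($abd\notin$, $abdc\in$, $abddc\in$) are correct, so that slip is harmless. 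With the prefix-closure observation stated and proved explicitly, your witness and argument go through.
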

\iflong
\begin{proof} The non-strict inclusion is immediate by definition; it remains
    to prove it strict. Consider the following session type with responses,
    $T$.  \[ T \;=\;  μ t. ⊕\left\{ \begin{array}{l} a[b].t \\ b[a].t
\end{array} \right.  \] We shall prove that $\responsiveselectiontraces {T}\not
∈\cal T$.  Suppose not; then there exists a session type $ S$ with
$\selectiontraces {S} =  \responsiveselectiontraces {T}$.  Clearly the
responsive  selection traces $\responsiveselectiontraces {T}$ is the set of
infinite strings over the alphabet $\{ a,b\}$ where both  $ a, b$ occur
infinitely often.  It follows that for all $k > 0$, the string $ a^k$ is a
prefix of an infinite string in $\responsiveselectiontraces{T}$.
We have assumed $\selectiontraces {S} = \responsiveselectiontraces {T}$, so
each $a^k$ must also be a prefix of an infinite string in $\infinitetraces{S}$.
By Lemma~\ref{lemma-send-receive-doesnt-matter}, we may assume $S$ has no
occurrences of send or receive, and so for each $k$ there is a transition
sequence $((S^k_i,  ρ ^k_i)_{i<k},S^k_k)$ with $S^k_1=S$ and $\selected(  ρ
^k_i) = a$.  By induction on $k$ using Lemma \ref{345}, we find that $  ρ ^k_i=
ρ ^{ k +1}_i$ and $S^k_i = S^{k+1}_i $ when $i\leq k$. It follows that $S^i_i =
S^{i+1}_i\transition{  ρ _i} S^{i+1}_{i+1}$ when $i+1\leq k$, and so $(S^i_i,
ρ _i)_{i\in\N}$ is an infinite transition sequence with $S^1_1=S$. But then $
(\selected( ρ  _i))_{i ∈\N} = a ^\omega ∈\selectiontraces{S}$ while clearly not
in $\responsiveselectiontraces{T}$, contradicting
$\responsiveselectiontraces{T}=\selectiontraces{S}$.  
\end{proof}
\else
\begin{proof}[sketch] The non-strict inclusion is immediate by definition; it remains
    to prove it strict. For this consider the session type with responses $T=\mu t. ⊕\{a [b].t; b[a].t\}$, which has as responsive traces
  all strings with both infinitely many $a$s and $b$s. 
  We can find every sequence
 $a^n$ as a \emph{prefix} of such a trace.   But, (by regularity) any \emph{standard} session type that has all $a^n$ as finite traces must also have the trace $a^\omega$, which is not a responsive trace of $T$, and thus the responsive traces of $T$ can not be expressed as the traces of a standard session type.
 \end{proof}
\fi
\section{Session Typing} 
\label{section-typing}
\iflong
    Recall the standard type system for session types,
    presented in Fig.~\ref{fig:std-type-system} with the obvious extension for
    primitive recursion. 
\def\C{\ref{fig:std-type-system}}
\begin{figure}[htb]
  \centering  
  \begin{gather*}
    \ltag{\C-Out}
    \frac{
        Θ  ⊢_{\standard} P  ▹  Δ, k: T \qquad
    }{
        Θ ⊢_{\standard} \sendp  ▹  Δ, k: !.T
    }
    \qquad
    \frac{
        Θ  ⊢_{\standard} P  ▹  Δ, k: T
    }{
        Θ ⊢_{\standard} \receivep  ▹  Δ, k: ?.T
    }
    \rtag{\C-In}
    \\[6mm]	
    \ltag{\C-Bra}
    \frac{
       \forall i \in I: \quad   Θ  ⊢_{\standard} P_i  ▹  Δ , k: T_i	        
    }{
         Θ ⊢_{\standard}\ {\BRANCHS k{l_i}{P_i}}_{i\in I}   ▹  Δ, k:\branchst 
    }
\iflong \\[6mm] \ltag{\C-sel} \else \qquad \fi
    \frac{
        Θ  ⊢_{\standard} P  ▹  Δ, k: T_j
    }{
        Θ ⊢_{\standard} \SEL{k}{l_j}{P}  ▹  Δ, k:\selst 
    }
    \quad(j\in I)
\iflong \else \rtag{\C-Sel} \fi
    \\[6mm]
    \ltag{\C-Par}
    \qquad
  \frac{
     Θ ⊢_{\standard} P_1  ▹  Δ_1 \qquad   Θ ⊢_{\standard} P_2  ▹  Δ_2
  }{
   Θ ⊢_{\standard} P_1 \PAR P_2  ▹  Δ_1,Δ_2
  }
  \qquad
 \frac{
      \VASCO{Δ\text{ completed}}
    }{
        Θ ⊢_{\standard} \INACT  ▹  Δ
    }      
        \qquad\rtag{\C-Inact}
  \\[6mm]
  \frac{
      Θ , \AT{X}{Δ}
    ⊢_{\standard} P  ▹  Δ 
    \qquad 
      Θ 
    ⊢_{\standard} Q  ▹  Δ 
  }{
      Θ ⊢_{\standard} \recnp  ▹  Δ
  } 
  \rtag{\C-RecP}
   \\[6mm] 
   \ltag{\C-Rec}
  \frac{
      Θ , \AT{X}{Δ}
    ⊢_{\standard} P  ▹  Δ 
  }{
      Θ ⊢_{\standard} \recp  ▹  Δ
  }  
  \qquad
    \frac{
         \domain (Δ) =\tilde k
      }{
         Θ ,\AT{X}{Δ}
       ⊢_{\standard} \varp  ▹  Δ
     }
     \rtag{\C-Var}
     \\[6mm]
 \ltag{\C-Cond}    
    \frac{
     Θ ⊢_{\standard}  P ▹ Δ\qquad
          Θ ⊢_{\standard}  Q ▹ Δ
    }{
    Θ ⊢_{\standard}\ifp ▹ Δ
    }
   \end{gather*}
  \caption{Standard Session Typing System}
  \label{fig:std-type-system}
\end{figure}
 
  \else
Recall that the standard typing system \cite{DBLP:conf/esop/HondaVK98,DBLP:journals/entcs/YoshidaV07} for session types
has judgements $Θ  ⊢_{\standard} P ▹ Δ$.  
 We use this typing system without restating it; 
refer to either \cite{DBLP:conf/esop/HondaVK98,DBLP:journals/entcs/YoshidaV07} or the full version of this paper \cite{full}.
\fi
In this judgement, $ Θ$ takes process variables to session type environments;
in turn, a \emph{session typing environment}~$Δ$ is \label{notation-delta}
    a finite partial map from 
    \iflong 
    channel names and polarised channel names 
    \else
    channels 
    \fi
    to types. 
\begin{eqnarray*}
  \Theta &::= &\epsilon \mid \Theta, X : \Delta \\
  \Delta &::= &\epsilon \mid k : T
\end{eqnarray*}
    We write $Δ, Δ'$ for the union of $Δ$ and $Δ'$, defined when their domains are
    disjoint. We say $Δ$ is
    \emph{completed} if $ Δ (k)=\one$ when defined;
             it is \emph{balanced} if $k: T,
            \overline {k}: U \in Δ$ implies~$T\bowtie U$.

We generalise transitions of types (Fig.~\ref{figure-basic-type-transitions}) to 
session typing environments in Fig.~\ref{figure-type-environment-transitions}, 
with transitions \iflong ranged over by $δ$ as follows;
recall that $ρ$ is a type transition label. \fi
\iflong\[\else $\fi
 δ ::=  τ  | τ: l, L | k:   ρ
 \iflong\] \else $. \fi
We define $\subject (k:ρ) = k$ and  $\subject (τ:  l, L) =\subject (τ) = τ$.
\def\D{\ref{figure-type-environment-transitions}}
\begin {figure}
    \shortcut[9mm]
\begin {gather*} %
\ltag{\D-Lift}
\frac{T
        \transition{ ρ} T'}{k: T\ \transition{k: ρ}  k: T'}
\qquad\qquad
  \ltag{\D-Par}
\frac{ Δ\transition {δ} Δ' }{ Δ, Δ''\transition {δ} Δ', Δ'' }
\\[4mm]
\ltag{\D-Com1} %
\frac{Δ_1
    \transition{k:!} Δ'_1 \qquad Δ_2  \transition{\overline{k}:?} Δ'_2
}{ Δ_1, Δ_2  \transition{ τ} Δ'_1, Δ'_2 }  
  %
\qquad
\ltag{\D-Com2} %
\frac{Δ_1 \transition{k:⊕l[L]} Δ'_1 \qquad Δ_2
  \transition{\overline{k}:\&l[L']} Δ'_2 }
  { Δ_1, Δ_2  \transition{ τ:l,L\cup L'}
  Δ'_1, Δ'_2 } 
\end{gather*}
\shortcut[4mm]
\caption {Transitions of types (2)}
\shortcut
\label {figure-type-environment-transitions}
\end {figure}
We lift $\selected(-),\requested(-)$, and
$\responded(-)$ to actions $δ$ 
\iflong 
as follows. 
 \[ \begin{array}{lll}
        \selected(\tau)= ε \qquad& 
        \selected(k: ρ) =\selected(ρ) \qquad&
         \selected( τ:l, L) = l \\        
         \requested(\tau)= ∅  & 
         \requested( τ: ρ) =\requested(ρ) &
          \requested( τ: l, L) = L \\
        \responded( τ) = ∅ &
        \responded(k: ρ) =\responded(ρ)  &
        \responded(k: l, L) = \{l\}  
        \end{array} 
\]
\else
in the obvious way, e.g., $\requested(\tau:l,L)=L$. 
\fi
    The  type environment transition is thus an LTS
    with responses, and we may speak of its transition sequences being live.

\def\correspond{\simeq}
\begin {defi}
We define a binary relation on type transition labels $δ$ and transition 
labels $λ$, written  
$δ
\correspond
 λ
$,
as follows.
\iflong
\begin {align*}
τ &\correspond τ\quad &
  k:\&l[L] &\correspond k\&l &
  k:!   &\correspond k !v &
\\
τ:l,L & \correspond τ:l &
  k:⊕l[L]  &\correspond k ⊕l &
k:?  &\correspond k ?x\quad&   
\end {align*}
\else
$τ \correspond τ$, 
  $k:\&l[L] \correspond k\&l$, 
  $k:!   \correspond k !v$,
$τ:l,L  \correspond τ:l$,
  $k:⊕l[L]  \correspond k ⊕l $,
$k:?  \correspond k ?x$.
\fi
\end {defi}
\begin{thm}[Subject reduction]
  If $Γ ⊢_{\standard} P ▹ Δ$ and $P\transition {λ} Q$, then there exists
    $δ\correspond \lambda$ s.t.~$ Δ\transition {δ} Δ'$ and $Γ ⊢_{\standard} Q ▹ Δ'$.
    \label{theorem-standard-subject-transition} 
    \end{thm}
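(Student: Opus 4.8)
The plan is to prove this by induction on the derivation of the transition $P \transition{\lambda} Q$, with an inner case analysis that mirrors the structure of the last rule used. Since the typing judgement $\Theta \vdash_{\standard} P \triangleright \Delta$ is essentially syntax-directed on $P$ (modulo the recursion-unfolding rules), each shape of $P$ that admits a transition in Fig.~\ref{fig:transitions} will be matched against the corresponding typing rule, and we read off the required $\delta$ and $\Delta'$ from the premises of that typing rule.

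\medskip\noindent\textit{Proof plan.} Proceed by induction on the derivation of $P \transition{\lambda} Q$.
\begin{itemize}
\item \textbf{Prefix rules} (\rulename{\B-Out}, \rulename{\B-In}, \rulename{\B-Sel}, \rulename{\B-Bra}). Here $P$ is a prefixed process, so the typing derivation must end in the matching rule (\rulename{\C-Out}, \rulename{\C-In}, \rulename{\C-Sel}, \rulename{\C-Bra}). For instance, if $P = \sendp \transition{k!v} P$ then $\Delta = \Delta'', k\colon{!.T}$ with $\Theta \vdash_{\standard} P \triangleright \Delta'', k\colon T$; take $\delta = k\colon{!}$, note $\delta \correspond k!v$, and use \rulename{\D-Lift}/\rulename{\D-Par} to get $\Delta \transition{\delta} \Delta'', k\colon T$. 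The branch case is similar, using that the selected label $l_i$ determines the continuation $T_i$; the received-value case uses that value typing is trivial here.
\item \textbf{Parallel rule} (\rulename{\B-ParL}). Here $P = P_1 \PAR P_2$ with $P_1 \transition{\lambda} Q_1$, typed by \rulename{\C-Par} with $\Delta = \Delta_1, \Delta_2$. Apply the induction hypothesis to $P_1$, obtaining $\delta \correspond \lambda$, $\Delta_1 \transition{\delta} \Delta_1'$ and $\Theta \vdash_{\standard} Q_1 \triangleright \Delta_1'$; then \rulename{\D-Par} lifts this to $\Delta_1, \Delta_2 \transition{\delta} \Delta_1', \Delta_2$ and \rulename{\C-Par} retypes $Q_1 \PAR P_2$. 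Here I must check the side condition of \rulename{\D-Par} (disjointness of domains) is preserved, which follows because the transition of $\Delta_1$ changes only the type at $\subject(\delta)$, a channel already in $\dom(\Delta_1)$.
\item \textbf{Communication rules} (\rulename{\B-Com1}, \rulename{\B-Com2}). Here $P = P_1 \PAR P_2$ with $P_1 \transition{\co k!v} P_1'$ and $P_2 \transition{k?v} P_2'$ (or the select/branch analogue). Apply the induction hypothesis twice to get matching type-environment transitions, then combine them with \rulename{\D-Com1} (resp.~\rulename{\D-Com2}) to produce a $\tau$ (resp.~$\tau\colon l, L\cup L'$) transition; the output $\delta$ corresponds to $\lambda = \tau$ (resp.~$\tau\colon l$). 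Re-typing the result uses \rulename{\C-Par}.
\item \textbf{Recursion and conditional rules} (\rulename{\B-Rec}, \rulename{\B-Prec0}, \rulename{\B-PrecN}, \rulename{\B-CondT}, \rulename{\B-CondF}). In each case the transition is derived from a transition of an unfolded/selected subterm. For \rulename{\B-Rec}, use that $\recp$ is typed by \rulename{\C-Rec} iff its unfolding $P\{\recp/X\}$ is typed at the same $\Delta$ (a standard substitution lemma for recursion), then apply the induction hypothesis. The primitive-recursion cases use \rulename{\C-RecP} similarly, together with the conventions on $\recnp$ ensuring the unfolded body is well-typed at $\Delta$. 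The conditional cases are immediate from \rulename{\C-Cond}, since both branches are typed at the same $\Delta$.
\end{itemize}

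\medskip\noindent I expect the main obstacle to be the recursion cases, specifically establishing the auxiliary lemma that $\Theta \vdash_{\standard} \recp \triangleright \Delta$ implies $\Theta \vdash_{\standard} P\{\recp/X\} \triangleright \Delta$ (and the analogous statement for primitive recursion, where the substituted process variable is replaced by a decremented copy $\RECN Xi{e-1}PQ$ and the numeral $i$ by $n$). This requires a substitution lemma for process variables against the environment $\Theta$, plus—for the primitive-recursion case—care that the syntactic conventions \eqref{conventions} on $\recnp$ (no parallel composition, no nested recursion, exactly one free process variable) are exactly what make the unfolded body typable at the same $\Delta$. A secondary but routine point is bookkeeping the side conditions $\co k \notin \fn(\cdot)$ from the transition rules and the domain-disjointness conditions in \rulename{\D-Par}; these never obstruct the argument but must be tracked so that the reconstructed typing and type-environment transitions are well-formed.
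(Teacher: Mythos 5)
Your proposal matches the paper's proof (Lemma~\ref{app-lemma-subject-reduction} in Appendix~\ref{appendix-standard-subject-reduction}) essentially step for step: induction on the transition derivation, inversion on the syntax-directed typing rules in the prefix cases, domain preservation of type-environment transitions (Lemma~\ref{lemma-type-transitions-preserve-domains}) for re-forming $\Delta_1',\Delta_2$ in the \rulename{\B-ParL} case, and the process-variable and term-variable substitution lemmas (Lemmas~\ref{app-lemma-process-variable-substitution} and~\ref{app-lemma-variable-substitution}) for the \rulename{\B-Rec}/\rulename{\B-PrecN} cases, exactly the auxiliary results you flag as the main obstacle. The approach and its decomposition into lemmas are the same as the paper's, so nothing further is needed.
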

    \iflong
The proof is in Appendix~\ref{appendix-standard-subject-reduction}.
\fi

\begin{defi} The \emphasise {typed transition system} is the transition
    system which has states $Γ ⊢_{\standard} P ▹ Δ$ and transitions $Γ ⊢_{\standard} P ▹
    Δ\transition{λ, δ}Γ ⊢_{\standard} P' ▹ Δ'$ whenever there exist transitions
    $P\transition {λ} P'$ and  $Δ\transition {δ} Δ'$ with $\delta\correspond\lambda$. 
\end{defi}
\begin{exa}
\label{ex:typing}
Wrt. the standard session typing system, \emph{both} of
the processes $P(D_0)$ and $P(D)$ of Example~\ref{example-pi}
are typable wrt. the types we postulated for them in 
Example~\ref{ex:tp}. Specifically, we have $\cdot ⊢_{\standard} P(D_0) ▹ k: T_P, o^+:T_D, o^-:\overline{T_D}$ and similarly for $P(D)$. The judgement means that the process $P(D)$ treats $k$ according to $T_P$ and the (two ends of) the data object according to $T_D$ and its syntactic dual $\overline{T_D}$. The standard session typing system of course does not act on our liveness annotations, and so does not care that $P(D_0)$ is not live.
\end{exa}

\iflong
    For the subsequent development, we will need the following lemmas.
    \begin {lem}
     If $Δ\transition {δ} Δ'$ then $\domain (Δ) =\domain (Δ')$.
     \label {lemma-type-transitions-preserve-domains}
    \end {lem}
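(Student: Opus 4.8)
The plan is a routine structural induction on the derivation of $\Delta \transition{\delta} \Delta'$, using the four rules of Fig.~\ref{figure-type-environment-transitions}. The base case is \rulename{\D-Lift}, where $\Delta = k:T$ and $\Delta' = k:T'$ for some type transition $T \transition{\rho} T'$; here $\domain(\Delta) = \{k\} = \domain(\Delta')$ immediately, and no information about $T$ or $T'$ beyond the shared endpoint name $k$ is needed.

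For the inductive step there are three cases. If the derivation ends with \rulename{\D-Par}, then $\Delta = \Delta_0, \Delta''$ and $\Delta' = \Delta_0', \Delta''$ with a shorter derivation of $\Delta_0 \transition{\delta} \Delta_0'$; the induction hypothesis gives $\domain(\Delta_0) = \domain(\Delta_0')$, whence $\domain(\Delta) = \domain(\Delta_0) \cup \domain(\Delta'') = \domain(\Delta_0') \cup \domain(\Delta'') = \domain(\Delta')$, and moreover $\Delta_0', \Delta''$ is well-defined since its two domains are disjoint. The two communication rules \rulename{\D-Com1} and \rulename{\D-Com2} are handled identically: there $\Delta = \Delta_1, \Delta_2$ and $\Delta' = \Delta_1', \Delta_2'$ with shorter derivations of $\Delta_1 \transition{} \Delta_1'$ and $\Delta_2 \transition{} \Delta_2'$ (with the appropriate labels), so applying the induction hypothesis to both premises yields $\domain(\Delta_i) = \domain(\Delta_i')$ for $i = 1, 2$, and taking the disjoint union gives $\domain(\Delta) = \domain(\Delta')$.

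I do not expect any genuine obstacle: every rule either leaves the environment's domain literally unchanged (\rulename{\D-Lift}) or acts on a sub-environment whose domain the induction hypothesis preserves while leaving the remainder of the environment untouched. The only minor point to record is that the recombined environments remain well-defined, which holds because disjointness of domains is itself inherited from the premises via the induction hypothesis.
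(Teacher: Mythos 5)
Your proposal is correct and matches the paper's proof, which is exactly the same straightforward induction on the derivation of the transition (the paper simply leaves the case analysis over \rulename{\D-Lift}, \rulename{\D-Par}, \rulename{\D-Com1}, and \rulename{\D-Com2} implicit). The additional remark about disjointness of domains being inherited is fine but not needed beyond what the rules already guarantee.
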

    \begin {proof*}
    Straightforward induction on the derivation of the transition.
    \end {proof*}

    \begin{lem}
     If $Δ\transition {δ} Δ'$ then either:
    \begin{enumerate} 
    \item
    $δ = k: ρ$ and $ Δ = Δ'', k: T$ and $Δ' = Δ'', k:  T'$ and $ T\transition {ρ} T'$; or
    \item  $δ =  τ$ or $δ = τ: l, L$ and $ Δ = Δ'', k: T,\overline{k}: S$ and $ Δ'=  Δ'',k: T',\overline{k}: S'$ where $ T\transition {ρ} T'$ and $S\transition {ρ'} S'$ and $ ρ\bowtie ρ'$.
    \end{enumerate}
     \label {lemma-type-transition-form}
    \end{lem}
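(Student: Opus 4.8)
The plan is to prove this by induction on the derivation of $\Delta \transition{\delta} \Delta'$, with one case for each rule of Figure~\ref{figure-type-environment-transitions}. The base case \rulename{\D-Lift} is immediate: there $\Delta = k:T$, $\Delta' = k:T'$, $\delta = k:\rho$, and the premise supplies $T \transition{\rho} T'$, so conclusion~(1) holds with $\Delta'' = \epsilon$. For \rulename{\D-Par}, where $\Delta = \Delta_0, \Delta_1$ and $\Delta' = \Delta_0', \Delta_1$ with $\Delta_0 \transition{\delta} \Delta_0'$, I would apply the induction hypothesis to $\Delta_0 \transition{\delta} \Delta_0'$ and then simply absorb $\Delta_1$ into the residual context $\Delta''$ in whichever of the two conclusions results; Lemma~\ref{lemma-type-transitions-preserve-domains} gives $\dom(\Delta_0) = \dom(\Delta_0')$, which is exactly what keeps all the domain-disjointness side conditions intact.

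The two communication rules are the substantive cases. For \rulename{\D-Com1}, with $\Delta = \Delta_1, \Delta_2$, $\Delta' = \Delta_1', \Delta_2'$, $\delta = \tau$, and premises $\Delta_1 \transition{k:!} \Delta_1'$ and $\Delta_2 \transition{\overline{k}:?} \Delta_2'$, the key observation is that the labels of both premises have the shape $k':\rho$ --- not $\tau$ or $\tau:l,L$ --- so the induction hypothesis is forced to yield conclusion~(1) for each premise. This hands us $\Delta_1 = \Delta_1'', k:T$ and $\Delta_1' = \Delta_1'', k:T'$ with $T \transition{!} T'$, and symmetrically $\Delta_2 = \Delta_2'', \overline{k}:S$ and $\Delta_2' = \Delta_2'', \overline{k}:S'$ with $S \transition{?} S'$. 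Taking $\Delta'' = \Delta_1'', \Delta_2''$ (well-formed, and not mentioning $k$ or $\overline{k}$, by the disjointness hypotheses on $\Delta_1$, $\Delta_2$ and on $\Delta_1, \Delta_2$) and recording $! \bowtie ?$ from the label duality in Figure~\ref{figure-type-environment-transitions} yields conclusion~(2). Case \rulename{\D-Com2} goes through verbatim with $\rho = \oplus l[L]$, $\rho' = \&l[L']$ and the duality $\oplus l[L] \bowtie \&l[L']$.

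I do not expect any genuine obstacle: the only point requiring attention is the one just noted --- that in the \rulename{\D-Com} cases the induction hypothesis necessarily lands in conclusion~(1), which is precisely what allows the two dual singleton bindings $k:T$ and $\overline{k}:S$ to be extracted and the remainders recombined --- together with the routine bookkeeping of domain disjointness, which is discharged by Lemma~\ref{lemma-type-transitions-preserve-domains}.
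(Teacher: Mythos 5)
Your proposal is correct and matches the paper, which disposes of this lemma with exactly the argument you spell out — a straightforward induction on the derivation of the transition (the paper merely states "Straightforward induction" without giving the case analysis). Your observation that in the \rulename{\D-Com1}/\rulename{\D-Com2} cases the label shape of the premises forces the induction hypothesis into conclusion~(1) is precisely the point that makes the induction go through.
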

    \begin {proof*}
    Straightforward induction on the derivation of the transition.
    \end {proof*}

    \begin {lem}
     If $Δ\transition {δ} Δ'$ with $Δ$  balanced and $\overline{\subject(δ)}\not ∈\domain (Δ)$, then also $Δ'$ balanced.
     \label {type-transitions-preserve-balance}
    \end {lem}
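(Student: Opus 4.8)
The plan is to reduce the statement to one key fact about types alone — that duality is preserved when the two sides of a session make \emph{dual} transitions — and then dispatch the two possible shapes of the transition $\Delta\transition{\delta}\Delta'$ given by Lemma~\ref{lemma-type-transition-form}.

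First I would prove the following auxiliary claim: if $T\bowtie S$, $T\transition{\rho}T'$, $S\transition{\rho'}S'$ and $\rho\bowtie\rho'$, then $T'\bowtie S'$. Working under the equi-recursive view, I may assume $T$ (and hence $S$) is already unfolded, so it is $\one$, a send, a receive, a branch, or a select. The case $T=\one$ is vacuous since $\one$ has no transitions. If $T=\;!.T_0$, duality forces $S=\;?.S_0$ with $T_0\bowtie S_0$; the only transition of $T$ is $T\transition{!}T_0$, and $\rho'\bowtie\;!$ forces $\rho'=\;?$, so the only matching transition of $S$ is $S\transition{?}S_0$; hence $T'=T_0\bowtie S_0=S'$. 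The receive case is symmetric. If $T=\&\{l_i[L_i].T_i\}_{i\in I}$, duality gives $S=\oplus\{l_j[L'_j].S_j\}_{j\in J}$ with $J\subseteq I$ and $T_j\bowtie S_j$ for all $j\in J$; a transition $T\transition{\rho}T'$ has $\rho=\&l_i[L_i]$ and $T'=T_i$ for some $i\in I$, while $\rho'\bowtie\rho$ forces $\rho'=\oplus l_i[L']$ with the \emph{same} label $l_i$; since $S$ admits a transition carrying that label, distinctness of labels in a select gives $i\in J$, so in fact $S\transition{\oplus l_i[L'_i]}S_i$, i.e.\ $S'=S_i$, and $T'=T_i\bowtie S_i=S'$. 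The select-versus-branch case is symmetric. The one point to watch here is precisely this branch/select case: the response annotations $L_i$ and $L'_i$ need not agree, so the argument must be routed through the matching \emph{label} rather than through equality of the whole transition labels.

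With the claim in hand I finish by cases on Lemma~\ref{lemma-type-transition-form}. In case~(1), $\delta=k:\rho$, $\Delta=\Delta'',k:T$, $\Delta'=\Delta'',k:T'$; the side condition gives $\overline k\notin\domain(\Delta)$, and by Lemma~\ref{lemma-type-transitions-preserve-domains} also $\overline k\notin\domain(\Delta')$, so neither $k$ nor $\overline k$ has a dual partner in $\Delta'$; thus every pair $h:U,\overline h:V\in\Delta'$ already lies in $\Delta''\subseteq\Delta$ with its types untouched by the transition, and balance of $\Delta$ gives $U\bowtie V$. In case~(2), $\delta$ is $\tau$ or $\tau:l,L$, so $\subject(\delta)=\tau$ and the side condition holds vacuously; here $\Delta=\Delta'',k:T,\overline k:S$ and $\Delta'=\Delta'',k:T',\overline k:S'$ with $T\transition{\rho}T'$, $S\transition{\rho'}S'$ and $\rho\bowtie\rho'$. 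Pairs entirely within $\Delta''$ are balanced as before (disjointness of domains rules out any mixed pair), and for the single remaining pair $k:T',\overline k:S'$ we have $T\bowtie S$ by balance of $\Delta$, so the auxiliary claim yields $T'\bowtie S'$. Hence $\Delta'$ is balanced, as required.
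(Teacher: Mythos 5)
Your proof is correct, but it is organised differently from the paper's. The paper proves this lemma by induction on the derivation of the transition $Δ\transition{δ}Δ'$, with one case per rule of Fig.~\ref{figure-type-environment-transitions}: the \rulename{\D-Lift} case is trivial, the \rulename{\D-Par} case does the bookkeeping about which pairs $k,\overline k$ lie in which part of the environment (using the side condition $\overline{\subject(δ)}\notin\domain(Δ)$ together with Lemma~\ref{lemma-type-transition-form} to rule out a pair split across the two parts, and invoking the induction hypothesis for pairs inside the moving part), and the \rulename{\D-Com1}/\rulename{\D-Com2} cases carry out, inline, exactly the duality-inversion argument that you isolate as your auxiliary claim ($!.S'\bowtie{?}.T'$ gives $S'\bowtie T'$; for branch/select, matching the common label and using distinctness of labels, with the response annotations playing no role). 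You instead avoid the induction entirely: you take the structural characterisation of Lemma~\ref{lemma-type-transition-form} as the whole case split, use Lemma~\ref{lemma-type-transitions-preserve-domains} plus the side condition to handle the one-endpoint case, and factor the two-endpoint case through a clean, reusable fact that duality of types is preserved under dual type transitions. What your route buys is a shorter, non-inductive argument and an explicitly stated semantic lemma (duality is a bisimulation up to dual labels) that the paper only proves implicitly inside the \rulename{\D-Com} cases; what the paper's route buys is that it follows the transition rules directly, so the environment bookkeeping of \rulename{\D-Par} is discharged where it arises rather than being delegated wholesale to Lemma~\ref{lemma-type-transition-form}. Both arguments rest on the same two ingredients, so the difference is one of decomposition rather than substance; your version is arguably the cleaner presentation, provided Lemma~\ref{lemma-type-transition-form} is established independently (which it is, in the paper, by its own induction on the derivation).
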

    \begin {proof*}
    By induction on the   derivation of the transition.

    \Case{\rulename{\D-Lift}} Trivial.

    \Case{\rulename{\D-Par}}   Suppose $Δ, Δ''$ balanced with $\overline{\subject(δ)}\not ∈\domain (Δ, Δ'')$, and consider $k,\overline {k} ∈\domain (Δ', Δ'')$.  If both are in $\domain (Δ'')$, we are done. If both are in $\domain (Δ')$ then by the Lemma~\ref{lemma-type-transitions-preserve-domains}
     they are also in $\domain (Δ)$. Then, because $Δ, Δ''$ balanced implies $Δ$ balanced, we find by the induction hypothesis that also $Δ'$ balanced, whence $(Δ', Δ'') (k)= Δ' (k) \bowtie  Δ' (\overline{k}) =( Δ', Δ'') (\overline {k})$.  Finally,  suppose wlog  $k ∈\domain (Δ')$ and $\overline{k} ∈\domain (Δ'')$; we shall see that this is not possible. By  Lemma~\ref{lemma-type-transition-form} either  
     \begin{equation}
         Δ= Δ_1, k: T\transition{k:ρ}  Δ_1, k: T' = Δ'\qquad\text{with}~T\transition {ρ} T',
         \label{eq:1}
    \end{equation}
    or
    \begin{equation}
    Δ = Δ_2, k: T,\overline{k}: S\transition{\delta}   Δ_2,k: T',\overline{k}: S'
     = Δ'\qquad\text{with}~T\transition {ρ} T'~\text{and}~S\transition {\overline{ρ}} S'. 
    \label {eq:2}
    \end{equation}
    We consider these two possibilities in turn. It cannot be true that
     \eqref {eq:1} holds, because by the assumption $\overline{\subject(δ)}\not ∈\domain (Δ, Δ'')$ we must have $\overline{k}=\overline{\subject( δ)}\not\in\domain ( Δ'')$, contradicting $\overline{k} ∈ \domain (Δ'')$. If instead \eqref{eq:2} holds, then because $\overline{k} ∈ \domain (Δ'')$ then $Δ,  Δ''$ is not defined, contradicting the existence of the transition $ Δ, Δ'' \transition {δ} Δ', Δ''$.
     
     \Case{\rulename{\D-Com1}}  Suppose $Δ_1, Δ_2\transition {τ} Δ_1', Δ_2'$ with $Δ_1, Δ_2$ balanced.
    Using Lem\-ma~\ref{lemma-type-transition-form}, and  \rulename{\D-Com1} we have transitions $ Δ_1 =Δ_3, k: S\transition {k:!} Δ_3, k: S' = Δ_ 1'$ and $ Δ_2 =Δ_4, \overline {k}: T\transition {\overline{k}:?} Δ_4, \overline{k}: T' = Δ_2'$, with  $ S\transition {!} S' $ and $ T\transition {?} T'$. 
     It follows that our original transition is on the form
    \begin{equation*}
    Δ_1, Δ_2=Δ_3, k: S,Δ_4, \overline{k}: T\transition {τ}
    Δ_3, k: S', Δ_4, \overline{k}: T'  =Δ_1', Δ_2'.
    \end{equation*}
    Because $ Δ_1, Δ_2$ balanced then also $Δ_3,Δ_4$ is, and so $Δ_1', Δ_2' = Δ_3, k: S', Δ_4, \overline{k}: T'$ is balanced if $S'\bowtie T'$. But $ S\transition {!} S' $ implies $S =! .S'$ and $ T\transition {?} T'$ implies $T =?.T'$. But we have $!.S' = S\bowtie  T = ?.T'$ by $\Delta_1,\Delta_2$ balanced, and so by definition $ S'\bowtie  T'$.

    \Case{\rulename{\D-Com2}}  We have $Δ_1, Δ_2\transition {τ,l:L\cup L'} Δ_1', Δ_2'$ with $Δ_1, Δ_2$ balanced.
    By Lemma~\ref{lemma-type-transition-form}, and  \rulename{\D-Com2} we have transitions $ Δ_1 =Δ_3, k: S\transition {k: ⊕ l [L]} Δ_3, k: S' = Δ_1'$ and $ Δ_2 =Δ_4, \overline {k}: T\transition {\overline{k}:\& l [L']} Δ_4, \overline{k}: T' = Δ_2'$ and $ S\transition {⊕ l [L]} S' $ and $ T\transition {\& l [L']} T'$. Then our original transition is on the form
    \[
    Δ_1, Δ_2=Δ_3, k: S,Δ_4, \overline{k}: T\transition {τ,l:L\cup L'}
    Δ_3, k: S', Δ_4, \overline{k}: T'  =Δ_1', Δ_2'.
    \]
    Because $ Δ_1, Δ_2$ balanced then also $Δ_3,Δ_4$ is, and so $Δ_1', Δ_2' = Δ_3, k: S', Δ_4, \overline{k}: T'$ is balanced if $S'\bowtie T'$. But
    $ S\transition {k: ⊕ l [L]} S' $ implies $S =⊕\{l_j [L'_j]. S'_j\}_{ j\in J} $ with $l=l_i$ and $L=L_i$ for some $i\in J$, and $S'=S_i'$. Similarly $ T\transition {\& l [L]} T'$ implies 
    $T= \{l_i [L'_i]. T'_i\}_{ i\in I}$ with  $j ∈ I$, $l=l_j$, $L'=L_j$, and $T'=T_j$. Because $ S\bowtie  T$ we may assume $ J ⊆  I$ and $i=j$, whence by definition $ T_i'\bowtie  S_j'$.
    \end {proof*}
\fi
\shortcut

\section{Typing System for Liveness} 
\label{section-livenesstyping}
\iflong In this section, we introduce a variant of the
standard session-typing  system, give intuition for it, and establish its
basic properties, notably subject reduction. In the next section, we shall
prove that this typing system does indeed guarantee  liveness of well-typed
processes.

\else
 We  now give our extended typing system for session types with responses.
\fi
The central judgement will be $  Γ; L ⊢ P ▹
Δ$, with the 
intended meaning that ``with process variables
$Γ$ and pending responses $L$, the process $P$ conforms to
$Δ$.''  We shall see in the
next section that  a well-typed lock-free $P$ is live
 and will eventually perform every response 
in $L$.
\iflong

In detail, here are the environments used in the typing system, along with auxiliary  operations on them.
\else
We need:
\fi
\begin{enumerate} \item  \emphasise {Session typing environments}  $Δ$  defined at the start of Section~\ref{notation-delta}.
 \item\emphasise {Response environments}  $L $ are simply sets of
 branch/select labels. 
   \item \emphasise {Process variable environments} $Γ$ are finite partial maps
     from  process variables $X$ to tuples $(L , L , Δ)$  or $(L, Δ)$.
     We write these  $(  A, I, Δ)$ for (A)ccu\-mu\-la\-ted selections and request
     (I)nvariant.
\iflong
We write $Γ + L$ for the environment satisfying that \[ (Γ +
L)(X)=\left\{\begin{array}{ll}(A  \cup L, I,  Δ)&\mbox{whenever~} Γ (X)=(A
,I,   Δ)\\ Γ (X)&\text{otherwise} \end{array} \right.  \]
We sometimes write $\Gamma + l$ instead of $\Gamma + \{ l \}$.

\else
We define  $(Γ + L)(X) = (A  \cup L, I,  Δ)$ when $ Γ (X) = (A, I, Δ)$ and $ Γ (X)$ otherwise, writing $ Γ + l$ instead of $ Γ +\{l\}$. 
\fi
\end{enumerate}
Intuitively, the \emph{response environment} $L$ is the set of obligations the process
being typed will fulfill; every label in $L$ will eventually be selected by the
process. The \emph{process variable environment} assignes to each process
variable two or three pieces of information. For a general recursion variable $X$, it assigns
three: First an \emph{Accumulator} $A$, which intuitively records the set of
labels selected on a particular path through the ``body'' the recursion on $X$.
Second, an \emph{Invariant} $I$, which intuitively is the set of
labels that will be selected on \emph{any} path through the ``body'' of the
recursion. Finally, $\Delta$ is the usual session environment, associating polarised
channels with session types. For a primitive recursion variable $Y$, we
associate simply the maximal set of responses that is allowed to be pending at
the point of recursion, that is, when the process evolves to simply $Y$.

\def\E{\ref{fig:type-system}}
\begin{figure}[htbp]
    \shortcut
      \def\ltag#1{#1\quad}
  \centering  
  \begin{gather*}
       \ltag{\sendr}
    \frac{
       Γ ; L  ⊢ P  ▹  Δ, k: T
    }{
       Γ; L  ⊢ \sendp  ▹  Δ, k: !.T
    }
    \qquad\qquad
    \ltag{\recvr}
    \frac{
       Γ ; L ⊢ P  ▹  Δ, k: T
    }{
       Γ; L  ⊢ \receivep  ▹  Δ, k:?.T
    }
    \\[4mm]	
    \ltag{\branchr}
    \frac{
       \forall i \in I: \quad  Γ + l_i; (L  ∖ l_i) ∪ L_i  ⊢ P_i  ▹  Δ , k: T_i	        
    }{
         Γ; L  ⊢\ {\BRANCHS k{l_i}{P_i}}_{i\in I}  ▹  Δ, k:\branchst 
    }
    \\[4mm]
    \ltag{\selr}
    \frac{
       Γ + l_j; (L  ∖ l_j) ∪ L_j ⊢ P  ▹  Δ, k: T_j
    }{
       Γ; L  ⊢ \SEL{k}{l_j}{P}  ▹  Δ, k:\selst 
    }
    \quad(j\in I)
    \\[4mm]
  \ltag{\concr}
  \frac{
   Γ; L _1 ⊢ P_1  ▹  Δ_1 \qquad  Γ; L _2 ⊢ P_2  ▹  Δ_2 
  }{
   Γ; L _1 ∪ L _2 ⊢ P_1 \PAR P_2  ▹  Δ_1,Δ_2
%
  }
  \qquad
   \ltag{\inactr}
    \frac{
      \VASCO{Δ\text{ completed}}
    }{
       Γ; ∅ ⊢ \INACT  ▹  Δ
    }      
  \\[4mm]
    \ltag{\varr}
    \frac{
 L ⊆I  ⊆  A    \qquad  \domain (Δ) =\tilde k }{
        Γ,X:(A, I,   Δ);  L 
       ⊢ \varp  ▹  Δ
     }
\qquad\qquad
  \ltag{\varnr}
    \frac{
L   ⊆  L '  \qquad  \domain (Δ) =\tilde k     }{
        Γ,X:(L ',   Δ);  L 
       ⊢ \varp  ▹  Δ
     }
     \\[4mm]
  \ltag{\defnr} 
    \frac{
     Γ, X:(L ',   Δ);  L'
    ⊢ P  ▹  Δ 
    \qquad 
     Γ;  L'  
    ⊢ Q  ▹  Δ 
    \qquad
    L ⊆ L'
  }{
     Γ; L  ⊢ \recnp  ▹  Δ
  } 
   \\[4mm] 
  \ltag{\defr} 
  \frac{
     Γ, \AT{X}{(∅, I,   Δ)};  I
    ⊢ P  ▹   Δ  \iflong\qquad\else\quad\fi  L  ⊆  I
  }{
     Γ; L  ⊢ \recp  ▹   Δ
  }  
\iflong \\[4mm] \else \quad \fi
   \ltag{\ifr}   
    \frac{
      Γ; L ⊢ P ▹ Δ\iflong\qquad\else\quad\fi
           Γ; L ⊢ Q ▹ Δ
    }{
     Γ; L ⊢\ifp ▹ Δ
    }
    \end{gather*}
    \shortcut
      \caption{Typing System}
      \shortcut 
  \label{fig:type-system}
  \end{figure}

Our typing system is given in Fig.~\ref{fig:type-system}; we proceed to explain  
the rules. First note that for
finite processes, liveness is ensured if the process terminates with no pending
responses. 
Hence in the rule $\INACT$, the request environment is
required to be empty.
For infinite processes there is no point at which we can insist on having no pending responses. 
\iflong Indeed, 
a
process can be live, meeting its requirements, even though it always has
some pending responses. 
\fi 
\iflong
Take for instance this process, typeable with the type
used in the proof of  Theorem
\ref{theorem-expressivity}. 
\[ 
\else
Consider $
\fi
     \RECtriv X{ \;\SEL ka\;\SEL kb\;\VAR Xk{}} 
\iflong
     \qquad  ▹\qquad
\else 
    $, typeable under $
\fi
     k:μ t.⊕
\iflong
     \left\{ \begin{array}{l} a[b]:t \\ b[a]:t             
                     \end{array} \right.
\;. \]
\else
   \{ a[b].t; b[a].t \} $.
\fi
This process has the single transition sequence 
\iflong \[\else $\fi
    P
\transition {k⊕ a} \SEL kb  P \transition {k ⊕ b} P \transition {k⊕ a}
\cdots
 \iflong \]\else $.\ \fi
At each state but the initial one either $b$ or $a$ is pending.
Yet the process is live:
\emphasise {any response requested in the body of the recursion is also
discharged in the body}, although not in order.  
Since infinite behaviour arises as of unfolding of recursion, 
 responses are ensured if the body of every recursion discharges the requests of that body, even if out of order.
 
 For general recursion, \defr and \varr, we thus find for each recursion a set
 of responses, such that at most that set is requested in its body and that it
 responds with at least that set.  
 In the process variable environment $Γ$ we record 
 this response invariant for
 each variable, along with a tally of the responses performed since the start of the recursion.

 This tally is
 updated by the rules \selr/\branchr\ for select and branch; that is, 
for select, to type  $\selp$ wrt.~$k: ⊕l[L'].T$,
  the process $P$ must do every response in $L'$. In the
 hypothesis, when typing $P$, we add to this the new pending responses $L'$.
 But selecting $l$ performs the response $l$, so
 altogether,  to support pending responses $L$ in the conclusion,
 we must have pending
 responses  $ L ∖\{l\} ∪ L'$  in the hypothesis.  Branching is similar.
  
 The rule for process variable \varr
 typing then 
  checks that the tally includes the invariant, and that the
 invariant includes every currently pending response. 

 For primitive recursion, \defnr and \varnr, we cannot know that the body $P$
 will run at all---$e$ might evaluate to $0$. Thus, it is not helpful to tally
 performed responses; the type system simply checks that at the point of
 recursion \varnr, the pending responses $L$ is contained in the set of
 responses $L'$ that the continuation guarantees to discharge \defnr.
 Notice that this set of $L'$ is allowed to be weakened by the rule \defnr; this is a
 technical necessity for weakening of $L$ to hold in general
 (Lemma~\ref{lemma-L-weakening}).

 Note that the syntactic restriction on the form of the body $P$ in primitive
 recursion \eqref{conventions} ensures that there is no issues with branching: because $P$ can
 contain \emph{only} the process variable $X$, no branching, internal or
 external, can prevent the execution of the body the specified number of times. 

Finally, note the use of the response environments $L$ in the rules \parr and
\concr. For the former, because of our eventual assumption of lock-freedom, a
process $P|Q$ will discharge the set $L$ if $P$ discharges part of $L$ and $Q$
the rest. For the latter, a conditional promising to discharge a set $L$ must
obviously do so whether it takes the ``true'' or ``false'' branch.

 \iflong
 We conclude our walk-through of the rules by showing the typing derivation of
 the above example process.
 \begin{exa}
   Take $\Delta = k : \mu t. \{ a[b].t; b[a].t \}$. Read the derivation below
   bottom-up; easy to miss changes---but not all changes!---are called out
   \grey{with a grey box}.
$$
   \begin{prooftree}
     \[
       \[
         \[
           \{a\}  ⊆ \{a\} ⊆ \{a,b\}\qquad\qquad\dom(\Delta) = k
           \justifies
           X : (\{\grey{a,b}\}, \{a\}, \Delta) \;\; ;\; \{\grey{a}\} 
              \quad\vdash\quad
            \VAR Xk{}
              \quad  ▹\quad
            k:  \mu t. \{ a[b].t;\; b[a].t \} 
           \using {\varr} 
         \]
         \justifies
         X : (\{\grey{a}\}, \{a\}, \Delta) \;\; ;\; \{\grey{b}\} 
            \quad\vdash\quad
          \SEL kb\;\VAR Xk{}
            \quad  ▹\quad
          k:  \mu t. \{ a[b].t;\; b[a].t \} 
         \using {\selr} 
       \]
     \justifies
     X : (\emptyset, \{a\}, \Delta) \;\; ;\; \{a\} 
        \quad\vdash\quad
      \SEL ka\;\SEL kb\;\VAR Xk{}
        \quad  ▹\quad
      k:  \mu t. \{ a[b].t;\; b[a].t \} 
     \using {\selr} 
    \]
       \justifies
      \cdot\;\;; \;\emptyset 
        \quad\vdash\quad
      \RECtriv X{ \;\SEL ka\;\SEL kb\;\VAR Xk{}} 
        \quad  ▹\quad
      k:  \mu t. \{ a[b].t;\; b[a].t \} 
     \using {\defr} 
    \end{prooftree}
$$
\end{exa}

 \fi

\begin{defi}
 We define the \emphasise {standard}  process variable environment $\standard(Γ) $ associated with a process variable environment $Γ$ 
 as follows.
 \[
 \standard(Γ)(X)=\left\{
    \begin{array}{ll}
        Δ\quad &\mbox{whenever~} Γ (X)=(A,I, Δ)\\ 
        Δ&\mbox{whenever~} Γ (X)=(I,   Δ)
        \end{array}
        \right.
 \]
\end{defi}

\begin{thm} If $  Γ; L ⊢ P ▹ Δ$ then also $
    \standard(Γ)  ⊢_{\standard} P  ▹ Δ$.  
\end{thm}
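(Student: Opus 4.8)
The plan is to proceed by induction on the derivation of $Γ; L ⊢ P ▹ Δ$, showing in each case that the corresponding standard rule applies with the environment $\standard(Γ)$ in place of $Γ$. The key observation is that the liveness typing system is a \emph{conservative refinement} of the standard one: every liveness rule in Fig.~\ref{fig:type-system} has exactly the same shape, acting on the same session environments $Δ$ and the same session types, as its standard counterpart in Fig.~\ref{fig:std-type-system}; the only additions are (i) the response environment $L$ in the turnstile, (ii) the accumulator/invariant annotations in $Γ$, and (iii) the side-conditions $L\subseteq I\subseteq A$ and $L\subseteq L'$ etc. None of (i)--(iii) constrains the $Δ$-part of the judgement, so erasing them via $\standard(-)$ always yields a valid standard-system instance.

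Concretely, I would handle the cases as follows. For \sendr, \recvr, \ifr, and \inactr\ the translation is immediate: apply the induction hypothesis to the premise(s) and then the identically-shaped standard rule \rulename{\C-Out}, \rulename{\C-In}, \rulename{\C-Cond}, \rulename{\C-Inact}; note that the side-condition ``$Δ$ completed'' in \inactr\ is literally the side-condition of \rulename{\C-Inact}. For \concr\ the only subtlety is that the liveness rule splits $L$ into $L_1\cup L_2$ while the standard rule \rulename{\C-Par} splits only $Δ$ into $Δ_1,Δ_2$; applying the IH to both premises gives $\standard(Γ)⊢_{\standard} P_i ▹ Δ_i$ and \rulename{\C-Par} concludes. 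For \branchr\ and \selr, applying the IH to each premise $Γ+l_i; (L∖l_i)∪L_i ⊢ P_i ▹ Δ,k:T_i$ gives $\standard(Γ+l_i) ⊢_{\standard} P_i ▹ Δ,k:T_i$; here I would note the small lemma that $\standard(Γ+l)=\standard(Γ)$, since $+l$ only modifies accumulators, which $\standard(-)$ discards. Then \rulename{\C-Bra} resp.~\rulename{\C-Sel} applies, using the same side-condition $j\in I$ in the select case. For the recursion rules \defr/\varr\ and \defnr/\varnr: in \varr, from $Γ,X:(A,I,Δ); L ⊢ \varp ▹ Δ$ we have $\standard(Γ,X:(A,I,Δ)) = \standard(Γ),X:Δ$, and the side-condition $\domain(Δ)=\tilde k$ is exactly that of \rulename{\C-Var}; similarly for \varnr\ with $\standard(Γ,X:(L',Δ))=\standard(Γ),X:Δ$. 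In \defr, applying the IH to the premise $Γ,X:(∅,I,Δ); I ⊢ P ▹ Δ$ yields $\standard(Γ),X:Δ ⊢_{\standard} P ▹ Δ$, which is precisely the premise of \rulename{\C-Rec}; likewise \defnr\ maps to \rulename{\C-RecP}, discarding the premise $L\subseteq L'$.

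I do not expect a genuine obstacle here — the theorem is essentially a bookkeeping statement that ``forgetting the liveness annotations lands you in the standard system.'' The one place to be careful is making sure $\standard(-)$ commutes with the operations the liveness rules perform on $Γ$, namely $Γ+L$ and the two forms of extension $Γ,X:(A,I,Δ)$ and $Γ,X:(L',Δ)$; I would state and dispatch these as a one-line preliminary observation ($\standard(Γ+L)=\standard(Γ)$ and $\standard(Γ,X:(A,I,Δ))=\standard(Γ,X:(I,Δ))=\standard(Γ),X:Δ$) before starting the induction, after which every case is mechanical. A secondary point of care is simply to check that every liveness rule has a standard counterpart with a matching conclusion-shape on the $Δ$-side and matching (or weaker) side-conditions, which the enumeration above confirms.
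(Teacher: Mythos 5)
Your proposal is correct and matches the paper's own argument, which is likewise a straightforward induction on the typing derivation relying on exactly the commutation facts you isolate: $\standard(\Gamma + L)=\standard(\Gamma)$ for \branchr/\selr, and $\standard(\Gamma,X:(A,I,\Delta))=\standard(\Gamma),X:\Delta$ resp.\ $\standard(\Gamma,X:(I,\Delta))=\standard(\Gamma),X:\Delta$ for the recursion and variable rules. No differences worth noting.
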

\begin{proof*} 
Straightforward induction on the typing derivation, using for \branchr and \selr\ that $ \standard(Γ + L)=\standard(
Γ)$; for \varr/\defr\ that $\standard( Γ,X: (A,I,  Δ)) = \standard (Γ), X: Δ$; and for \varnr/\defnr\ that $\standard( Γ,X: (I,  Δ))$ $ =  \standard (Γ), X: Δ$.
\end{proof*}

\iflong
    We proceed to establish  basic properties of our typing system, eventually
    arriving at subject reduction. 
First, the typing system allows weakening of the pending responses~$L$.

    \begin {lem} If $ Γ ; L ⊢ P ▹   Δ$ and $L' ⊆  L$, then also $ Γ ; L' ⊢
        P ▹   Δ$.  \label{2} 
        \label{lemma-L-weakening}
    \end {lem}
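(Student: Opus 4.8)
The plan is to proceed by induction on the typing derivation of $Γ; L ⊢ P ▹ Δ$, showing in each case that the hypothesis $L' ⊆ L$ can be propagated to the premises (suitably adjusted) so that the rule can be reapplied with conclusion $Γ; L' ⊢ P ▹ Δ$. The interesting structure is carried entirely in the response environment $L$; the session environment $Δ$ and the process-variable environment $Γ$ (or $Δ_1,Δ_2$ and $L_1,L_2$ in the parallel case) play no role in the argument beyond being copied along.

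For the communication rules \sendr\ and \recvr, and the conditional \ifr, the response environment is passed unchanged to the premise, so the induction hypothesis applies directly with $L'$. For \branchr\ and \selr, the premise for branch $i$ (resp.\ $j$) uses $(L ∖ l_i) ∪ L_i$; since $L' ⊆ L$ implies $(L' ∖ l_i) ∪ L_i ⊆ (L ∖ l_i) ∪ L_i$, the induction hypothesis gives $Γ + l_i; (L' ∖ l_i) ∪ L_i ⊢ P_i ▹ Δ, k:T_i$, and reapplying the rule yields the desired conclusion. For \concr, from $L' ⊆ L_1 ∪ L_2$ we may set $L_1' = L' ∩ L_1$ and $L_2' = L' ∖ L_1$ so that $L_1' ⊆ L_1$, $L_2' ⊆ L_2$, and $L_1' ∪ L_2' = L'$; apply the induction hypothesis to each component. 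The rule \inactr\ forces $L = ∅$, whence $L' = ∅$ and the conclusion is immediate.

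The recursion and variable rules are where the two ``modes'' of $Γ$ entries matter, but each is still routine. For \varr, the side condition is $L ⊆ I ⊆ A$; since $L' ⊆ L ⊆ I$, the side condition $L' ⊆ I ⊆ A$ still holds, and the rule reapplies with the same $Γ$. Similarly \varnr\ needs $L ⊆ L'$ (the bookkeeping $L'$ of the primitive-recursion variable), which follows from $L' ⊆ L$ by transitivity. For \defr, the premise is independent of $L$ (it uses $I$), and the side condition $L ⊆ I$ is weakened to $L' ⊆ I$ by transitivity; likewise for \defnr, whose premises use the internal bound $L'$ and whose side condition $L ⊆ L'$ becomes $L' \text{(new)} ⊆ L'$—note here the crucial role of allowing \defnr\ to weaken its own bookkeeping set, as remarked in the text: we simply keep the same witness $L'$ and observe $L'\text{(new)} ⊆ L \text{(old)} ⊆ L'$.

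I do not expect any genuine obstacle: the lemma is essentially a ``monotonicity of the conclusion in $L$'' statement, and every rule is designed so that shrinking the conclusion's $L$ only relaxes premises and side conditions. The only point requiring a moment's care is \concr, where one must explicitly exhibit a splitting of $L'$ refining the given splitting of $L$; and \defnr, where one must notice that the internal response bound $L'$ may be reused unchanged rather than re-derived. Both are handled by the elementary set-theoretic observations noted above.
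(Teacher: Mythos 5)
Your proof is correct and follows essentially the same route as the paper's: induction on the typing derivation, using the inclusion $(L'\setminus l_i)\cup L_i \subseteq (L\setminus l_i)\cup L_i$ for branch/select, an explicit refinement of the splitting of $L$ for parallel composition, and the slack in the side conditions ($L\subseteq I$, resp.\ $L\subseteq L'$) for the variable and recursion rules. The only cosmetic difference is your choice of splitting in the parallel case ($L'\cap L_1$ and $L'\setminus L_1$ versus the paper's $L'\cap L_1$ and $L'\cap L_2$), which is immaterial.
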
 
    \begin {proof*} 
    By induction on the
    derivation of the typing of $P$.

    \Case{\inactr} We have $ Γ ; L ⊢ \INACT ▹ Δ$.   By typing $L =\emptyset$ and
    our desired property  is  vacously true.

    \Case{\sendr} Immediate from the induction hypothesis.

    \Case{\recvr}   Immediate from the induction hypothesis.

    \Case{\branchr}  We have $ Γ;L  ⊢ \branchps k  ▹  Δ, k:\branchst$.  By typing
    we must have for all $i\in I$ that   $ Γ + l_i;(L  \setminus l_i) \cup L_i
    ⊢ P_i  ▹  Δ , k: T_i	        $.  By the induction hypothesis $ Γ + l_i;(L ' \setminus
    l_i) \cup L_i  ⊢ P_i  ▹  Δ , k: T_i$,  and we conclude $  Γ;L ' ⊢\ \branchps  ▹
    Δ, k:\branchst$.

    \Case{\selr} Similar to \branchr.

    \Case{\concr}  We have $ Γ ; L ⊢ P_1 | P_2 ▹ Δ$.  By typing we have
    $ Γ ; L_i ⊢ P_i ▹ Δ_i$ with $ L = L_1\cup L_2$ and $Δ =
    Δ_1 , Δ_2$. Consider a subset $L' ⊆ L_1 \cup  L_2$. By the induction hypothesis $
    Γ ; L_i\cap  L' ⊢ P_i ▹ Δ_i$ and, noting that $(L_1\cap L')\cup( L_2\cap L') =
    ( L_1\cup  L_2)\cap L' = L'$, we find $ Γ ; L' ⊢ P_1 | P_2 ▹ Δ$

    %
    %
    \Case{\varnr}  Immediate from the premise $  L ⊆   L'$.

    \Case{\defnr}  Immediate from the premise $  L ⊆   L'$.

    \Case{\varr}  Immediate from the premise $  L ⊆   I$.

    \Case{\defr}  Immediate from the  premise $  L ⊆   I$.

    \Case{\ifr} Immediate from the induction hypothesis.
    \end {proof*}


For the subsequent development, we will need to understand how typing
    changes when the environment $\Gamma$ does; in particular what happens when
    $\Gamma(X) = (A,I,\Delta)$ and the accumulator $A$ changes. To this end, we
    define an ordering ``$\leq$'' which captures such environments being the
    same except for the accumulator growing. 

    \begin{defi}
    We define $\Gamma \leq \Gamma'$ iff $\Gamma(X)=(A,I,\Delta)$ implies $\Gamma'(X)=(A',I,\Delta)$ with $A\subseteq A'$ and $\Gamma(X)=(I,\Delta)$ implies $\Gamma'(X)=(I,\Delta)$. 
    \end{defi}

    The key insight about this ordering is then that moving from smaller to
    greater preserves typing, i.e., accumulators admit weakening.

    \begin{lem}
    If $Γ; L ⊢ P ▹ Δ$ and $Γ\leq Γ'$ then also $Γ'; L ⊢ P ▹ Δ$.
    \label{lemma-gamma-monotone-precisely}
    \end{lem}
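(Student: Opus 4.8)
The plan is to prove Lemma~\ref{lemma-gamma-monotone-precisely} by a straightforward induction on the derivation of $\Gamma; L \vdash P \triangleright \Delta$, checking that the ordering $\leq$ is preserved by every operation the typing rules perform on $\Gamma$ and that the side-conditions involving the accumulator can only get easier as the accumulator grows. First I would observe the two basic facts that make the induction go through: (i) for any $L$, if $\Gamma \leq \Gamma'$ then $\Gamma + L \leq \Gamma' + L$, since $+L$ only enlarges accumulators uniformly and leaves invariants and session environments untouched; and (ii) extending both environments by the same fresh binding preserves the ordering, i.e.\ $\Gamma \leq \Gamma'$ implies $\Gamma, X{:}(\emptyset, I, \Delta) \leq \Gamma', X{:}(\emptyset, I, \Delta)$ and likewise $\Gamma, X{:}(L', \Delta) \leq \Gamma', X{:}(L', \Delta)$.

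With these in hand, the inductive cases are routine. For \sendr, \recvr, \ifr\ the result is immediate from the induction hypothesis applied to the premise(s) with the unchanged $\Gamma$. For \branchr\ and \selr\ we apply the induction hypothesis to each premise after noting $\Gamma + l_i \leq \Gamma' + l_i$ by (i). For \concr\ we apply the induction hypothesis to both premises with $\Gamma$ and $\Gamma'$ unchanged and recombine. For \inactr\ the statement is trivial since the premise is independent of $\Gamma$. For \defr\ and \defnr\ we extend both environments by the same new binding using (ii), apply the induction hypothesis to the body and (for \defnr, \defr\ alike) to the continuation, and reassemble; the side condition $L \subseteq I$ (resp.\ $L \subseteq L'$) does not mention the accumulator and is carried over verbatim.

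The only cases with any content are the two process-variable rules. For \varnr, the judgement is $\Gamma, X{:}(L', \Delta); L \vdash \VAR X{\tilde k} \triangleright \Delta$ with side conditions $L \subseteq L'$ and $\dom(\Delta) = \tilde k$; by definition of $\leq$ the binding of $X$ is unchanged in $\Gamma'$, so the same side conditions apply and we conclude directly. For \varr, the judgement is $\Gamma, X{:}(A, I, \Delta); L \vdash \VAR X{\tilde k} \triangleright \Delta$ with side conditions $L \subseteq I \subseteq A$ and $\dom(\Delta) = \tilde k$; moving to $\Gamma'$ replaces $X{:}(A,I,\Delta)$ by $X{:}(A',I,\Delta)$ with $A \subseteq A'$, so $I \subseteq A'$ still holds (this is exactly the point of the ordering), the invariant $I$ and hence $L \subseteq I$ are unchanged, and we reapply \varr. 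The main — and really only — obstacle is making sure the definition of $\leq$ is precisely the ordering that is invariant under $+L$ and under the accumulator-side-condition of \varr; once that is confirmed, the induction is entirely mechanical.
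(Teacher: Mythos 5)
Your proposal is correct and follows essentially the same route as the paper: a straightforward induction on the typing derivation whose only real content is that $\Gamma\leq\Gamma'$ is preserved by $+\,l$ (for \branchr/\selr) and by extending both environments with the same binding (for \defr/\defnr), exactly the two observations the paper highlights. Your explicit treatment of \varr\ --- noting that $L\subseteq I\subseteq A\subseteq A'$ keeps the side condition valid --- is the case the paper leaves implicit, and it is handled correctly.
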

    \begin{proof*}
    Straightforward induction. We report the two essential cases.

    \Case{\selr}
    We have
    \[
     \frac{
           Γ + l_j; (L  ∖ l_j) ∪ L_j ⊢ P  ▹  Δ, k: T_j
        }{
           Γ; L  ⊢ \SEL{k}{l_j}{P}  ▹  Δ, k:\selst 
        }
    \]
    Noting that $Γ\leq Γ'$ implies $Γ+l_j\leq Γ'+l_j$ we find by IH and \selr
    \[
    \frac{
        Γ' + l_j; (L  ∖ l_j) ∪ L_j ⊢ P  ▹  Δ, k: T_j
        }{
        Γ'; L  ⊢ \SEL{k}{l_j}{P}  ▹  Δ, k:\selst
        }\;.
    \]

    \Case{\defr}
    We have
    \[
     \frac{
         Γ, \AT{X}{(∅, I,   Δ)};  I
        ⊢ P  ▹   Δ \qquad L   ⊆  I
      }{
         Γ; L  ⊢ \recp  ▹   Δ
      }  
    \]
    Noting that $ Γ, X:(∅, I,   Δ)\leq Γ', X:(∅, I,   Δ)$ we have by IH and \defr
    \[
    \frac{
         Γ', \AT{X}{(∅, I,   Δ)};  I
        ⊢ P  ▹   Δ \qquad L   ⊆  I
      }{
         Γ'; L  ⊢ \recp  ▹   Δ
      }  \;.
    \]
    \end{proof*}

    With this insight into the ordering ``$\leq$'', we can now establish that
    extending the environment $\Gamma$ does not change typing.

    \begin{lem}
    If $Γ; L ⊢ P ▹ Δ$ then also $ Γ + L'; L ⊢ P ▹ Δ$.
    \label {lemma-gamma-monotone-typeable}
    \end{lem}
    \begin {proof*}
    Immediate from Lemma~\ref{lemma-gamma-monotone-precisely}. 
    \end {proof*}

    \begin {lem}[Process variable substitution] Suppose that $ Γ  , X:t; L ⊢ P
      ▹   Δ$ where either $t=(A, I, Δ')$ or  $t =(I,Δ')$. Suppose moreover that
      $  Γ  ; I ⊢ Q ▹   Δ'$  where $X$ is not free in
        $Q$.  Then also $ Γ  ; L ⊢ P\{Q/X\} ▹
        Δ$ \label{lemma-process-variable-substitution}
    \end {lem}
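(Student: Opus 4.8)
The plan is to prove Lemma~\ref{lemma-process-variable-substitution} by induction on the derivation of $Γ, X:t; L ⊢ P ▹ Δ$, carrying a suitably generalised statement so that the induction hypothesis remains applicable after passing through the binders in \defr, \defnr, \branchr, \selr. The subtlety is that the typing of $Q$ is fixed at $Γ; I ⊢ Q ▹ Δ'$, whereas inside $P$ the context $Γ$ gets extended (by $+l$ in \selr/\branchr, by fresh process variables in \defr/\defnr); so I will first strengthen the statement to allow a larger context on the $Q$ side as well, using Lemma~\ref{lemma-gamma-monotone-typeable} (and Lemma~\ref{lemma-L-weakening}) to reconcile the two. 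Concretely, the generalised claim is: if $Γ', X:t; L ⊢ P ▹ Δ$ with $Γ \le Γ'$ (or simply $Γ' = Γ + L''$ for some $L''$), and $Γ; I ⊢ Q ▹ Δ'$ with $X \notin \fpv(Q)$, then $Γ'; L ⊢ P\{Q/X\} ▹ Δ$.

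\textbf{Key steps, in order.}
First the base cases. For \inactr, \varnr with a different variable, and the two forms of \varr for a variable $\ne X$, the substitution $\{Q/X\}$ acts as the identity and the premises survive verbatim (the side-conditions $L \subseteq I \subseteq A$, $\dom(Δ) = \tilde k$, etc., do not mention $X$). The one interesting base case is $P = X[\tilde k]$: here $P\{Q/X\} = Q$. From the typing of $X[\tilde k]$ we read off, depending on the shape of $t$, either $L \subseteq I \subseteq A$ and $\dom(Δ') = \tilde k$ (case $t = (A,I,Δ')$, rule \varr) or $L \subseteq I$ and $\dom(Δ') = \tilde k$ (case $t = (I,Δ')$, rule \varnr); in both cases $L \subseteq I$ and $Δ = Δ'$. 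We are given $Γ; I ⊢ Q ▹ Δ'$; by Lemma~\ref{lemma-gamma-monotone-typeable} we lift this to $Γ'; I ⊢ Q ▹ Δ'$, and then by Lemma~\ref{lemma-L-weakening} (using $L \subseteq I$) to $Γ'; L ⊢ Q ▹ Δ'$, which is exactly $Γ'; L ⊢ P\{Q/X\} ▹ Δ$.

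For the inductive cases, \sendr, \recvr, \ifr, \concr are routine: push the substitution inside (using the definition $\sendp\{Q/X\} = \SEND k{e}.(P\{Q/X\})$ and similarly for the others), apply the induction hypothesis to each premise with the same $Γ$, and reassemble; for \concr split $L = L_1 \cup L_2$ and invoke IH on each component, noting $X \notin \fpv(Q)$ is inherited. For \selr (and \branchr identically, componentwise over $i \in I$), the premise is $Γ + l_j, X:t; (L ∖ l_j) ∪ L_j ⊢ P' ▹ Δ, k:T_j$; since $(Γ + l_j) = $ the context $Γ$ with an enlarged accumulator, we have $Γ \le Γ + l_j$, so the generalised IH applies with $Γ + l_j$ in the role of $Γ'$, yielding $Γ + l_j; (L∖l_j)∪L_j ⊢ P'\{Q/X\} ▹ Δ, k:T_j$; re-apply \selr. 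For \defr, the premise is $Γ, X:t, Y:(∅,I',Δ); I' ⊢ P' ▹ Δ$ (with $Y$ fresh, in particular $Y \ne X$ and $Y \notin \fpv(Q)$, so we may reorder the context as $Γ, Y:(∅,I',Δ), X:t$); apply IH — but now $Q$ must be typed in the extended context $Γ, Y:(∅,I',Δ)$, which Lemma~\ref{lemma-gamma-monotone-typeable} again supplies; conclude $Γ, Y:(∅,I',Δ); I' ⊢ P'\{Q/X\} ▹ Δ$ and re-apply \defr, since the side-condition $L \subseteq I'$ is untouched. The case \defnr is handled the same way with $Y:(L'',Δ)$ in place of $Y:(∅,I',Δ)$, and \varnr for $X$ itself cannot occur since $X$ is a general-recursion variable only if $t = (A,I,Δ')$, in which case \varr applies — but we should also cover $t = (I,Δ')$, where $X[\tilde k]$ is typed by \varnr and the base-case argument above still goes through verbatim (it only used $L \subseteq I$ and $Δ = Δ'$).

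\textbf{The main obstacle} is bookkeeping the context mismatch between $P$ and $Q$: the naive statement with a fixed $Γ$ on both sides does not survive the binders, so the real work is choosing the right generalisation (adding $Γ \le Γ'$, or equivalently a $+L''$ weakening) and then discharging it uniformly via Lemma~\ref{lemma-gamma-monotone-precisely}/\ref{lemma-gamma-monotone-typeable} together with the $L$-weakening Lemma~\ref{lemma-L-weakening}. Everything else — commuting $\{Q/X\}$ with the term constructors, re-applying each typing rule — is mechanical, provided one keeps track that the freshness assumption $X \notin \fpv(Q)$ is preserved when descending under $\mu Y$ and $\mathrm{rec}$ binders (which holds because those bound variables are chosen distinct from $X$ and do not occur in $Q$ either).
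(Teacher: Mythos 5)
Your proposal is correct and takes essentially the same route as the paper's proof: induction on the typing derivation, with the only nontrivial base case $P = X[\tilde k]$ discharged from $L \subseteq I$ and $\Delta = \Delta'$ by weakening the typing of $Q$, and the binder cases (\branchr/\selr, \defr/\defnr) handled by re-establishing $\Gamma; I \vdash Q \,\triangleright\, \Delta'$ under the extended environment. The explicit generalisation $\Gamma \le \Gamma'$ you build into the induction statement is precisely what the paper achieves implicitly by invoking Lemma~\ref{lemma-gamma-monotone-typeable} (together with Lemma~\ref{lemma-L-weakening}) at each application of the induction hypothesis, so the two arguments coincide in substance.
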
 
    \begin {proof*} 
    By induction on the typing
        derivation.

    \Case {\inactr}  We have $   Γ  , X:t; L ⊢ \INACT ▹ Δ$. By
    typing $L =\emptyset$. Observe that $\INACT\{Q/X\}=\INACT$. 
    Thus, by \inactr,  we have $  Γ  ; L ⊢\INACT\{Q/X\} ▹ Δ$.

    \Case{\sendr} Immediate from the induction hypothesis.

    \Case{\recvr} Immediate from the induction hypothesis.

    \Case {\branchr} By typing, we have
    \[
     \frac{
           \forall i \in I: \quad  (Γ , X:t) + l_i; (L  ∖ l_i) ∪ L_i  ⊢ P_i  ▹  Δ , k: T_i	        
        }{
             Γ , X:t; L  ⊢\ {\BRANCHS k{l_i}{P_i}}_{i\in I}  ▹  Δ, k:\branchst 
        }
    \]
    Suppose first $t=(A,I, Δ')$. Then $(Γ , X: (A,I, Δ')) + l_i = ( Γ + l_i), X:(A ∪{l_i}, I, Δ')$.  But then we may apply the induction hypothesis and \branchr to obtain
    \begin{equation}
    \frac {
    Γ + l_i; (L  ∖ l_i) ∪ L_i  ⊢ P_i\{Q/X\}  ▹  Δ , k: T_i
    }{
    Γ; L  ⊢\ {\BRANCHS k{l_i}{P_i\{Q/X\}}}_{i\in I}  ▹  Δ, k:\branchst 
    }\;\;\;.
    \label{eq:4}
    \end{equation}
    Suppose instead $t=(I, Δ')$. Then $(Γ , X: (I, Δ')) + l_i = ( Γ + l_i), X:(I, Δ')$, and again we may apply the induction hypothesis and \branchr to obtain \eqref{eq:4}.

    \Case{\selr}  Similar to \branchr.

    \Case{\concr}  We have $   Γ  , X:t;L ⊢ P_1 \PAR P_2  ▹
    Δ$. By typing we find some $L _1 \cup L _2=L$ and $ Δ_1,Δ_2= Δ$ such that $ Γ   , X:t; L_i ⊢  P_i ▹ Δ_i$.  By the induction hypothesis we find $Γ  ; L_i ⊢ P_i\{Q/X\} ▹ Δ_i$, which in turn yields $   Γ;L _1
    \cup L _2 ⊢ (P_1 \PAR P_2)\{Q/X\} ▹  Δ_1,Δ_2$.
    %
    %

    \Case{\varnr} Suppose first $X\not= Y$; then by typing we have 
    \[
     \frac{
    L   ⊆  L '  \qquad  \domain (Δ) =\tilde k  }{
            Γ,Y:(L' ,   Δ), X:t;  L 
           ⊢ \VAR Y{\tilde k}  ▹  Δ     }
        \;\;,
    \]
    so by \varnr also
    \[
          Γ,Y:(L' ,   Δ);  L 
           ⊢ \VAR Y{\tilde k}\{Q/X\}  ▹  Δ\;.
    \]
    If on the other hand $ X = Y$ we have by typing
     \[
     \frac{
    L   ⊆  L '  \qquad  \domain (Δ) =\tilde k   }{
            Γ, X:(L',Δ);  L 
           ⊢ \VAR X{\tilde k}  ▹  Δ
         }
        \;;
    \]
    and it must be the case that  $I=L'$ and $ Δ = Δ'$. We have by assumption $Γ; I ⊢ Q ▹ Δ'$, that is $Γ; L' ⊢ Q ▹ Δ$. By Lemma~\ref{lemma-gamma-monotone-typeable} also $Γ; L ⊢ Q ▹ Δ$, that is, $Γ; L ⊢ \varp\{Q/X\} ▹ Δ$.

    \Case{\defnr}   We have $ Γ, X:(A, I,  Δ'); L  ⊢ \RECN YiePR  ▹  Δ $. By
    typing we have $Γ, X:(A, I,  Δ'), \AT{Y}{(L ',   Δ)};  L'  ⊢ P  ▹  Δ$
    and $Γ;  L'  ⊢ R  ▹  Δ$ for some $L' ⊇ L$.   Using $  Γ  ; I ⊢ Q ▹   Δ'$, by the induction hypothesis $Γ,
    \AT{Y}{(L ',   Δ)};  L'   ⊢ P\{Q/X\}  ▹  Δ$ and $Γ;  L'  ⊢R\{Q/X\}  ▹  Δ$ ,
    which in turn yields $ Γ; L  ⊢  \RECN YiePR\{Q/X\} ▹  Δ $.

    \Case {\varr} 
    Suppose first $X\not= Y$; then by typing we have 
    \[
     \frac{
    L   ⊆  L '  \qquad  \domain (Δ) =\tilde k  }{
            Γ,Y:(A', I', Δ), X:t;  L 
           ⊢ \VAR Y{\tilde k}  ▹  Δ     }
        \;\;,
    \]
    so by \varr also
    \[
          Γ,Y:(A',I', Δ);  L 
           ⊢ \VAR Y{\tilde k}\{Q/X\}  ▹  Δ\;.
    \]
    If on the other hand $ X = Y$ we have by typing
     \[
     \frac{
    L   ⊆ I ⊆ A  \qquad  \domain (Δ) =\tilde k   }{
            Γ, X:(I, A, Δ');  L 
           ⊢ \VAR X{\tilde k}  ▹  Δ'
         }
        \;;
    \]
    where necessarily $\Delta'=\Delta$. 
     We have by assumption $Γ; I ⊢ Q ▹ Δ'$. By Lemma~\ref{lemma-gamma-monotone-typeable} also $Γ; L ⊢ Q ▹ Δ'$, that is, $Γ; L ⊢ \varp\{Q/X\} ▹ Δ'$.

    \Case{\defr}  We have $ Γ  , X:(A, I,  Δ'); L ⊢ \REC YP ▹ Δ$. We find by
    typing $ Γ  , X:(A, I,  Δ'), Y:( A',  I',  Δ); L ⊢ P ▹ Δ$ with $ L ⊆
    I'$, hence by the induction hypothesis $ Γ  ,Y:( A',  I',  Δ); L ⊢ P\{Q/X\} ▹ Δ$, and so by
    \defr $ Γ  ; L ⊢  (\REC YP)\{Q/X\} ▹ Δ$. 

    \Case{\ifr} Immediate from the induction hypothesis.
    \end{proof*}


    \begin {lem}[Term variable substitution]
    If $Γ; L ⊢ P ▹ Δ$ then also $Γ; L ⊢ P\{v/x\} ▹ Δ$.
    \label {lemma-variable-substitution}
    \end {lem}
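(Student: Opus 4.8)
The plan is to prove this by structural induction on the derivation of $\Gamma; L \vdash P \triangleright \Delta$, in exactly the style of the two preceding substitution lemmas. The guiding observation is that value substitution $\{v/x\}$ is a purely data-level operation: it commutes with every process constructor, and the only syntactic positions it can actually change are the data expression of a send prefix $\SEND k e.P$ and the guard of a conditional; it never touches a channel name, a session type, the response environment $L$, or the process-variable environment $\Gamma$. Since the typing rules of Fig.~\ref{fig:type-system} never constrain the data expression of a send or a conditional in any way (we only assume a total evaluation relation $e \Downarrow v$, never a typing of values), the entire typing skeleton is insensitive to the substitution, and the induction will go through mechanically once this is made precise.

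First I would dispatch the base cases \inactr, \varr, and \varnr. Here $\{v/x\}$ acts as the identity on the subject process --- in particular $\varp\{v/x\} = \varp$, since variable substitution can never affect channels --- while all the side-conditions ($L = \emptyset$ for \inactr; $L \subseteq I \subseteq A$ for \varr; $L \subseteq L'$ for \varnr; $\dom(\Delta) = \tilde k$) refer only to $L$, $\Gamma$, and $\Delta$; hence the very same rule instance derives $\Gamma; L \vdash P\{v/x\} \triangleright \Delta$. For the inductive cases \sendr, \recvr, \branchr, \selr, \concr, \defnr, \defr, and \ifr, I would apply the induction hypothesis to each premise, use that $\{v/x\}$ distributes over the constructor in question, and re-apply the same rule; the environments $\Gamma$, $L$, $\Delta$ occurring in each premise are literally those of the original derivation, so no side-condition needs to be re-checked. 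The one place to watch is the binder rule \recvr\ for $\RECEIVE k x.P$: here $x$ is bound in $P$, but because $v$ is a value it contains no free variables, so there is no capture and no $\alpha$-renaming is required, and the case reduces to a single appeal to the induction hypothesis just like the others.

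I do not expect a genuine obstacle in this lemma; it is the most routine statement in the section. The only part with any real content is the bookkeeping verification that every typing rule genuinely ignores data expressions and that $\{v/x\}$ leaves $\Gamma$, $L$, and $\Delta$ unchanged --- once that is spelled out, the remaining induction is purely clerical and can be written in a few lines following the pattern of Lemma~\ref{lemma-process-variable-substitution}.
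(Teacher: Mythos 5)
Your proof is correct and matches the paper's approach: the paper discharges this lemma with exactly the ``straightforward induction'' on the typing derivation that you spell out, relying on the fact that value substitution never touches channels, $\Gamma$, $L$, or $\Delta$. Your extra care about the binder in \recvr{} and the identity behaviour on $\varp$ is fine bookkeeping but introduces nothing beyond what the paper's intended argument already covers.
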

    \begin {proof*}
    Straightforward induction.
    \end {proof*}

        For technical reasons, we need the following two Lemmas on the presence or
    absence of session names in a typable process. 


    \begin{lem}
    If $Γ; L ⊢ P ▹ Δ$, $k\in\dom(\Delta)$, and $Δ (k) \not=\one$ then $ k∈\fn(P)$.
    \label {lemma-delta-exact-outside-end}
    \end{lem}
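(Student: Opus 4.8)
The statement to prove is Lemma~\ref{lemma-delta-exact-outside-end}: if $Γ; L ⊢ P ▹ Δ$, $k\in\dom(\Delta)$, and $Δ(k)\not=\one$, then $k\in\fn(P)$. The plan is to proceed by induction on the typing derivation $Γ; L ⊢ P ▹ Δ$, doing a case analysis on the last rule applied. In each case I would assume $k\in\dom(\Delta)$ with $Δ(k)\neq\one$ and argue that $k$ must occur free in $P$, appealing to the induction hypothesis on the premises.

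First I would handle the prefix rules \sendr and \recvr. In \sendr, the conclusion is $Γ;L ⊢ \SEND k{e}.P ▹ Δ, k:!.T$; here the channel $k$ mentioned in the type environment is exactly the one syntactically present in the prefix $\SEND k{e}.P$, so $k\in\fn(\SEND k{e}.P)$ trivially. For any \emph{other} channel $h\neq k$ with $(Δ)(h)\neq\one$, the induction hypothesis on the premise $Γ;L ⊢ P ▹ Δ, k:T$ gives $h\in\fn(P)\subseteq\fn(\SEND k{e}.P)$. The same reasoning covers \recvr, \branchr, and \selr\ — in each, the distinguished channel is syntactically present in the communication prefix, and any other channel is handled by the induction hypothesis. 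For \ifr, both branches are typed under the same $Δ$, so the induction hypothesis on either premise gives $k\in\fn(P)\subseteq\fn(\ifp)$.

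For \concr\ (parallel composition), we have $Δ = Δ_1, Δ_2$ with $Γ;L_i ⊢ P_i ▹ Δ_i$; if $k\in\dom(Δ)$ with $Δ(k)\neq\one$, then $k\in\dom(Δ_j)$ for exactly one $j$, and since $Δ_j(k)=Δ(k)\neq\one$, the induction hypothesis gives $k\in\fn(P_j)\subseteq\fn(P_1\PAR P_2)$. For \inactr, the hypothesis $Δ$ completed means $Δ(k)=\one$ whenever defined, so the precondition $Δ(k)\neq\one$ is vacuous and there is nothing to prove. For the recursion rules \defr, \defnr, \varr, \varnr: in \varr\ and \varnr, the conclusion is $Γ';L ⊢ \varp ▹ Δ$ with side condition $\dom(Δ)=\tilde k$, and by definition $\fn(\varp)=\tilde k=\dom(Δ)\ni k$, so $k\in\fn(P)$ unconditionally. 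In \defr\ and \defnr, the body $P$ (resp.~$P$ and $Q$) is typed under the same $Δ$, so the induction hypothesis on a premise gives $k\in\fn$ of the body, which is a subterm of $\recp$ (resp.~$\recnp$), hence $k\in\fn(\recp)$.

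The only mildly delicate point — and the place I would be most careful — is the recursion-variable cases \varr\ and \varnr, where $\fn(\varp)$ is \emph{defined} to be exactly $\tilde k$ (the channels listed at the variable), and the typing rules carry the side condition $\dom(Δ)=\tilde k$. This is precisely why the authors annotate process variables with their active channels: it makes the free-name set of a process variable match its session environment domain, so the lemma holds even for the ``bare'' process variable with no visible prefix. Once that design choice is noted, every case is immediate from either direct syntactic inspection or the induction hypothesis, so the proof is short and routine; I would simply write ``By induction on the typing derivation'' and dispatch the cases, calling out only the \varr/\varnr\ cases explicitly.
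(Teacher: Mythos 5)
Your proof is correct and matches the paper's approach: the paper dispatches this lemma with the single line ``Straightforward induction,'' and your induction on the typing derivation, with the observation that $\fn(\varp)=\tilde k=\dom(\Delta)$ in the \varr/\varnr\ cases is exactly the point that makes it go through. Nothing is missing.
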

    \begin {proof*}
    Straightforward induction.
    \end {proof*}

    \begin{lem}
     Suppose $ Γ; L ⊢ P ▹ Δ, k:T$  with $Δ, k:T$ balanced, $T\not =\one$, and $\overline k\not  ∈\fn (P)$. Then $\overline k\not ∈\domain (Δ)$.
    \label {lemma-1}
    \end{lem}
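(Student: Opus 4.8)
The plan is to argue by contradiction and reduce the statement to Lemma~\ref{lemma-delta-exact-outside-end}. Suppose, contrary to the claim, that $\overline k\in\dom(\Delta)$, so that $\Delta = \Delta', \overline k\colon U$ for some type $U$. Since the environment $\Delta, k\colon T$ is balanced and contains both $k\colon T$ and $\overline k\colon U$, the definition of balancedness gives $T\bowtie U$.

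Next I would show that $U\neq\one$. This is an inspection of the duality rules \eqref{eq-duality}: the only rule whose conclusion mentions $\one$ is the axiom $\frac{}{\one\bowtie\one}$ (and its symmetric reading), so $T\bowtie U$ together with the hypothesis $T\neq\one$ forces $U\neq\one$ as well. (Under the equi-recursive view this is unaffected by recursion, since we identify each type with its unfolding and so may expose the top-level constructor.)

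Finally I would apply Lemma~\ref{lemma-delta-exact-outside-end} to the hypothesis $\Gamma; L \vdash P \triangleright \Delta, k\colon T$, taking the channel in question to be $\overline k$: it lies in $\dom(\Delta, k\colon T)$ and $(\Delta, k\colon T)(\overline k) = U \neq \one$, so the lemma yields $\overline k\in\fn(P)$. This contradicts the hypothesis $\overline k\notin\fn(P)$, completing the proof.

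There is essentially no hard step here: the only thing requiring care is the duality argument that $\one$ is self-dual and dual to nothing else, after which Lemma~\ref{lemma-delta-exact-outside-end} does all the work. The statement is really a packaging lemma that combines balancedness with the ``$\Delta$ is exact outside of $\one$'' property.
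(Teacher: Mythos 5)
Your proposal is correct and follows essentially the same route as the paper: assume $\overline k\in\domain(\Delta)$, use balancedness (plus $T\neq\one$ and the fact that $\one$ is dual only to $\one$) to conclude $\Delta(\overline k)\neq\one$, and then invoke Lemma~\ref{lemma-delta-exact-outside-end} to contradict $\overline k\notin\fn(P)$. The only difference is that you spell out the duality step that the paper's proof leaves implicit.
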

    \begin{proof*}
     Supposed for a contradiction $\dual k∈\domain (Δ)$. Because $Δ,k:!.T$ balanced, $Δ (\overline {k})\not = \one$. By Lemma~\ref{lemma-delta-exact-outside-end} we thus have $\overline k\in\fn(P)$; contradiction.
    \end{proof*}

    We can now formulate the core lemma which will subsequently be used to prove subject reduction. For the formulation of the lemma, we will slightly abuse notation and consider the range of the $\selected (-)$ operator as an empty or singleton set rather than the empty or singleton sequence as which it was originally defined.

    \begin {lem} 
    Suppose  that $ Γ ; L ⊢ P ▹  Δ$ with 
    $P
    \transition{λ}  Q$. Then there exists a type transition $Δ\transition
    {δ} Δ'$ with $ δ\correspond λ$, such that $
    Γ+\selected( δ); (L∖\responded(δ)) ∪\requested (δ) ⊢ Q ▹ Δ'$.
     Moreover, if  $Δ$ balanced, then also $Δ'$ balanced.
    \label{lemma-subject-reduction} 
    \end {lem}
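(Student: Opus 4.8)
The plan is to argue by induction on the derivation of the transition $P\transition{\lambda}Q$, in each case inverting the syntax-directed typing of $P$ to recover the shape of $\Delta$ and the premises of the last typing rule. The four base cases are the prefix rules. For \rulename{\B-Sel}, with $P=\SEL{k}{l_j}{P'}$, the typing must end in \selr, so $\Delta=\Delta_0,k\colon\selst$ with premise $\Gamma+l_j;(L\setminus l_j)\cup L_j\vdash P'\triangleright\Delta_0,k\colon T_j$; I take $\delta=k\colon\oplus l_j[L_j]$, so that $\Delta\transition{\delta}\Delta_0,k\colon T_j=:\Delta'$ by \rulename{\D-Lift} and \rulename{\D-Par}, and since $\delta\correspond\lambda$ with $\selected(\delta)=l_j$, $\responded(\delta)=\{l_j\}$, $\requested(\delta)=L_j$, the premise is literally the judgement required of $Q=P'$. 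The case \rulename{\B-Bra} is identical, using the $i$-th premise of \branchr. For \rulename{\B-Out} and \rulename{\B-In} one takes $\delta=k\colon{!}$ resp.\ $\delta=k\colon{?}$, for which $\selected$, $\responded$ and $\requested$ are all empty, so the sole premise is already the required judgement; in the \rulename{\B-In} case $Q=P'\{v/x\}$, so one first invokes term-variable substitution (Lemma~\ref{lemma-variable-substitution}) on the premise.

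For the inductive cases with parallel composition, \rulename{\B-ParL} (and its symmetric form) invert \concr, giving $\Gamma;L_1\vdash P_1\triangleright\Delta_1$, $\Gamma;L_2\vdash P_2\triangleright\Delta_2$ with $L=L_1\cup L_2$ and $\Delta=\Delta_1,\Delta_2$. Applying the induction hypothesis to the moving component yields $\Delta_1\transition{\delta}\Delta_1'$ with $\delta\correspond\lambda$ and a refined typing of its reduct; I lift this to $\Delta\transition{\delta}\Delta_1',\Delta_2$ by \rulename{\D-Par} (legal since type-environment transitions preserve domains, Lemma~\ref{lemma-type-transitions-preserve-domains}), re-type the stationary $P_2$ under the enlarged $\Gamma+\selected(\delta)$ by Lemma~\ref{lemma-gamma-monotone-typeable} (a no-op unless $\delta$ is a selection or branch), and recombine with \concr. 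The resulting response environment is $((L_1\setminus\responded(\delta))\cup\requested(\delta))\cup L_2$, which contains but may strictly exceed $(L\setminus\responded(\delta))\cup\requested(\delta)$, so I finish with weakening of pending responses (Lemma~\ref{lemma-L-weakening}). The communication rules \rulename{\B-Com1}/\rulename{\B-Com2} are analogous but use \rulename{\D-Com1}/\rulename{\D-Com2} to assemble the $\tau$- resp.\ $\tau$-with-label transition from the two component transitions (both of which report the same $\selected(\delta)$, so \concr applies with a common process-variable environment); here the set identity $((L_1\setminus l)\cup L^1)\cup((L_2\setminus l)\cup L^2)=((L_1\cup L_2)\setminus l)\cup(L^1\cup L^2)$ makes the response environment come out exactly right, with no weakening needed. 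Finally \rulename{\B-CondT}/\rulename{\B-CondF} invert \ifr and hand the relevant branch straight to the induction hypothesis.

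The genuinely new arguments are in the recursion rules. For \rulename{\B-Rec}, $P=\REC{X}{P_0}$ is typed by \defr with premise $\Gamma,X\colon(\emptyset,I,\Delta);I\vdash P_0\triangleright\Delta$ and side condition $L\subseteq I$; re-applying \defr gives $\Gamma;I\vdash\REC{X}{P_0}\triangleright\Delta$, so by process-variable substitution (Lemma~\ref{lemma-process-variable-substitution}) one obtains $\Gamma;I\vdash P_0\{\REC{X}{P_0}/X\}\triangleright\Delta$, which is exactly the subject of the transition premise of \rulename{\B-Rec}; the induction hypothesis then yields $\Delta\transition{\delta}\Delta'$ with $\Gamma+\selected(\delta);(I\setminus\responded(\delta))\cup\requested(\delta)\vdash Q\triangleright\Delta'$, and since $L\subseteq I$ weakening (Lemma~\ref{lemma-L-weakening}) gives the statement. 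The case \rulename{\B-Prec0} simply applies the induction hypothesis to the continuation premise of \defnr and weakens $L$ against the $L'\supseteq L$ supplied by that rule. The case \rulename{\B-PrecN} mirrors \rulename{\B-Rec}: the decremented primitive recursion is re-typed by \defnr (the rule ignores the numeric annotation), substituted for $X$ by Lemma~\ref{lemma-process-variable-substitution} and then acted on by the term-variable substitution of the counter via Lemma~\ref{lemma-variable-substitution} (the two substitutions commute, touching disjoint variable kinds with neither substituted term mentioning the other variable), after which the induction hypothesis and weakening apply.

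The ``moreover balance'' clause is discharged uniformly: in each case one has already assembled the type-environment transition $\Delta\transition{\delta}\Delta'$, and balance of $\Delta'$ follows from Lemma~\ref{type-transitions-preserve-balance} once $\overline{\subject(\delta)}\notin\domain(\Delta)$ is known --- which is automatic when $\subject(\delta)=\tau$, and otherwise follows, as in the proof of the standard subject-reduction theorem (Theorem~\ref{theorem-standard-subject-transition}), from the side conditions $\overline{k}\notin\fn(\cdot)$ carried by the transition rules together with Lemmas~\ref{lemma-delta-exact-outside-end} and \ref{lemma-1}; balance of the components in the inductive cases is inherited from the induction hypothesis. I expect the main obstacle to be getting the simultaneous bookkeeping of $\Gamma$ and $L$ right in the parallel and communication cases --- in particular noticing that \rulename{\B-ParL} reassembles a typing whose response environment is too large and must be trimmed by weakening, whereas \rulename{\B-Com2} fits on the nose --- and making the recursion cases go through cleanly via the substitution lemmas even though unfolding resets the accumulator to $\emptyset$.
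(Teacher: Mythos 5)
Your proposal is correct and follows essentially the same route as the paper's proof: induction on the transition derivation, with the same supporting lemmas (response-set weakening, $\Gamma$-weakening, process- and term-variable substitution, domain preservation and balance preservation of type transitions), including the trimming by weakening in the \rulename{\B-ParL} case and the exact set identity in \rulename{\B-Com2}. The only deviations are immaterial reorderings --- weakening after rather than before the induction hypothesis in the recursion cases, and commuting the two substitutions in \rulename{\B-PrecN} --- which do not affect the argument.
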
 
    \begin{proof*} By induction on the derivation of the transition.

       \Case{\rulename{\B-Out}} We have $\sendp\transition {k!v} P$ with $\overline{k}\not\in P$ and
       $ Γ; L ⊢ \sendp  ▹  Δ, k: !.T$.  By typing $  Γ
       ; L   ⊢ P  ▹  Δ, k:T $. By \rulename{\D-Lift} we have $k:!.T\transition {k:!} k: T$. 
     By \rulename{\D-Par} $ Δ,k:!.T\transition {k:!}
         Δ,k:T$; 
     Observing that  $k:!\correspond k! v$ and 
     $\responded(k:!) = \selected(k:!) = \requested(k:! )= ∅$ we have found the
       requisite type transition.
       
        Now suppose $Δ, k:!.T$ balanced; we must show $Δ, k: T$ balanced. It is sufficient to show $\dual{k}\not∈ \domain (Δ)$. But  this follows from Lemma~\ref{lemma-1}.

     \Case{\rulename{\B-In}}
     We have $\receivep\transition {k?v} P\{v/x\}$ with $\overline{k}\not\in \fn(P)$ and $ Γ; L  ⊢
     \receivep  ▹ Δ, k: ?.T$. 
      By typing $ Γ ; L  ⊢ P  ▹ Δ, k: T $. 
    By  \rulename{\D-Lift} and \rulename{\D-Par},  $Δ,
      k:?.T\transition {k:?} Δ, k:T$. By Lemma
     \ref{lemma-variable-substitution} we have 
     $ Γ ; L ⊢ P\{v/x\}  ▹ Δ, k: T$.
     Observing  that  $\requested(k:?T[L']) = \selected(k:?T[L']) = \responded(k:?T[L']) = ∅$ and that $k:?\correspond k?  v$
      we have found the requisite transition and typing.
       Preservation of balance follows from Lemma~\ref{lemma-1}.
     
     \Case{\rulename{\B-Bra}} We have $\branchps\transition {k\& l_i} P_i$ and $ Γ; L ⊢{\BRANCHS k {l_i}{P_i}}_{i\in I} ▹ Δ, k:\branchst$.  By typing we have $Γ +
     l_i; (L  ∖ \{l_i\}) ∪ L_i ⊢ P_i ▹  Δ, k: T_i$. By \rulename{\D-Lift} and \rulename{\D-Par} we have $Δ,k:\branchst \transition { k:\& l _i[L_i]}
     Δ,  k: T_ i$. Observing that $\requested(k:\& l_i[L_i])=L_i$,
     $\selected(k:\& l_i[L_i])=\responded(k:\& l_i[L_i])=\{l_i\}$ and that $k:⊕ l_i[L_i]\correspond k \& l_i $, we have found
     the requisite type transition.    
         Preservation of balance  follows from Lemma~\ref{lemma-1}.

    \Case{\rulename{\B-Sel}} We have $\selp\transition { k ⊕ l_i} P$ and $
      Γ; L  ⊢ \SEL{k}{l_i}{P}  ▹  Δ, k:\selst $. By typing $ Γ +
     l_i; (L  ∖ \{l_i\}) ∪ L_i ⊢ P ▹  Δ, k: T_i$. By  \rulename{\D-Lift} and \rulename{\D-Par} we have
     \[  Δ, ⊕\{l_i [L_i].\; T_i\}_{i ∈ I}\transition{k:⊕
     l_i[L_i]}  Δ,T_i\;.\]   Observing that $\requested(k:⊕ l_i[L_i])=L_i$,
     $\selected(k:⊕ l_i[L_i])=\responded(k:⊕ l_i[L_i])=\{l_i\}$ and that $k:⊕ l_i[L_i]\correspond k ⊕ l_i $, we have found
     the requisite type transition.   
          Preservation of balance follows from Lemma~\ref{lemma-1}.

     \Case{\rulename{\B-ParL}} We have $P |  P'\transition {λ}  Q | P'$  with  $\dual {\subject(λ)}\not ∈\fn( P')$ and $ Γ; L ⊢ P |  P' ▹  Δ $.  By typing we
     have for some  $L_1 ∪ L_2 = L$  and $Δ_1 ∪ Δ_2$ that $Γ; L
     _1 ⊢ P  ▹  Δ_1$ and $ Γ; L _2 ⊢ P'  ▹  Δ_2$.  
      By the
     induction hypothesis, we have a transition $ Δ_1\transition {δ}
     Δ_1'$ with  $ Γ +\selected(δ); L_1 ∖\responded(δ) ∪\requested (δ) ⊢ Q ▹ Δ_1'$ and $δ\correspond λ$. By
     Lemma~\ref{lemma-gamma-monotone-typeable} we find also $
     Γ  +\selected(δ); L_2 ⊢ P' ▹ Δ_2$. By Lemma~\ref{lemma-type-transitions-preserve-domains}
     $ \dom(Δ_1) =\dom
     (Δ_1')$ so $Δ'_1, Δ_2$ is defined, and hence by \concr\ we have $
      Γ +\selected(δ) ; L_1 ∖\responded(δ) ∪\requested (δ)   ∪ L_2
     ⊢  Q | P' ▹ Δ'_1, Δ_2$.  This is not exactly the form we need, but observing that
     \[
     (L_1 ∪ L_2) ∖\responded(δ) ∪\requested (δ) ⊆ (L_1 ∖\responded(δ)) ∪\requested (δ)   ∪ L_2,
     \]
    we find again by Lemma~\ref{lemma-L-weakening} that
    $ Γ +\selected(δ) ;  (L_1 ∪ L_2)  ∖\responded(δ) ∪\requested (δ) ⊢  Q | P' ▹ Δ'_1, Δ_2$. By \rulename {\D-Par} $Δ_1, Δ_2\transition {δ} Δ'_1, Δ_2$, and we have found the requisite type transition.

    Now suppose $Δ_1, Δ_2$ balanced. By Lemma~\ref{type-transitions-preserve-balance} it is sufficient to prove that $\dual 
     {\subject  (δ)}\not ∈\domain (Δ_1,Δ_2)$.  If $\subject (δ) = τ$ this is trivial, so say $\subject (δ) = k$ and suppose for a contradiction $\dual k \in \domain(\Delta_1,\Delta_2)$. We must have $\delta = k:\rho$ and because $\delta\correspond\lambda$
     we must have $\subject(\lambda)=\subject(\delta)=k$. 
     By Lemma~\ref{lemma-transition-no-coname} $\dual k\not\in\fn(Q|P')$.  
      By Lemma~\ref{lemma-type-transition-form} we have $ Δ_1 = Δ_1'', k: S$ 
      with $ S\not  =\one$. 
    %
    Because $\Delta_1,\Delta_2$ balanced, $(\Delta_1,\Delta_2)(\dual k)\bowtie S$ and so $(\Delta_1,\Delta_2)(\dual k)\not=\one$. 

    Suppose first $\dual k\in\domain(\Delta_1)$. Then $\dual k\in\domain(\Delta_1'')$, so also $\Delta''(\dual k)\not=\one$, and it follows that
      $
      \Delta_1'(\dual k)=\Delta_1''(\dual k)\not=\one
      $. By Lemma~\ref{lemma-delta-exact-outside-end} $\dual k\in\fn(Q)$, contradicting $\dual k\not\in\fn(Q|P')$. 
      
    Suppose instead $\dual k\in\domain(\Delta_2)$. Then immediately by Lemma~\ref{lemma-delta-exact-outside-end}
    $\dual k\in\fn(P')$, contradicting $\dual k\not\in\fn(Q|P')$.
     
    \Case{\rulename {\B-Com1}}
    We have \[
    \frac{ P_1  \transition {\overline{k}!v}
            P_1'\qquad P_2 \transition {k?v} P_2' }
        { P_1 | P_2\transition {\tau}  P_1' | P_2' } 
    \]
    and
     \[  \frac{
       Γ; L _1 ⊢ P_1  ▹  Δ_1 \qquad Γ; L _2 ⊢ P_2  ▹  Δ_2   }
       {
      Γ; L _1 ∪ L _2 ⊢ P_1 \PAR P_2  ▹  Δ_1,Δ_2
      }\] 
     By the induction hypothesis we find $ Δ_i\transition  { δ_i} Δ'_i $ s.t.~ $
      Γ +\selected(δ_ i) ; L_i ∖\responded(δ_i) ∪\requested(δ_i) ⊢ P_i  ▹  Δ'_i$
      with  $δ_1\correspond\overline {k}  !v$ and $δ_2\correspond k? v$. It
      follows that $δ_1 =k: !$ and $δ_ 2 = k:? $
      whence $\responded (δ_ 1) =\requested (δ_2) =\responded (δ_2) = \requested (δ_1) = ∅ $ and $\selected (δ_1) =\selected (δ_2) =  ε$.
      By Lemma~\ref{lemma-type-transitions-preserve-domains}
      $\Delta_1',\Delta_2'$ defined, and so 
       by \rulename{\D-Com1} we have $ Δ_1, Δ_2\transition {τ} Δ_1', Δ_2'$.
       Noting that $ τ\correspond  τ$ and that $Γ +\selected (δ_1) +\selected (δ_2) = Γ$, we have the required type transition. Since $\subject (τ) =  τ$ and so $\dual{\subject (τ)}\not ∈\domain (Δ_1, Δ_2)$,  it follows from Lemma \ref{type-transitions-preserve-balance} that $ Δ_1', Δ_2'$ is balanced when $Δ_1, Δ_2$ is.
       
     \Case{\rulename{\B-Com2}}
      We have \[ \frac{P_1  \transition {\overline{k} ⊕ l} P'_1\qquad  P_2
      \transition {k\& l} 
      P'_2
    }{ P_1 | P_2 \transition {\tau:l}  P_1' |  P_2'} \]
      and
     \[  \frac{
       Γ; L _1 ⊢ P_1  ▹  Δ_1 \qquad  Γ; L _2 ⊢ P_2  ▹  Δ_2  }{
      Γ; L _1 ∪ L _2 ⊢ P_1 \PAR P_2  ▹  Δ_1,Δ_2
      }\] 
      By induction we find $ Δ_i\transition  { δ_i} Δ'_i $ s.t.~ $
      Γ +\selected(δ_ i) ; L_i ∖\responded(δ_i) ∪\requested(δ_i) ⊢ P_i  ▹  Δ'_i$
      with  $δ_1\correspond\overline {k} ⊕ l$ and $δ_2\correspond k\& l$. It
      follows that for some $L_1', L_2'$ we have $δ_1 =\overline {k}: ⊕ l [L'_1]$ and $δ_ 2 = k:\& l [L'_2]$, and so $\requested (δ_ 1) =  L'_1$ and $\requested (δ_2) = L'_2$, 
       $\responded (δ_1) =\responded (δ_2) =\{l\}$, and
       $\selected (δ_1) =\selected (δ_2) = l$.
    By Lemma~\ref{lemma-type-transitions-preserve-domains}
      $\Delta_1',\Delta_2'$ defined, and so we find a transition $
    Δ_1, Δ_2\transition {τ: l, L_1' ∪ L_2'} Δ_1', Δ_2'$ by \rulename {\D-Com2}.
     By
      \concr\ we find
      $
      Γ + l;
      L_1 ∖\{l\} ∪ L'_1  ∪
      L_2 ∖\{l\} ∪ L'_2  ⊢ 
      P'_1 \PAR P'_2  ▹  Δ'_1,Δ'_2
      $.
       Noting that $ τ: l, L_1' ∪ L_2'\correspond  τ:l$ and 
       \begin{eqnarray*}
       L_1 ∖\{l\} ∪ L'_1  ∪
    L_2 ∖\{l\} ∪ L'_2 &=&  (L_1 ∪ L_2) ∖\{l\} ∪ L_1' ∪ L_2' \\
    &=&
    ( L_1 ∪ L_2) ∖\responded (τ: l, L_1' ∪ L_2') ∪
    \requested (τ: l, L_1' ∪ L_2'),
    \end{eqnarray*}
    we have the required type transition.
    Since $\subject (τ: l, L_1' ∪  L_2') = τ$ and so  $\dual{\subject (τ: l, L_1' ∪  L_2')}\not ∈\domain (Δ_1, Δ_2)$,  it follows from Lemma \ref{type-transitions-preserve-balance} that $ Δ_1', Δ_2'$ is balanced when $Δ_1, Δ_2$ is.

    \Case{\rulename{\B-Rec}}
     We have
    \[ \frac{
                    P\{\recp/X\}\transition{λ} Q }{ \recp\transition {λ} Q }
    \]
    and
    \[
      \frac{
         Γ, \AT{X}{(∅, I,   Δ)};  I
        ⊢ P  ▹   Δ \qquad L   ⊆  I
      }{
         Γ; L  ⊢ \recp  ▹   Δ
      }
    \]
    It follows by \defr\ that also $Γ; I  ⊢ \recp  ▹   Δ$ and by Lemma~\ref{lemma-gamma-monotone-typeable} that 
    $     Γ, \AT{X}{(∅, I,   Δ)};  L    ⊢ P  ▹   Δ$.  It then follows by Lemma~\ref{lemma-process-variable-substitution} that 
    \[
         Γ; L ⊢ P\{\recp/X\}▹ Δ.
    \]
    By the induction hypothesis we find a balance-preserving type transition
    $\Delta\transition{\delta}\Delta'$ with $\delta\correspond\lambda$ and $\Gamma+\selected(\delta);L\setminus{\responded(\delta)}\cup\requested(\delta)
    \proves Q\triangleright \Delta'$. 

    \Case{\rulename{\B-Prec0}}
    We have 
    \[
    \frac{ e\Downarrow 0 \qquad Q\transition{ λ}  R }{
    \RECN XiePQ  \transition {λ} R } 
    \]
    and 
    \[
    \frac{
      Γ, X:(L ',   Δ);  L'
        ⊢ P  ▹  Δ 
        \qquad 
         Γ;  L'  
        ⊢ Q  ▹  Δ 
        \qquad
        L ⊆ L'
      }{
         Γ; L  ⊢ \recnp  ▹  Δ
      }
      \;. 
    \]
     By Lemma~\ref{lemma-gamma-monotone-typeable} we have also
     $  Γ;  L  
        ⊢ Q  ▹  Δ $, and so by the induction hypothesis we find the required balance-preserving type transition.
        
    \Case{\rulename{\B-PrecN}}
    We have 
    \[
    \frac{ 
        e\Downarrow n+1\qquad P\{n/i\}\{\RECN XinPQ/X\}\transition {λ} R 
    }{
        \recnp\transition {λ} R 
    } %
    \]
    and again
    \[
    \frac{
      Γ, X:(L ',   Δ);  L'
        ⊢ P  ▹  Δ 
        \qquad 
         Γ;  L'  
        ⊢ Q  ▹  Δ 
        \qquad
        L ⊆ L'
      }{
         Γ; L  ⊢ \recnp  ▹  Δ
      }
      \;. 
    \]
    By \defnr\ it follows that $Γ; L'  ⊢ \RECN Xi{n}PQ  ▹  Δ$. By Lemmas~\ref{lemma-variable-substitution} and \ref{lemma-L-weakening} we have
        $ Γ, X:(L ',   Δ); L  ⊢ P\{n/i\}  ▹ Δ$. Finally, by Lemma~\ref{lemma-process-variable-substitution} we have 
        \[ Γ; L  ⊢ P\{n/i\}\{\RECN XinPQ/X\} ▹ Δ, \]
    and the requisite balance-preserving type transition follows by the induction hypothesis.

    \Case{\rulename{\B-CondT} and \rulename{\B-CondF}}
     We have,
     \[
     \frac{
    e \Downarrow \truek\qquad P\transition{λ} P'
    }{
    \ifp\transition {λ} P'
    }
    \]
    and
    \[
      \frac{
          Γ; L ⊢ P ▹ Δ\qquad
               Γ; L ⊢ Q ▹ Δ
        }{
         Γ; L ⊢\ifp ▹ Δ
        }\;,
    \]
    and the requisite balance-preserving type transition follows from the induction hypothesis. The other case is the same. 
    \end{proof*}
\fi

\begin {thm}[Subject reduction] 
Suppose  that $  · ; L ⊢ P ▹  Δ$ and $P
\transition{λ}  Q$. Then there exists a type transition $Δ\transition
{δ} Δ'$ with $ δ\correspond λ$, such that $
 · ; (L∖\responded(δ)) ∪\requested (δ) ⊢ Q ▹ Δ'$.
Moreover, if $Δ$ balanced then also $Δ'$ balanced.
\label{theorem-subject-reduction} 
\end {thm}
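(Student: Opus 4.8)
The plan is to derive this theorem as the $\Gamma = \cdot$ instance of the general, $\Gamma$-indexed statement Lemma~\ref{lemma-subject-reduction}. A direct induction on the derivation of $P\transition{\lambda}Q$ does not close on its own, because the case \rulename{\B-Rec} unfolds $\recp$ to $P\{\recp/X\}$ and, to retype that unfolding, the accumulator component of the process-variable environment must be allowed to have grown; hence the statement one actually proves by induction must carry an arbitrary $\Gamma$ and track how $\Gamma$ evolves along the transition --- exactly the form of Lemma~\ref{lemma-subject-reduction}, whose conclusion is $\Gamma+\selected(\delta);(L\setminus\responded(\delta))\cup\requested(\delta)\proves Q\has\Delta'$.

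First I would establish Lemma~\ref{lemma-subject-reduction} by induction on the transition derivation. For the prefix rules \rulename{\B-Out}, \rulename{\B-In}, \rulename{\B-Sel}, \rulename{\B-Bra}, invert the matching typing rule (\sendr, \recvr, \selr, \branchr), produce the type transition with \rulename{\D-Lift} followed by \rulename{\D-Par}, and check the bookkeeping: the values of $\selected$, $\requested$, $\responded$ on the type action coincide with the change the typing rule makes to the response environment $L$ (for \rulename{\B-In} one also invokes term-variable substitution, Lemma~\ref{lemma-variable-substitution}). The cases \rulename{\B-ParL}, \rulename{\B-Com1}, \rulename{\B-Com2} split the typing through \concr and recombine the sub-derivations using the weakening lemmas (Lemma~\ref{lemma-gamma-monotone-typeable} and Lemma~\ref{lemma-L-weakening}) together with the fact that type transitions preserve domains (Lemma~\ref{lemma-type-transitions-preserve-domains}). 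The recursion cases \rulename{\B-Rec}, \rulename{\B-Prec0}, \rulename{\B-PrecN} are closed by re-deriving the folded recursion (\defr{} resp.\ \defnr), weakening the accumulator, and substituting the recursive process for the process variable via Lemma~\ref{lemma-process-variable-substitution}; \rulename{\B-CondT} and \rulename{\B-CondF} are immediate. Preservation of balance in each case follows from Lemma~\ref{type-transitions-preserve-balance} once one checks, using Lemma~\ref{lemma-1}, that the co-channel of the transition subject is absent from the environment.

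I expect the main obstacle to be the general-recursion case \rulename{\B-Rec}: one must line up the premise of \defr{} --- which types the body $P$ with accumulator $\emptyset$ and invariant $I$ --- then weaken the accumulator (Lemma~\ref{lemma-gamma-monotone-typeable}) and substitute $\recp$ for $X$ (Lemma~\ref{lemma-process-variable-substitution}) so that the induction hypothesis applies to $P\{\recp/X\}\transition{\lambda}Q$, all while keeping the invariant/accumulator side-conditions of \varr{} consistent. This is precisely why the $\Gamma$-indexed formulation is needed. Granting Lemma~\ref{lemma-subject-reduction}, the theorem is immediate: instantiate $\Gamma=\cdot$; since $\cdot+\selected(\delta)=\cdot$ (adding labels to the empty map yields the empty map), the conclusion reduces to $\cdot;(L\setminus\responded(\delta))\cup\requested(\delta)\proves Q\has\Delta'$, and the balance clause carries over verbatim.
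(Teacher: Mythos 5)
Your proposal matches the paper's proof: the theorem is obtained there exactly as the $\Gamma=\cdot$ instance of the $\Gamma$-indexed Lemma~\ref{lemma-subject-reduction} ("Immediate from the Lemma"), and that lemma is proved, as you outline, by induction on the transition derivation, using \rulename{\D-Lift}/\rulename{\D-Par} for the prefix cases, the weakening lemmas (Lemmas~\ref{lemma-L-weakening} and~\ref{lemma-gamma-monotone-typeable}) with Lemma~\ref{lemma-type-transitions-preserve-domains} for \rulename{\B-ParL}/\rulename{\B-Com1}/\rulename{\B-Com2}, the substitution lemmas (Lemmas~\ref{lemma-variable-substitution} and~\ref{lemma-process-variable-substitution}) for the recursion cases, and Lemmas~\ref{type-transitions-preserve-balance} and~\ref{lemma-1} for balance. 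Your diagnosis of why the $\Gamma$-indexed formulation is needed (the \rulename{\B-Rec} unfolding) is also exactly the role it plays in the paper.
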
 
\begin{proof*}
Immediate from the Lemma~\ref{lemma-subject-reduction}.
\end{proof*}

\begin{exa}
\label{ex:typingliveness}
With the system of Figure \ref{fig:type-system},
the process $P(D)$ is typable wrt{.} the types given  in
Example~\ref{ex:tp}. The process $P(D_0)$ on the other hand is not: We have $\cdot; \emptyset ⊢ P(D) ▹ k: T_P, o^+:T_D, o^-:\overline{T_D}$, but the same does \emph{not} hold for $P(D_0)$. 
We also exemplify a typing judgment with 
non-trivial guaranteed responses.  The process $D$, the order-fulfillment part of $P(D)$, can in fact be typed
\[
\cdot;\{\SI\}⊢ D ▹ k: \mu t'.\mathop\oplus\{\DI.!.t', \SI.!.\one\},\; o^-:\overline{T_D}
\]
Note the left-most $\{\SI\}$, indicating intuitively that this process will eventually select $\SI$ in \emph{any} execution. 
The process $D$ 
has this property essentially because it is implemented by bounded recursion.
\qed
\end{exa}

\def\A{{\mathcal A}}

\section{Liveness}
\label{section-liveness}
\let\emphasise\emph We now prove that if a lock-free process is
well-typed under our liveness
typing system, then that process  is indeed live.

However, we will need a bit of a detour to \emph{define} lock-free processes: 
Lock-freedom is defined in terms of maximal transition sequences, which are in
turn defined using  (weak) fairness. For both lock-freedom and fairness, we 
will in turn need to speak about occurrences of the
communication actions $\SEND ke$, $\RECEIVE kx$, $\SEL kl$, and $\BRANCH k-$. 

The roadmap for this Section thus becomes:
\begin{enumerate}
  \item In Sub-section~\ref{sub:occ}, we identify key properties of occurrences;
    then use these properties to define fairness, maximal transition sequences,
    and lock-freedom; and finally, what it means for a process to be live. 
  \item In Sub-section~\ref{sub:decomp} 
    We establish decomposition properties of maximal transition sequences for
    parallel compositions, especially wrt.~the responses performed by such
    processes.
  \item 
    We build on these properties and the syntactic restrictions on primtive 
    recursion when in Sub-section~\ref{sub:primrec}, 
    we establish technical properties of primitive recursion. 
  \item Finally, can prove in Sub-section \ref{sub:live} first that well-typed open
    lock-free processes are constrained in what responses they might require by the 
    response invariant of its process variables on the one hand, and the 
    responses it will always do in the other. This is enough to prove that 
    well-typed closed lock-free processes are necessarily live. 
\end{enumerate}
    
\subsection{Occurrences and liveness}
\label{sub:occ}
Since we do not employ a structural congruence, defining occurrences and
liveness is
straightforward, and we simply assume we can identify occurrences across
transitions through residuals---a rigorous treatment is in 
\cite{DBLP:journals/corr/abs-0904-2340}; see also
\cite{DBLP:conf/concur/FossatiHY12}. 
Given an occurrence of an action $a$ in $P$, we 
may thus speak of that occurrence being 
\emph{preserved} by a transition $P\transition{\lambda} Q$ if it remains after
the transition, or being \emph{executed} if it is consumed by the transition.  
Similarly, we will say that an occurrence of an action is 
\emph{enabled} if it is executed by some transition, and 
    \emph{top-level} if it is not nested inside another action.

We will rely on the following properties of occurrences and transitions. 

\begin{lem}  
  \label{lem:occ} 
  Occurrences have the following properties.
\begin {enumerate}
\item  If an occurrence is enabled, it is also top-level.
\item If $P\transition{ λ} Q$ preserves a top-level occurrence of an action $a$
  in $P$, then that occurrence is also top-level in $Q$. 
\item If $P\transition{ λ} Q$ then there exists  an occurrence of an action $a$ in $P$ which is executed by that transition.
\end {enumerate}
    \label{lemma-occurrences} \end{lem}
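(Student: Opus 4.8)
The plan is to prove the three properties of Lemma~\ref{lemma-occurrences} by structural induction on $P$, respectively on the derivation of the transition $P \transition{\lambda} Q$, treating them essentially independently (though (1) may conveniently invoke the analysis underlying (3)). For property (3), the claim is that every transition consumes \emph{some} top-level action occurrence; I would proceed by induction on the derivation of $P \transition{\lambda} Q$ using the rules of Fig.~\ref{fig:transitions}. In the base cases \rulename{\B-Out}, \rulename{\B-In}, \rulename{\B-Sel}, \rulename{\B-Bra}, the executed occurrence is exactly the guarding prefix, which is manifestly top-level. In the inductive cases \rulename{\B-ParL} (and its symmetric variant), \rulename{\B-Rec}, \rulename{\B-Prec0}, \rulename{\B-PrecN}, \rulename{\B-CondT}, \rulename{\B-CondF}, the premise is a transition of a subterm (or an unfolding/substitution instance thereof), and the inductive hypothesis yields an executed occurrence inside that subterm; since parallel composition, recursion, primitive recursion and the conditional do not themselves guard actions (no action is "nested inside" a $\mid$, a $\REC{}{}$, a $\RECN{}{}{}{}{}$ or an $\ifk$ in the relevant sense), such an occurrence remains top-level in $P$. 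For \rulename{\B-Com1} and \rulename{\B-Com2}, the IH applied to either premise furnishes a suitable executed top-level occurrence. The only subtlety is bookkeeping about residuals across the substitutions in \rulename{\B-Rec}/\rulename{\B-PrecN}: an occurrence of $a$ in $P\{\recp/X\}$ (or in $P\{n/i\}\{\cdots/X\}$) either comes from the "outer" copy of $P$ or from a copy introduced by the substitution, and in either case it corresponds to a genuine occurrence in the original term $\recp$ (resp.\ $\recnp$); here I would appeal to the informal residual-tracking convention the authors adopt, citing \cite{DBLP:journals/corr/abs-0904-2340}.

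\textbf{For property (1),} that enabled implies top-level: an occurrence of $a$ is enabled iff some transition $P \transition{\lambda} Q$ executes it. By the analysis just carried out for (3), the occurrence executed by a transition is always the prefix at the head of one of the prefix rules, propagated outward through \rulename{\B-ParL}, \rulename{\B-Rec}, \rulename{\B-Prec*}, \rulename{\B-Cond*}, \rulename{\B-Com*}; none of these constructors place the executed occurrence under another action prefix. More carefully, I would argue the contrapositive: if an occurrence of $a$ in $P$ is \emph{not} top-level, i.e.\ it sits under some action prefix $b.(-)$, then in any derivation of a transition of $P$ the rule consuming $a$ would first have to consume or pass through $b$, but the prefix rules consume only the outermost prefix and the structural rules strictly decrease the term toward that prefix; a straightforward induction shows no transition of $P$ can reach and execute $a$ while $b$ still guards it. So a non-top-level occurrence is never enabled, which is the claim.

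\textbf{For property (2),} that top-level-ness is preserved by transitions that preserve the occurrence: suppose $o$ is a top-level occurrence of $a$ in $P$ and $P \transition{\lambda} Q$ preserves it (so $o$ persists, as some residual $o'$, in $Q$). I would again induct on the derivation of the transition. The prefix rules \rulename{\B-Out}, \rulename{\B-In}, \rulename{\B-Sel}, \rulename{\B-Bra} \emph{execute} the unique top-level prefix, so if a \emph{different} top-level occurrence $o$ survives then it must have been top-level-in-parallel already — but these rules have no parallel structure, so in fact no other top-level occurrence exists and the case is vacuous (or: the only preserved occurrences lie strictly inside the continuation, hence were not top-level in $P$, contradicting the hypothesis). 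In \rulename{\B-ParL} with $P = P_1 \mid P_2 \transition{\lambda} Q_1 \mid P_2 = Q$: if $o$ lies in $P_2$ it is untouched and stays top-level in $Q$; if $o$ lies in $P_1$, it is top-level there, and by IH applied to $P_1 \transition{\lambda} Q_1$ its residual is top-level in $Q_1$, hence in $Q_1 \mid P_2$. The cases \rulename{\B-Rec}, \rulename{\B-Prec0}, \rulename{\B-PrecN}, \rulename{\B-CondT}, \rulename{\B-CondF}, \rulename{\B-Com1}, \rulename{\B-Com2} follow the same pattern via the IH on the premise.

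\textbf{The main obstacle} I anticipate is not any single clever argument but making the informal notion of "occurrence" and "residual across a transition" precise enough that the substitution-heavy rules \rulename{\B-Rec} and \rulename{\B-PrecN} are handled rigorously — an occurrence in $P\{\recp/X\}$ must be traced back to $\recp$, and when $X$ occurs multiply (or not at all) in $P$ one must say which copy a given residual descends from. The authors explicitly punt on this, assuming occurrences can be identified across transitions through residuals and citing \cite{DBLP:journals/corr/abs-0904-2340} and \cite{DBLP:conf/concur/FossatiHY12}; so in the write-up I would likewise lean on that machinery rather than re-developing it, and simply check that the four conventions \eqref{conventions} on primitive recursion (no $\INACT$, no nested recursion, no inner parallel, only $X$ free) keep the body $P$ of a $\RECN{}{}{}{}{}$ free of parallel and recursive constructors, so that the "no action guards another" reasoning used above applies uniformly to the unfolded term as well.
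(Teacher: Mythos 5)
The paper never actually proves this lemma: it is stated as a list of properties the authors ``rely on'', justified only by the preceding paragraph in which occurrences, residuals, execution and preservation are treated informally and rigour is delegated to the cited works (\cite{DBLP:journals/corr/abs-0904-2340}, \cite{DBLP:conf/concur/FossatiHY12}). So there is no in-paper argument to compare against; your proposal supplies the routine inductions that the paper leaves implicit, and it does so along the lines the paper's definitions support. Your case analyses are sound: for (3) the prefix rules execute their guarding prefix and the structural rules (\rulename{\B-ParL}, \rulename{\B-Com1/2}, \rulename{\B-Rec}, \rulename{\B-Prec0/N}, \rulename{\B-CondT/F}) never introduce an action above the executed one, which also yields (1) in the strengthened form ``every executed occurrence is top-level''; for (2) the prefix-rule cases are indeed vacuous, since any occurrence surviving a prefix rule lies in the continuation and hence was nested under the executed action, and the parallel/conditional/recursion cases go through by the induction hypothesis because none of those constructors is itself an action. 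The only soft spot is exactly the one you flag --- the substitution cases \rulename{\B-Rec} and \rulename{\B-PrecN} require a precise residual relation to trace occurrences of $P\{\REC XP/X\}$ (and of $P\{n/i\}\{\RECN XinPQ/X\}$) back to the original term --- but since the paper itself leaves occurrences and residuals at this informal level and resolves the issue by citation, deferring to the same machinery is the appropriate move rather than a gap in your argument.
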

  \noindent Using occurrences, we define fairness following \cite{DBLP:conf/concur/FossatiHY12}.

\begin {defi} An infinite transition sequence $s = (P_i, λ _i)_{i\in\N}$
  \label{def:occ}
    is \emphasise {fair} iff whenever an occurrence of an action  $a$ occurs enabled in $P_n$ then
    some $m≥ n$ has $P_m\transition{ λ _m} P_{m+1}$ executing that occurrence.
\iflong
\label{def:fairness}
\end {defi}
\begin{exa}
Consider the process 
$
P = (\RECtriv X{ \;\SEL ka\;\VAR Xk{}} )
\PAR
(\RECtriv Y{ \;\SEL ha\;\VAR Yh{}} )
$. This process has an infinite execution
\begin{equation*}
  P\transition {k!a} P \transition {k!a}\cdots\;
  \label{eq:badinftrans}
\end{equation*} 
however, this execution is not fair, since the occurrence of $h!a$ is enabled in the initial
process $P$, but never executed. Conversely, the infinite execution 
\begin{equation}
  P\transition {k!a} P \transition {h!a} P \transition {k!a} \cdots
  \label{eq:inftrans}
\end{equation} 
(note the second label $h!a$) is fair, since, indexing the (identical) processes $P$ by $P=P_1, \ldots$, for
any $i$, the enabled occurrences of actions in $P_i$ are $k!a$ and $h!a$, and these are
executed at either $P_{i+1}$ or~$P_{i+2}$. 
\end{exa}

\begin {defi}
\fi 
\label{def:maximal}
A transition sequence $s$ 
is 
    \emphasise{terminated} iff it has~\iflong finite \fi%
    length $n$ and $P_n\not\transition{\ \ }$.  It 
    is \emphasise {maximal} iff it is finite and terminated or infinite
    and fair.
\iflong
\end {defi}
\noindent
The transition sequence \eqref{eq:inftrans} above is maximal. 

We define lock-freedom in the spirit of \cite{Kobayashi2002122};  notice that the present definition strictly generalises fairness.

\begin {defi}
  \label{def:lock-free}
\fi A maximal transition sequence    $(P_i,λ _i)$ is lock-free iff whenever
there is a top-level occurrence of an action $a$
in $P_i$, then there exists some $j ≥ i$ s.t.~$P_j \transition{λ_j} P_{j+1} $ executes that occurrence.  A process  is lock-free iff all its transition sequences are.
\end {defi}

This definition adapts the one of \cite{Kobayashi2002122} to the setting of
binary session types. However,  the present definition is given in terms of
the transition semantics as opposed to the reduction semantics.  Because of
our amended \rulename{S-ParL} rule,  this makes sense: a process has a
non-$\tau$ transition $ λ$ precisely if it holds only one  half of a session,
and thus needs and expects an environment to perform a co-action in order to
proceed. However, since our definition of lock-freedom  imposes its condition
on \emphasize {every}   maximal transition sequence, there is no presumption
of a cooperative environment.

\begin{exa}
  The infinite execution sequence \eqref{eq:inftrans} above is lock-free.
  Conversely, consider this process.
  \[
    Q = (\RECtriv X{ \;\SEL ka\;\VAR Xk{}} )
        \PAR
        \RECEIVE h x. \SEND {\overline j} x
        \PAR
        \RECEIVE j y. \SEND {\overline h} y
  \]
  Because $\rulename{SPar-L}$ does not permit transitions on neither $h$ nor
  $j$, $Q$ has only one infinite transition sequence:
  \[
    Q \transition {k!a} Q \transition {k!a} \cdots\;.
  \]
  This sequence is maximal but not lock-free. 
  Again indexing the processing in the sequence $Q=Q_1,Q_2,\ldots$ we see that
  at each $Q_i$ we have only one enabled occurrence of an action, namely
  $k!a$, which is immediately consumed by a transition. Hence this sequence is
  maximal. On the other hand, at each $Q_i$ we have top-level occurrences of
  both $h?x$ and $j?y$, but neither is ever executed. Hence this sequence is not
  lock-free. 
\end{exa}

Using the notion of maximal transition sequence, we can now say what it means
for a process to be live.
 \begin{defi}[Live process]
 \label{definition-live-process}
    A well-typed process $Θ ⊢_{\standard} P ▹ Δ$ is \emphasise{live} wrt. $ Θ, Δ$ 
    iff for any maximal transition sequence $(P_i, λ_i)_i$
    of $P$ there exists a live typed transition sequence $ ( Δ_i,δ_i)_i$ of $  Δ$
    s.t.~$(( P_i, Δ_i), (λ_i, δ_i))_i$ is a typed transition sequence of $Θ
    ⊢_{\standard} P ▹ Δ$.  
\end{defi}

\subsection{Decomposition of transition sequences}
\label{sub:decomp}
We proceed to establish properties of transition sequences
of a parallel process: Most importantly, they arise as the  merge of transition
sequences of their underlying left and right processes. First, an auxiliary
definition. 


\begin{defi}
    For a process transition label $\lambda$, define $\selected(\lambda)$ by
\iflong
\begin{gather*}
    \selected(k!v) =\selected (k?v) = \selected ( τ) = \emptyset  \\
    \selected (k\& l) = \selected(k ⊕ l) = \selected(\tau:l) = l
\end{gather*}
\else
    $
    \selected(k!v) =\selected (k?v) = \selected ( τ) = \emptyset 
    $ and $ 
    \selected (k\& l) = \selected(k ⊕ l) = \selected(\tau:l) = l
    $.
\fi
Given a trace $\alpha$  we lift $\selected(-)$ pointwise, that is, $\selected(\alpha)=\{\selected(\lambda) | \alpha=\phi \lambda\alpha'\}$. 
\end{defi}
     
\iflong 
Note that $\delta\correspond\lambda$ implies $\selected(\lambda)=\selected(\delta)$. 

    \begin {lem} 
    \mathcode`|="026A 
    For  any transition sequence $s$ of $P \mid
        Q$, there exists transition sequences $p=(P_i,β_ i)_{i\in |p|}$ and $q=(Q_i,
        δ_i)_{i\in|q|}$ and monotone surjective maps $u:|s|\to| p|$ and $v:|s|\to
        |q|$ such that $s=(( P_{u (i)} | Q_{ v (i)}), α_i)_{i\in|s|}$
        and $\selected(β) ∪\selected( δ) =\selected(α)$.
    \label{lemma-parallel-traces} \end{lem}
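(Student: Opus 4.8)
The plan is to prove the statement for finite $s$ by induction on $|s|$, and then derive the statement for infinite $s$ as a limit of the decompositions of its finite prefixes. The engine of the argument is that the transition semantics of Fig.~\ref{fig:transitions} uses no structural congruence and is entirely syntax-directed: any transition $P'\mid Q'\transition{\alpha}R$ whose source is a parallel composition is the conclusion of exactly one of \rulename{\B-ParL}, the symmetric rule for \rulename{\B-ParL}, \rulename{\B-Com1}, \rulename{\B-Com2}, or the symmetric versions of the latter two. Inverting whichever rule applies exposes its premises, which are transitions of the left operand, of the right operand, or of both, and exhibits $R$ again as a parallel composition $P''\mid Q''$ in which, according to the rule, exactly the operand(s) named in those premises have changed. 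In other words, a transition sequence of $P\mid Q$ is nothing but an interleaving of a transition sequence of $P$ with one of $Q$, in which communications show up as steps that advance both; the decomposition simply reads this off.

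For the base case $|s|=1$, take $p=(P)$, $q=(Q)$ and $u(1)=v(1)=1$; then $\selected(\beta)\cup\selected(\delta)=\selected(\alpha)$ holds vacuously. For the step, let $s$ have length $n+1$, write $s'$ for its length-$n$ prefix, and apply the induction hypothesis to obtain $p'=(P_i,\beta_i)$, $q'=(Q_i,\delta_i)$, $u'$, $v'$ with $s'=((P_{u'(i)}\mid Q_{v'(i)}),\alpha_i)$ for $i\le n$. The final transition of $s$ is $P_{u'(n)}\mid Q_{v'(n)}\transition{\alpha_n}R$, and I case on the rule concluding it. If it is \rulename{\B-ParL}, its premise is a transition $P_{u'(n)}\transition{\alpha_n}P_{u'(n)+1}$, so I extend $p'$ by this transition, keep $q'$ unchanged, and set $u(n+1)=u'(n)+1$, $v(n+1)=v'(n)$; the symmetric rule is handled symmetrically. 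If it is \rulename{\B-Com2}, its premises are $P_{u'(n)}\transition{\overline{k}\oplus l}P_{u'(n)+1}$ and $Q_{v'(n)}\transition{k\& l}Q_{v'(n)+1}$ with $\alpha_n=\tau{:}l$, so I extend both $p'$ and $q'$ and advance both maps; \rulename{\B-Com1} and the symmetric versions are identical except that the labels involved are $\tau$, $\overline{k}!v$, $k?v$. In each case one checks by inspection that the invariant $\selected(\beta)\cup\selected(\delta)=\selected(\alpha)$ is preserved: in the \rulename{\B-ParL} and symmetric cases a single label is copied verbatim onto one side; in the \rulename{\B-Com1} case $\tau$, $\overline{k}!v$ and $k?v$ all have empty selection; and in the \rulename{\B-Com2} case the common label $l$ is contributed to both $\selected(\beta)$ and $\selected(\delta)$ and equals $\selected(\tau{:}l)$. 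The maps $u,v$ are monotone because they never decrease, and surjective because every index newly introduced on $p$ or $q$ is the value assumed by $u$ or $v$ at the step that introduced it.

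For infinite $s$, I run the construction above coherently: the decomposition of the length-$(n{+}1)$ prefix is taken to be the one produced by the inductive step from the decomposition of the length-$n$ prefix, so each decomposition extends, and never revises, the previous one. Then $p:=\bigcup_n p^n$, $q:=\bigcup_n q^n$, $u:=\bigcup_n u^n$ and $v:=\bigcup_n v^n$ are well defined, $p$ is a transition sequence of $P$ and $q$ one of $Q$ --- each either infinite or finite, the latter when that operand is advanced only finitely often --- and $u:|s|\to|p|$, $v:|s|\to|q|$ remain monotone and surjective (if $p$ is finite of length $m$, then $u$ still attains $m$, at the step that produced $P_m$). The identity $s=((P_{u(i)}\mid Q_{v(i)}),\alpha_i)_{i\in|s|}$ holds at every finite stage and hence in the limit, and $\selected(\beta)\cup\selected(\delta)=\selected(\alpha)$ passes to the limit as well, since $\selected(-)$ on a trace is just the union of the selections of its individual labels and so is monotone in prefixes.

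The one genuinely delicate point is the infinite case: the per-prefix choices must be made consistently so that the unions are well defined, and one must be content with an operand whose transitions stall forever still yielding a bona fide finite transition sequence with the corresponding map surjective onto it. Everything else --- the case analysis, the upkeep of the $\selected$-invariant, and monotonicity of the maps --- is routine bookkeeping over the short, fixed list of transition rules for parallel composition and communication.
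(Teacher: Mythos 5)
Your proposal is correct and follows essentially the same route as the paper's proof: induction on the length of finite sequences, inverting the syntax-directed rules (\rulename{\B-ParL} and its symmetric form, \rulename{\B-Com1}, \rulename{\B-Com2}) at the last step, extending $p$, $q$, $u$, $v$ accordingly, and checking the $\selected$-invariant case by case. The only difference is that you spell out the coherent limit construction for infinite $s$, which the paper dismisses with ``the result for infinite $s$ follows''; your elaboration is consistent with what that remark intends.
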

    \begin {proof*} 
    \mathcode`|="026A 
    \def\cut{\mathop\mathsf{cut}}
     We prove the existence of such functions for finite $s$; the result for
     infinite $s$ follows.  So suppose $s$ is finite and write it $s=(S_i,
     α_i)_{i\in |s|}.$   
     We proceed by induction on the length of $s$. 
    First, a bit of notation: when $α =α_1\ldots α_n$ we define $\cut( α) = α_1\ldots α_{n -1}$. Now, for  $| s | = 1$, the identity functions suffice.  Suppose instead $| s | = n +1$, and consider the last transition $S_n\transition{ α_n} S_{n+1}$. By the induction hypothesis
     we have transition sequences $p,q$ with labels $ β, δ$ and maps $u,v$ such
     that $S_n = P_{u (n)} \mid Q_{ v (n)}$ and $\selected(β) ∪\selected(δ)
     =\selected(\cut(α))$ etc.  Notice that because $ u, v$ are surjective and monotone, $p,q$ must have lengths $u(n)$ and $v(n)$, respectively. 
      We proceed by cases on  the derivation of this last transition. 
      
    \Case{\rulename{\B-ParL}} 
     We  must in this case have
     \[
     \frac{
     P_{u (n)} \transition{ α_n} R
     }{
      P_{u (n)} \mid Q_{ v (n)}\transition{α_n} R \mid Q_{ v (n)} = S_{n+1}
     }\;.
     \]
     Extend $p$ to $p'$ by taking $P_{u(n)+1}=R$ and $ β_{u(n)} =  α_n$; and extend 
     $u,v$ by taking $u(n+1)=u(n)+1$ and $v(n+1)=v(n) $  and we have found the requisite transition sequences and maps.  It is now sufficient to note that 
     \begin{eqnarray*}
     \selected(α) &=& \selected ( α_n) ∪\selected( \cut(α) ) \\
     &=& 
     \selected  ( β_{u(n)}) ∪\selected( β) ∪\selected(δ) \\
     &=& 
     \selected( β') ∪\selected(δ)
     \;.
     \end{eqnarray*}
     
    \Case{\rulename{\B-Com1}}
     We must have in this case
     \[
     \frac{
     P_{u(n)}\transition{\overline k!v} P'
     \qquad
     Q_{v(n)}\transition{k?v} Q'
     }{
     P_{u(n)}\mid Q_{v(n)}\transition{\tau= α_n} P'\mid Q' = S_{n+1}
     }\;.
     \] 
    Extend $p$ to $p'$ by taking $P_{u(n)+1}=P'$ and $ β_{u(n)} =  \overline k!v$; and similarly extend $q$ to $q'$ by taking $Q_{v(n)+1}=Q'$ and $ δ_{v (n) } = k? v$. Extending also $u,v$ by $u(n+1)=u(n)+1$ and $v(n+1)=v(n)+1$  we have found the requisite transition sequences and maps. It is now sufficient to note that
    \begin{eqnarray*}
     \selected(α) &=& \selected ( τ) ∪\selected( \cut(α)) \\
     &=& 
    \selected( β) ∪\selected(δ) \\
     &=& 
    \selected(\overline k!v)  ∪ \selected( β') ∪ \selected(k?v)  ∪\selected(δ')\\
    &=& \selected( β') ∪\selected(δ') 
     \;.
     \end{eqnarray*}
    \qed
    \Case{\rulename{\B-Com2}}
    We must have in this case
    \[
     \frac{
     P_{u(n)}\transition{\overline k ⊕ l} P'
     \qquad
     Q_{v(n)}\transition{k\& l} Q'
     }{
     P_{u(n)}\mid Q_{v(n)}\transition{\tau:l= α_n} P'\mid Q' = S_{n+1}
     }\;.
     \] 
    Extend $p$ to $p'$ by taking $P_{u(n)+1}=P'$ and $ β_{u(n) } =  \overline k ⊕ l$; and similarly extend $q$ to $q'$ by taking $Q_{v(n)+1}=Q'$ and $ δ_{v (n) } = k\& l$. Extending also $u,v$ by $u(n+1)=u(n)+1$ and $v(n+1)=v(n)+1$  we have found the requisite transition sequences and maps. It is now sufficient to note that
    \begin{eqnarray*}
     \selected(α) &=& \selected ( τ:l) ∪\selected( \cut(α)) \\
     &=& 
    \{l\} ∪\selected( β) ∪\selected(δ) \\
     &=& 
    \selected(\overline  k ⊕ l)  ∪ \selected( β) ∪ \selected(k\& l)  ∪\selected(δ)\\
    &=& \selected( β') ∪\selected(δ') 
     \;.
    \end{eqnarray*}
    \end{proof*}

    \begin {lem} \label{lemma-parallel-preserves-support} 
       If $s$ is a maximal lock-free transition sequence of $ P | Q$ with trace $α$, then there exist maximal lock-free transition sequences $p, q$ of $P, Q$ with traces  $β, δ$, respectively, such that  $\selected(α) =\selected(β)  ∪\support {
    δ}$. 
    \end {lem}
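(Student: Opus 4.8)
The plan is to obtain $p$ and $q$ directly from Lemma~\ref{lemma-parallel-traces} and then show that maximality and lock-freedom are inherited from $s$; the label identity will be essentially free. Concretely, Lemma~\ref{lemma-parallel-traces} supplies transition sequences $p=(P_i,\beta_i)_i$ and $q=(Q_i,\delta_i)_i$, monotone surjective maps $u:|s|\to|p|$ and $v:|s|\to|q|$, with $s=((P_{u(i)}\mid Q_{v(i)}),\alpha_i)_i$ and $\selected(\beta)\cup\selected(\delta)=\selected(\alpha)$. Since $u$ is a monotone surjection onto an order with least element, $u(1)=1$ (and likewise $v(1)=1$), so from $P_{u(1)}\mid Q_{v(1)}=P\mid Q$ we get $P_1=P$, $Q_1=Q$; thus $p$ is a transition sequence of $P$ and $q$ one of $Q$. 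Moreover, on process traces the operator $\support{-}$ records exactly the branch/select labels occurring in the trace and ignores sends, receives and $\tau$, so $\support{\delta}=\selected(\delta)$; hence $\selected(\alpha)=\selected(\beta)\cup\support{\delta}$ is precisely the identity Lemma~\ref{lemma-parallel-traces} gives. It remains only to verify that $p$ and $q$ are maximal and lock-free.

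The core is a single occurrence-tracking observation: \emph{for every $m\le|p|$ and every top-level occurrence $o$ of an action in $P_m$, there is an $m'\ge m$ such that step $m'$ of $p$ exists and executes the residual of $o$.} To prove it, pick $i$ with $u(i)=m$ (surjectivity of $u$). As parallel composition is not an action, $o$ is top-level also in $P_m\mid Q_{v(i)}=P^{\,s}_i$, so lock-freedom of $s$ yields a least $j\ge i$ whose transition executes $o$. Each transition of $s$ strictly between $i$ and $j$ preserves $o$, and a preserving transition that touches only the right component, or is a \rulename{\B-Com1/2}-step, leaves the residual of $o$ in the left component (and keeps it top-level, by Lemma~\ref{lemma-occurrences}(2)); hence the residual of $o$ stays in the left component on $[i,j]$. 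The $j$-th transition therefore executes an occurrence living in the left component, so it is a left-component move (\rulename{\B-ParL}-left or \rulename{\B-Com1/2}); and by the construction in Lemma~\ref{lemma-parallel-traces} — where $u$ increments exactly on left moves and $\beta$ records the left half — this transition corresponds to step $m':=u(j)\ge m$ of $p$, which thus executes $o$.

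Now cash this out. Lock-freedom of $p$ is the observation verbatim. If $p$ is infinite it is fair: an occurrence enabled in $P_m$ is top-level in $P_m$ by Lemma~\ref{lemma-occurrences}(1), so the observation applies and it is executed later in $p$. If $p$ is finite of length $n$, it is terminated: were $P_n$ to have a transition, Lemma~\ref{lemma-occurrences}(3) would give a top-level occurrence in $P_n$, and the observation would produce a step $m'\ge n$ of $p$, which does not exist. Hence $p$ is maximal and lock-free. The argument for $q$ is symmetric, using that $v$ increments exactly on right moves and $\delta$ records the right half.

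The main obstacle is the bookkeeping of occurrences across the decomposition: the two facts invoked in the tracking observation — that a transition of $s$ preserving a left-component occurrence keeps its residual in the left component, and that left-component moves of $s$ are in order-preserving bijection with the steps of $p$ with matching executed occurrences — are intuitively forced by the shape of the rules and the explicit construction of Lemma~\ref{lemma-parallel-traces}, but they genuinely rest on the assumed residual theory of Subsection~\ref{sub:occ}; I would record them as short auxiliary remarks rather than redevelop that calculus. A minor but necessary point is to carry the finite/infinite alternatives together with surjectivity of $u$ (if $p$ is infinite then so is $s$; if $p$ is finite then $u$ is eventually constant), which is what makes both the choice of $i$ with $u(i)=m$ and the finite-case contradiction legitimate.
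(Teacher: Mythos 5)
Your proof is correct and takes essentially the same route as the paper's: obtain $p$ and $q$ from Lemma~\ref{lemma-parallel-traces} and then transfer maximality and lock-freedom from $s$ to $p$ and $q$ via the occurrence/residual properties of Lemma~\ref{lemma-occurrences}. The only difference is presentational: the paper argues by contradiction on the ways $p$ could fail (not lock-free, finite but extendable, infinite but unfair), whereas you prove a single positive occurrence-tracking claim and, usefully, make explicit the step the paper leaves implicit, namely that a step of $s$ executing a left-component occurrence corresponds, by the construction of the decomposition, to a step of $p$ executing it.
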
 
    \begin{proof*} By Lemma \ref{lemma-parallel-traces} we find transition sequences
    $p=(P_i,β_ i)_{i\in|p|}$ and $q=(Q_i, δ_i)_{i\in|q|}$ and maps $ u, v$ such
    that $s$ can be written $s = (s_i,\alpha_i)_{i\in|s|} = (P_{u (i)} | Q_{ v
    (i)}, \;\alpha_i)_{i\in|s|}$ and $\selected(α) =\selected(β)  ∪\selected( δ)$. It remains
    to prove that these  $p,q$ are maximal and lock-free. 
    Suppose for a contradiction that $p$ is not; the case for $ q$ is similar. Then
    either (A) $p$ is maximal but not lock-free, or (B) $p$ is not maximal. 

    We
    consider first (A); $p$ maximal but not lock-free. Then some top-level
    occurrence of an action $a$ sits in each $P_i$ when $i\geq n$  for some $n$.
    But then for $j\geq u^{-1}(n)$ we must have $s_j=(P_{u(n)},Q_{v(n)})$
    contradicting 
    $s$ lock-free. 

    Consider now (B);  $p$ not maximal.
    Then either (1)  $p$ is finite and can be extended by a
    transition $λ$, or (2) $p$ is infinite
    but not fair.  

     Suppose (1)
    that is, $p$ of finite length $n$ and $P_n
     \transition{λ}$. By  Lemma~\ref{lemma-occurrences}(3) $P_n$ must have an
     occurrence of an enabled action
     $a$. By Lemma~\ref{lemma-occurrences}(1) this occurrence is top-level. But for $i≥ u^{-1}(n)$, 
     $s_i=(P_{u(i)}|Q_{v(i)})$ and so there is a top-level occurrence of $a$ in each such $s_i$, contradicting $P|Q$ lock-free.  

    Suppose instead (2), that is, $p$ infinite but not fair. Then there
    exists a $P_n$ and an occurrence of an enabled action $a$ in $P_n$ s.t.~no $
    β_j$ with $ j ≥  n$ executes  that occurrence.   By definition, every $P_j\transition {β_j} P_{j +1}$
    then preserves that occurrence.  By Lemma~\ref{lemma-occurrences}(1) the occurrence in $P_n$ is top-level, and so by Lemma~\ref{lemma-occurrences}(2) it also is in every $P_j$.
    But for $j≥ u^{-1}(n)$, 
     $s_j=(P_{u(j)}|Q_{v(j)})$, and so we have found a top-level occurrence of $a$ in each such $s_j$, contradicting $P|Q$ lock-free.  
      \end{proof*}

    \def\guarantee#1{\mathcal G(#1)}

    \subsection{Primitive recursion.}
    \label{sub:primrec}
We proceed to establish our main result in steps, starting with \emphasise
{simple} processes. These arise as the body of primitive recursion.

    \begin{defi}
     A process $P$ is \emphasise {simple for $X$} iff
     \begin{enumerate}
     \item no process variable but $X$ occurs free in $P
        $, and
     \item $\INACT$  is not a sub-term of $P$, and
     \item  neither $\REC Y{Q}$ nor $
        \RECN YieQR$ is a sub-term of $P$,
     \item   $Q|R$ is not a sub-term of $P$.   
     \end{enumerate}
    \end{defi}

    \noindent
    Observe that by convention, in $\recnp$, $P$ is simple for $X$.

    \begin{lem}
        If $P$ simple for $X$ and  $s = (P_i, λ_i)_i$ is a  maximal lock-free transition sequence of $P\{Q/X\}$,
        then  $Q\transition{  λ_{j-1}} P_j$ for some $j>1$.
    \label{lemma-simple-substitution}
    \end{lem}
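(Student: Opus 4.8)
The plan is to argue by structural induction on $P$. Since $P$ is simple for $X$, it contains no parallel composition, no $\INACT$, no inner recursion and no free process variable other than $X$; thus $P$ is a finite tree whose internal nodes are output, input or selection prefixes, branches ${\BRANCHS k{l_i}{P_i}}_{i\in I}$, or conditionals, and whose leaves are all occurrences of $\varp$. Hence $P\{Q/X\}$ is this same ``scaffold'' with a copy of $Q$ plugged in at every leaf, and the first transition of $s$ either is already a transition of $Q$ (when $P=\varp$) or peels off the outermost scaffold constructor of $P$, landing in a process of the form $R\{Q/X\}$ for a proper subterm $R$ of $P$. Since $P$ contains no process-variable binders, a proper subterm of a process simple for $X$ is again simple for $X$, so the induction is well founded.

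Base case $P=\varp$: then $P\{Q/X\}=Q$, so $P_1=Q$; the first transition $P_1\transition{\lambda_1}P_2$ of $s$ (which exists in every situation where we apply the lemma) is then a transition of $Q$, and $j=2$ works. Inductive step, $P\neq\varp$: consider the outermost constructor of $P$. If it is a conditional, so that $P\{Q/X\}=\ifk\ e\ \thenk\ P'\{Q/X\}\ \elsek\ P''\{Q/X\}$, then by totality of $\Downarrow$ either $e\Downarrow\truek$ or $e\Downarrow\falsek$, and by \rulename{\B-CondT} (resp.\ \rulename{\B-CondF}) $P_1$ has exactly the transitions of $P'\{Q/X\}$ (resp.\ $P''\{Q/X\}$); replacing $P_1$ in $s$ by that branch yields a transition sequence of $R\{Q/X\}$, with $R$ the taken branch, which is again maximal and lock-free. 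If the outermost constructor is a prefix on a name $k$ or a branch on $k$, then $P_1$ carries a top-level occurrence of the corresponding communication action; because $s$ is maximal and lock-free this occurrence must be executed, so $P_1$ has a transition at all --- equivalently, the side condition $\overline{k}\notin\fn(\cdot)$ of \rulename{\B-Out}, \rulename{\B-In} or \rulename{\B-Sel} is met, respectively that of \rulename{\B-Bra} is met for some $i_0\in I$ --- and $P_2=R\{Q/X\}$ with $R$ the continuation (resp.\ $R=P_{i_0}$); the tail of $s$ from $P_2$ is then a maximal lock-free transition sequence of $R\{Q/X\}$. In either kind of step we have produced a maximal lock-free transition sequence of $R\{Q/X\}$ with $R$ simple for $X$ and a proper subterm of $P$; applying the induction hypothesis and re-indexing --- shift the index by one in the prefix/branch case, leave indices $\geq 2$ fixed in the conditional case --- yields the required $j>1$.

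The routine verifications are: the subterm-closure of ``simple for $X$''; and that deleting the leading state of a maximal lock-free transition sequence, or replacing a leading conditional by the branch actually taken, again gives a maximal lock-free transition sequence (finiteness and termination are unchanged, fairness and lock-freedom because the active occurrences and the available transitions are preserved). The step that needs real care, and the reason the hypothesis is lock-freedom and not merely maximality, is the prefix/branch case: only lock-freedom rules out $s$ stopping dead at $P_1$ with the outermost communication action of $P$ never fired --- which could otherwise happen, via the side conditions $\overline{k}\notin\fn(\cdot)$, were $\overline{k}$ to occur free in the relevant continuation. Once that is excluded, $s$ must descend all the way through the scaffold of $P$ and so reach, and then act inside, a copy of $Q$.
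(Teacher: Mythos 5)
Your proposal is correct and takes essentially the same route as the paper: structural induction on $P$, ruling out the constructors excluded by simplicity, peeling off a prefix/branch/conditional and applying the induction hypothesis to the shifted (or head-replaced) sequence, which remains maximal and lock-free, with lock-freedom ensuring the leading action actually fires; the base case $\VAR X{\tilde k}$ is handled identically. The only detail the paper spells out that you elide is the input case, where the residual is $R\{Q/X\}\{v/x\}$ rather than $R\{Q/X\}$, handled by noting that $x$ is bound and hence fresh for $Q$, so the substitutions commute and the induction hypothesis still applies.
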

    \begin{proof*}
        \def\sub{\{\tilde v/\tilde x\}}
        \def\Sub{\{Q/X\}}
    By induction on $P$. 

    \Case{``$\INACT$''} Impossible: not simple for $X$. 

    \Case{``$\sendp$''}
    Clearly $(P_ {i +1}, λ_{i +1})_i$ is a maximal lock-free transition sequence of $P\sub$.
    By the induction hypothesis $Q\transition{ λ_j}P_j$ for some $j>2$. 

    \Case{``$\receivep$''}
    Clearly $s' = (P_ {i +1}, λ_{i +1})_i$ is a maximal lock-free transition
    sequence of $P\Sub\sub$.  Because $x$ bound, it is fresh for $Q$, so
    $P\sub\Sub=P\Sub\sub$ and $s'$ is a  maximal lock-free transition
    sequence of the latter. But then by the induction hypothesis 
    $Q\transition{ λ_j}P_j$ for some $j >2$.

    \Case{``$\branchps_{j\in J}$''} 
    Like $\sendp$. 

    \Case{``$\selp$''} Like $\sendp$

    \Case{``$P|R$''} Impossible: not simple for $X$.

    \Case{``$\recp$''}  Impossible: not simple for $X$.

    \Case{``$\RECN YiePR$''} Impossible: not simple for $X$.

    \Case{``$\VAR Y{\tilde k}$''}  By $P$ simple for $X$ we must have $X = Y$ whence
    $s$ is a transition sequence of $X\{Q/X\}=Q$; clearly $Q\transition{\lambda_1} P_2$. 

    \Case{``$\IF ePR $''} 
    Like $\branchps_{j\in J}$.
    \end{proof*}

     \def\sub{\{\tilde Q/\tilde X\}}
    \begin{lem}
     If $s = (P_i, λ_i)_i$ is a maximal lock-free transition sequence of
      $\recnp$ then 
      $Q\transition{  λ_{j-1}} P_j$ for some $j>1$.
    \label{lemma-primitive-recursion-supports-continuation}
    \end{lem}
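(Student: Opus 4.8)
The plan is to induct on the natural number $n$ determined by $e\Downarrow n$, using that each unfolding of the primitive recursion strictly decreases this bound and produces a body that is \emph{simple for $X$}, so that Lemma~\ref{lemma-simple-substitution} applies. Before starting I record the following \emph{head-swap} fact, which is used twice: if $(R_1,\mu_1)(R_2,\mu_2)(R_3,\mu_3)\cdots$ is a maximal lock-free transition sequence and $R\transition{\mu_1}R_2$ for some term $R$, then $(R,\mu_1)(R_2,\mu_2)(R_3,\mu_3)\cdots$ is again maximal and lock-free. This holds because the tail $(R_2,\mu_2)(R_3,\mu_3)\cdots$ is itself maximal and lock-free (tails of maximal lock-free sequences are), and prepending $R$ via the transition $\mu_1$ preserves both properties: by Lemma~\ref{lemma-occurrences}(1),(2) an occurrence of an action enabled in $R$ is top-level and is either executed by $\mu_1$ or preserved as a top-level occurrence of $R_2$, which is then executed later by lock-freedom of the tail.

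\textbf{Base case} $n=0$. Since $e\Downarrow 0$ and evaluation is deterministic, the first transition $P_1=\recnp\transition{\lambda_1}P_2$ must be derived by \rulename{\B-Prec0}, which is exactly the statement $Q\transition{\lambda_1}P_2$; so $j=2$ works.

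\textbf{Inductive step} $n=m+1$. Now the first transition is derived by \rulename{\B-PrecN}, i.e.\ $P\{m/i\}\{\RECN XimPQ/X\}\transition{\lambda_1}P_2$. By the conventions on $\recnp$, $P$ is simple for $X$, and since value substitution cannot introduce $\INACT$, parallel composition, nested recursion, or new free process variables, $P\{m/i\}$ is also simple for $X$. By the head-swap fact, replacing $P_1$ at the head of $s$ by $P\{m/i\}\{\RECN XimPQ/X\}$ yields a maximal lock-free transition sequence $t$ of $P\{m/i\}\{\RECN XimPQ/X\}$ with the same labels $\lambda_1,\lambda_2,\dots$ and the same states from index $2$ onward. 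Applying Lemma~\ref{lemma-simple-substitution} to $t$, with body $P\{m/i\}$ and substituted process $\RECN XimPQ$, gives some $j>1$ with $\RECN XimPQ\transition{\lambda_{j-1}}P_j$. Applying the head-swap fact once more — now to the maximal lock-free tail $(P_j,\lambda_j)(P_{j+1},\lambda_{j+1})\cdots$ of $s$ and the transition $\RECN XimPQ\transition{\lambda_{j-1}}P_j$ — we obtain a maximal lock-free transition sequence of $\RECN XimPQ$. This term is primitive recursion with the same body $P$ and continuation $Q$ but with bound the numeral $m$, which evaluates to $m<n$; so the induction hypothesis applies, and translating its index back along the two swaps yields some $j'>1$ with $Q\transition{\lambda_{j+j'-3}}P_{j+j'-2}$. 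Setting $k=j+j'-2$ we have $k>1$ and $Q\transition{\lambda_{k-1}}P_k$, as required.

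The main obstacle is entirely in the head-swap fact and its two uses: one must check that replacing the head of a maximal lock-free sequence by any term carrying the same outgoing transition preserves both maximality (fairness) and lock-freedom. This is where Lemma~\ref{lemma-occurrences} does the real work, and it is also the point where the strength of lock-freedom over mere fairness matters — an occurrence that the swapped-in transition merely \emph{preserves} rather than executes must still be guaranteed to be executed eventually, and it is lock-freedom of the tail, not fairness, that delivers this.
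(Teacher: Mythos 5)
Correct, and essentially the paper's own proof: you induct on the evaluated bound, unfold the primitive recursion in the inductive step, use that $P\{m/i\}$ is simple for $X$ so that Lemma~\ref{lemma-simple-substitution} applies, and conclude via the induction hypothesis on the smaller bound. If anything, you are more explicit than the paper, which merely asserts that the head-swapped sequence is again maximal and lock-free and compresses the application of Lemma~\ref{lemma-simple-substitution} and the appeal to the induction hypothesis into a single step, whereas you spell out both uses of the head-swap and check the index arithmetic.
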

    \begin{proof*}
     By induction on $n$. 
    If $n=0$ then $s$ is a transition sequence of $Q$ iff it is of $\RECN Xi0PQ$,
    so clearly $Q\transition {λ_1} P_2$.
    If instead $n=m+1$ observe that 
    \[
    \RECN Xi{m+1}PQ\transition {λ_1} R\qquad\text{iff}\qquad
    P\{m/i\}\{\RECN XimPQ/X\}\transition {λ_1} R\;.
     \]
    Take $s'$ to be the same as $s$ except $P_1 =P\{m/i\}\{\RECN XimPQ/X\}$.
    Note that $s'$ is maximal and lock-free. By convention $P$ and so $P\{m/i\}$ is simple for $X$.
    Then by Lemma~\ref{lemma-primitive-recursion-supports-continuation}
    for some $j$ we have  $ Q\transition {λ_{j-1}} P_j$.
    \end{proof*}

 \subsection{Liveness}
    \label{sub:live}
    In this sub-section, we finally establish that well-typed, lock-free
    processes are indeed live. To this end, we will prove that the following
    syntactic function $\A(P)$ is in fact an under-approximation 
    of the set of labels process $P$
    must necessarily select when run. 

\begin{defi}
      \label{def:A}
    When $P$ is a process, we define $\A(P)$ inductively as follows.
    \begin{eqnarray*}
        \A(\INACT) &=& \emptyset\\
        \A(\sendp) &=& \A(P)\\
        \A(\receivep) &=&  \A(P) \\
        \A(\branchps_{i\in I}) &=& \bigcap_{i\in I} \left(\{l_i\}\cup \A(P_i)\right) \\
        \A(\selp) &=& \{l\} \cup \A(P) \\
        \A(P|Q) &=& \A(P) \cup \A(Q) \\
        \A(\recp) &=& \A(P) \\
        \A\recnp &=& \A(Q) 
         \\
        \A(\varp) &=& \emptyset \\
        \A(\ifp) &=& \A(P)\cap \A(Q) 
    \end{eqnarray*}
    \end{defi}

        \begin{prop}
     If $s =(P_i, α_i)_i$ is a  maximal lock-free transition sequence of $P\{Q/X\}$ 
    $\A(P)\subseteq \selected(\alpha)$. 
    \label{proposition-approximation-sound} 
     \end{prop}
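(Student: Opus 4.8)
The plan is to prove a mild strengthening and then induct on syntax. The strengthening: for every process $P$, every substitution $\sigma$ of processes for the free process variables of $P$, and every maximal lock-free transition sequence $s=(P_i,\alpha_i)_i$ of $P\sigma$, one has $\A(P)\subseteq\selected(\alpha)$, writing $\alpha$ for the trace $(\alpha_i)_i$ of $s$; the Proposition is the instance $\sigma=\{Q/X\}$. The argument is by induction on the structure of $P$, and throughout I would use two routine facts: $\A$ is invariant under value substitution (it never inspects data expressions), and every suffix of a maximal lock-free transition sequence is itself maximal and lock-free.

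The base cases $\INACT$ and $\varp$ are immediate since $\A(\INACT)=\A(\varp)=\emptyset$. For a communication prefix --- take $\sendp$, the cases $\receivep$ and $\selp$ being analogous --- the process $(\sendp)\sigma$ has exactly one top-level occurrence of an action, and only \rulename{\B-Out} applies to it; as $s$ is lock-free that occurrence is executed, so $s$ has length $\geq 2$, $\alpha_1=k!v$, and $P_2$ is the continuation under $\sigma$ (with a value substitution folded in, in the $\receivep$ case). Applying the induction hypothesis to the suffix of $s$ from $P_2$ and to the sub-term continuation gives the claim, because $\selected(k!v)=\emptyset$ and $\A(\sendp)$ is $\A$ of the continuation; for $\selp$ the label $\selected(k\oplus l)=\{l\}$ coming from $\alpha_1$ is exactly the extra $\{l\}$ in $\A(\selp)$. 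The branch case $\branchps_{i\in I}$ is where the shape of $\A$ matters: lock-freedom again forces $s$ to fire the branch first --- some $l_j$ must be enabled, otherwise $(\branchps_{i\in I})\sigma$ is stuck and $s$ is not lock-free --- so $s$ lands at $P_j\sigma$, and the induction hypothesis gives $\{l_j\}\cup\A(P_j)\subseteq\selected(\alpha)$; since $\A(\branchps_{i\in I})=\bigcap_{i\in I}\bigl(\{l_i\}\cup\A(P_i)\bigr)\subseteq\{l_j\}\cup\A(P_j)$, we are done --- the intersection is precisely what accounts for not knowing which branch is chosen. The conditional $\ifp$ follows the same pattern: the derivation of the first transition of $s$ exposes one of the two branches, the induced transition sequence of that branch is maximal and lock-free with the same trace $\alpha$ (the $\mathsf{if}$ is transparent in the semantics), so by induction $\selected(\alpha)$ contains $\A(P)$ or $\A(Q)$, hence contains $\A(\ifp)=\A(P)\cap\A(Q)$ (the degenerate case of a stuck $(\ifp)\sigma$ is handled separately: lock-freedom then forces $\A(\ifp)=\emptyset$).

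For parallel composition $P=P'\mid P''$, I would invoke Lemma~\ref{lemma-parallel-preserves-support}: $s$ decomposes into maximal lock-free transition sequences $p,q$ of $P'\sigma$ and $P''\sigma$ with traces $\beta,\delta$ such that $\selected(\alpha)\supseteq\selected(\beta)\cup\selected(\delta)$ (noting $\support{\delta}\supseteq\selected(\delta)$). The induction hypothesis on $P'$ and on $P''$ then yields $\A(P')\cup\A(P'')\subseteq\selected(\beta)\cup\selected(\delta)\subseteq\selected(\alpha)$, i.e.\ $\A(P'\mid P'')\subseteq\selected(\alpha)$. For primitive recursion $\recnp$, Lemma~\ref{lemma-primitive-recursion-supports-continuation}, together with the syntactic convention that the body is simple (so the initial part of $s$ up to termination of the loop is finite and leaves exactly the continuation $Q\sigma$, up to an irrelevant value substitution), shows that a suffix of $s$ is a maximal lock-free transition sequence of $Q\sigma$; the induction hypothesis on the sub-term $Q$ then gives $\A\recnp=\A(Q)\subseteq\selected(\alpha)$.

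The main obstacle is general recursion $\recp$, since \rulename{\B-Rec} unfolds it to $P\{\recp/X\}$, which is \emph{not} a sub-term of $\recp$ --- so a naive structural induction stalls exactly here. This is what the strengthening to simultaneous substitutions was designed for: by \rulename{\B-Rec}, $(\recp)\sigma$ has precisely the transitions of $P\sigma^{+}$ with $\sigma^{+}=\sigma[X\mapsto(\recp)\sigma]$, so $s$ with its first state replaced by $P\sigma^{+}$ is again a maximal lock-free transition sequence (maximality and lock-freedom transfer across the transparent $\recp$-unfolding, exactly as in the conditional case) with the same trace $\alpha$; now $P$ \emph{is} a sub-term of $\recp$, so the induction hypothesis applies to $P$ under $\sigma^{+}$ and gives $\A(P)\subseteq\selected(\alpha)$, which is $\A(\recp)\subseteq\selected(\alpha)$ since $\A(\recp)=\A(P)$ by definition. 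The only genuinely delicate points in the whole proof are then (a) checking that maximality and lock-freedom really do pass to the suffixes and to the state-replacements above, and (b) justifying ``some branch is enabled'' in the branch and conditional cases, both of which follow from the definitions of maximal and lock-free transition sequence together with Lemma~\ref{lem:occ}.
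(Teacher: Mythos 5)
Your proposal takes essentially the same route as the paper's proof: a structural induction on $P$ stated for transition sequences of $P$ under a process-variable substitution (the paper writes $\{\tilde Q/\tilde X\}$ and, in the $\recp$ case, extends the substitution with the unfolding exactly as your $\sigma^{+}$), with prefix cases handled by passing to suffixes, parallel composition by Lemma~\ref{lemma-parallel-preserves-support}, and primitive recursion by Lemma~\ref{lemma-primitive-recursion-supports-continuation}. The points you flag as delicate (transfer of maximality and lock-freedom to suffixes, unfoldings and resolved conditionals) are asserted without further argument in the paper as well, so your write-up is, if anything, more explicit than the published proof.
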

     \def\shift{\mathop{\mathsf{shift}}}
     \begin{proof*}
     First, notation: if $α$ is a sequence $ α_1 α_2\cdots$ we define
      $\shift  (α) = α_2\cdots$.
    We proceed by induction on $P$. 

    \Case{``$\INACT$''}
    Immediate from $\A(\INACT)=\emptyset$.

    \Case{``$\sendp$''}
    Clearly $(P_ {i +1}, α_{i +1})_i$ is a maximal lock-free transition sequence of $P\sub$. By the induction hypothesis $ \A ( P) ⊆\selected(\shift( α)) =\selected (α)$.

    \Case{``$\receivep$''}
    Clearly, for some $v$, $(P_ {i +1}, α_{i +1})_i$ is a maximal lock-free transition sequence of $P\sub\{v/x\}$. As $x$ is bound $P\sub\{v/x\}=P\{v/x\}\sub$.
    Using the induction hypothesis $ \A(P)=\A ( P\{v/x\}) ⊆\selected (\shift  (α)) =\support (α)$.

    \Case{``$\branchps_{i\in I}$''}
    Like $\sendp$. 

    \Case{``$\selp$''}
    Like $\sendp$.

    \Case{``$P|R$''}
    By Lemma~\ref{lemma-parallel-preserves-support} there exists traces 
    maximal lock-free  transition sequences $p, q$ of $P\sub, R\sub$ with traces $ β, δ$ s.t.~$\selected(β) \cup\selected(δ) =\selected(α)$.
    Using the induction hypothesis we find $
        \A (P)  ∪ \A (R) 
        ⊆\selected(\beta) ∪\selected(\delta) 
         = \selected(\alpha) 
    $.

    \Case{``$\REC YP$''}
    $s$  is lock-free maximal  transition sequence of $\REC YP\sub$. Then taking $s'$ to be the same as $s$ except $P_1 = P\sub\{\REC Y(P\sub)/Y\} $
     we have a maximal lock-free transition sequence of the latter, 
     also with trace $α$. Using the induction hypothesis $\A(\recp)=\A(P)\subseteq\selected(\alpha)$. 

    \Case{``$\RECNnp YiePR$''}
     $s$ is a lock-free maximal transition sequence of $\RECNnp YIePR\sub$. 
    By Lemma~\ref{lemma-primitive-recursion-supports-continuation}
    for some $j>1$ we have $R\transition{\lambda_{j-1}}P_j$, and so 
    \[
        R\transition{\lambda_{j-1}} P_j \transition {\lambda_j} P_{j+1}\cdots 
     \]
     is a lock-free maximal transition sequence of $R$. 
     The induction hypothesis now implies that $\A(\RECNnp YiePR)=\A(R)
    \subseteq \selected (\shift^{j-2}( α))  ⊆\selected (α)$.

    \Case{``$\VAR Y{\tilde k}$''}
    Immediate from $\A(\VAR Y{\tilde k})=\emptyset$. 

    \Case{``$\IF ePR $''}
    Like $\branchps_{j\in J}$.
    \end{proof*}
      
    %
    %
    %
    %
    %
    %

       Now comes the key lemma: the under-approximation $\A(P)$ is in a sense an
    \emph{over}-approximation of the pending-response environment of a
    well-typed process. The Lemma can be read like this:
    what the process is committed to do ($L$), less what it has done so far
    ($M(\Gamma)$), it will do before iterating ($\A(P)$).

    \begin {lem} 
        Suppose that $ Γ; L ⊢ P ▹  Δ$. 
      Define mappings $M ( ( A, I, Δ) ) =  A$ and $M (   (L, Δ) ) = L$, and 
      \[
            M(Γ)=\bigcup_{X\in \domain( Γ)} M (Γ (X))\;.
        \]
        Then $ L\setminus M (Γ) ⊆\A(P)$.  \label {lemma-discharge}
    \end {lem}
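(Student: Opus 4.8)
The plan is to prove $L \setminus M(\Gamma) \subseteq \A(P)$ by induction on the typing derivation of $\Gamma; L \vdash P \triangleright \Delta$, following the same case structure used in the other structural lemmas of Section~\ref{section-livenesstyping} (such as Lemma~\ref{lemma-L-weakening} and Lemma~\ref{lemma-process-variable-substitution}). In each case I would inspect the last typing rule applied, apply the induction hypothesis to the premises, and then relate the $L$, $\Gamma$, and $\A(-)$ data in the premises to those in the conclusion.

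First I would dispose of the leaves. For \inactr, $L = \emptyset$ so the inclusion is trivial since $\emptyset \subseteq \A(\INACT) = \emptyset$. For \varr, the premise gives $L \subseteq I \subseteq A$ where $\Gamma(X) = (A, I, \Delta)$, so $L \subseteq A \subseteq M(\Gamma)$, whence $L \setminus M(\Gamma) = \emptyset \subseteq \A(\varp) = \emptyset$. For \varnr, similarly $L \subseteq L'$ where $\Gamma(X) = (L', \Delta)$, so $L \subseteq M(\Gamma)$ and again the left side is empty. Next, the ``transparent'' cases \sendr, \recvr, and \ifr: here $\A$ is unchanged (or, for \ifr, an intersection of the two branch values, each of which dominates $L \setminus M(\Gamma)$ by IH since $L$ and $\Gamma$ are unchanged in both premises), and $L$, $\Gamma$ are preserved in the premise(s), so the IH gives exactly what is needed. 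For \concr, the premises give $L_i \setminus M(\Gamma) \subseteq \A(P_i)$ with $L = L_1 \cup L_2$ (and $\Gamma$ shared), so $L \setminus M(\Gamma) = (L_1 \setminus M(\Gamma)) \cup (L_2 \setminus M(\Gamma)) \subseteq \A(P_1) \cup \A(P_2) = \A(P_1 \mid P_2)$.

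The genuinely interesting cases are \selr, \branchr, \defr, and \defnr, because there $\Gamma$, $L$, and $\A$ all change. For \selr, typing $\SEL k{l_j}P$ wrt.\ $L$ has premise $\Gamma + l_j; (L \setminus l_j) \cup L_j \vdash P \triangleright \Delta, k : T_j$, so the IH yields $((L \setminus l_j) \cup L_j) \setminus M(\Gamma + l_j) \subseteq \A(P)$. Now $M(\Gamma + l_j) = M(\Gamma) \cup \{l_j\}$ restricted to those $X$ with $\Gamma(X)$ of the three-tuple form --- but in any case $M(\Gamma) \subseteq M(\Gamma + l_j) \subseteq M(\Gamma) \cup \{l_j\}$, so $L \setminus M(\Gamma + l_j) \subseteq (L \setminus M(\Gamma)) \subseteq (L \setminus l_j) \cup \{l_j\} \cup M(\Gamma)$, and more carefully one checks $(L \setminus M(\Gamma)) \setminus \{l_j\} \subseteq ((L \setminus l_j) \cup L_j) \setminus M(\Gamma + l_j) \subseteq \A(P)$; combined with $\{l_j\} \subseteq \A(\selp) = \{l_j\} \cup \A(P)$, we conclude $L \setminus M(\Gamma) \subseteq \{l_j\} \cup \A(P) = \A(\SEL k{l_j}P)$. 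The \branchr\ case is the same computation done for each branch $i$ and then intersected: the IH at branch $i$ (using $\Gamma + l_i$ and $(L \setminus l_i) \cup L_i$) yields $(L \setminus M(\Gamma)) \setminus \{l_i\} \subseteq \A(P_i)$, hence $L \setminus M(\Gamma) \subseteq \{l_i\} \cup \A(P_i)$ for every $i$, and intersecting over $i \in I$ gives $L \setminus M(\Gamma) \subseteq \bigcap_i (\{l_i\} \cup \A(P_i)) = \A(\branchps_{i\in I})$. For \defr, the premise is $\Gamma, X : (\emptyset, I, \Delta); I \vdash P \triangleright \Delta$ with $L \subseteq I$; the IH gives $I \setminus M(\Gamma, X : (\emptyset, I, \Delta)) \subseteq \A(P) = \A(\recp)$, and $M(\Gamma, X : (\emptyset, I, \Delta)) = M(\Gamma) \cup \emptyset = M(\Gamma)$, so $I \setminus M(\Gamma) \subseteq \A(\recp)$, whence $L \setminus M(\Gamma) \subseteq I \setminus M(\Gamma) \subseteq \A(\recp)$. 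Finally \defnr: the premise $\Gamma; L' \vdash Q \triangleright \Delta$ (on the continuation) with $L \subseteq L'$ gives by IH $L' \setminus M(\Gamma) \subseteq \A(Q) = \A(\recnp)$, so $L \setminus M(\Gamma) \subseteq L' \setminus M(\Gamma) \subseteq \A(\recnp)$; note we do \emph{not} need the body premise here because $\A(\recnp)$ is defined as $\A(Q)$.

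The main obstacle I anticipate is bookkeeping the relationship between $M(\Gamma)$ and $M(\Gamma + L')$ in the \selr\ and \branchr\ cases --- specifically, being careful that $\Gamma + l_j$ only augments accumulators of three-tuple entries, so the precise statement is $M(\Gamma) \subseteq M(\Gamma + l_j) \subseteq M(\Gamma) \cup \{l_j\}$, and then chasing the set-difference manipulation $(L \setminus M(\Gamma)) \setminus \{l_j\} \subseteq ((L \setminus \{l_j\}) \cup L_j) \setminus M(\Gamma + l_j)$ exactly. Everything else is routine; the leaf and transparent cases are immediate, and \concr, \defr, \defnr\ follow directly from the IH together with the containments $L \subseteq L'$ or $L \subseteq I$ supplied by the side conditions of the rules.
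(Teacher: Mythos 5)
Your proposal is correct and follows essentially the same route as the paper's proof: induction on the typing derivation, with the same trivial leaf and transparent cases and the same set-theoretic computation in the \selr/\branchr cases (showing $L\setminus M(\Gamma)\subseteq\{l_i\}\cup\A(P_i)$ for each branch and intersecting), and the same use of $L\subseteq I$ resp.\ $L\subseteq L'$ in \defr/\defnr. Your only deviation is a minor refinement: you note that in general only $M(\Gamma)\subseteq M(\Gamma+l_j)\subseteq M(\Gamma)\cup\{l_j\}$ holds (the paper asserts the equality $M(\Gamma+l_i)=M(\Gamma)\cup\{l_i\}$, which can fail when $\Gamma$ has no three-tuple entries), and your argument correctly uses only the upper bound, so it goes through unchanged.
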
 
    \begin {proof*} 
    By induction on the derivation of $ Γ; L ⊢ P
        ▹ Δ$. 
      
    \Case{\inactr}  By typing, $L =\emptyset$.

    \Case{\sendr}  
    $\A(\sendp)=\A(P) ⊇ L ∖ M (Γ)$, the latter by typing and the induction hypothesis.

    \Case{\recvr} Ditto.

    \Case{\branchr} 
    By the induction hypothesis for $i\in I$
    \[
     (L\setminus \{l_i\}) ∪ L_i) \setminus (M (Γ + l_i)⊆\A(P_i)\;.
    \]
    Observe
    that $M (Γ + l_i) = M (Γ) ∪ \{l_i\}$. 
     We compute: 
    \begin{eqnarray*} 
    L\setminus M (Γ)
    &=&
    \cap_{i\in I}(L\setminus M (Γ))\\
    &⊆&
      \cap_{i\in I}( \{l_i\} ∪ (L  \setminus  (M (Γ))))\\ 
     &=&
      \cap_{i\in I}( \{l_i\} ∪ (L  \setminus  (M (Γ)  ∪ l_i)))\\ 
       & =&
     \cap_{i\in I}( \{l_i\} ∪ ((L\setminus\{l_i\})  \setminus  (M (Γ+ l_i))))\\ 
    & ⊆&  
     \cap_{i\in I}(\{l_i\}  ∪ ((L\setminus \{l_i\}) ∪ L_i) \setminus (M (Γ + l_i))) \\ 
     & ⊆ &
    ∩_{i\in I} (\{l_i\}\cup \A(P_i)) \\
    &=& 
    \A(\branchps_{i\in I})
    \end{eqnarray*}

    \Case{\selr}  Similar to \branchr.

    \Case{\concr}
    By typing,  we find
    $ Δ_1, Δ_2$ and $L_1, L_2$ s.t.~ $Γ; L_i ⊢ P_i ▹ Δ_i$.
     By the induction hypothesis we then find that $ L_i\setminus M(Γ)
    ⊆\A(P_i)$. We now compute:
    \begin {eqnarray*} 
    L\setminus M (Γ)  
    & = &  (L_1 ∪ L_2)\setminus  M (Γ)\\
         & = &  L_1\setminus M (Γ) ∪ L_2\setminus M (Γ)\\
          & ⊆ &
        \A(P_1) ∪\A(P_2)\\ & = &\A(P_1|P_2)
    \end {eqnarray*}

    \Case{\varnr} 
    We have $Γ, X: (L', Δ); L ⊢ \varp  ▹ Δ$. 
    By typing $L ⊆ L'$; by definition $L' ⊆ M (Γ)$. But then   $ L ∖M (Γ) = ∅$.

    \Case{\defnr} We have $ Γ; L ⊢\RECNnp XiePQ▹ Δ$. By typing we have $Γ ⊢
    Q ▹ Δ$ and by definition $\A (\RECNnp XiePQ) =\A (Q)  ⊇  L ∖ M (Γ)$,
     the latter by the induction hypothesis.

    \Case{\varr}   
    We have $ Γ,X: (A, I, Δ); L ⊢  X ▹ Δ$. By definition,
    we find $  A ⊆ M( Γ)$, so by typing $ L ⊆ I ⊆ A \subseteq M( Γ)$. But then $L\setminus M (Γ) =\emptyset$. 

    \Case{\defr}  We have $ Γ; L ⊢ \recp ▹ Δ$. By typing we must have $   Γ,X:
    (\emptyset, I, Δ); I ⊢ P ▹ Δ$. We compute.
    \begin{align*} 
    L\setminus M ( Γ) 
        & ⊆  I\setminus (M (Γ) ∪\emptyset)\\ 
        & =  I\setminus M (Γ,X:(\emptyset, I, Δ)) \\ 
        &⊆ \A (P) & &\text {by IH}
    \end{align*} 

    \Case{\ifr} 
    By typing and the induction hypothesis we have $L ∖  Γ (M) ⊆\A (P) $ and $L ∖
    Γ (M) ⊆\A (Q)$.  But then also $L ∖  Γ (M)  ⊆\A (P) ∩\A (Q) =\A
     (\ifp)$. 
    \end {proof*}
\fi

And this is enough: We can now prove that well-typed lock-free processes in fact
do select every response mentioned in the ``pending response'' part of their
type environment.

\begin {prop} 
    Suppose $\cdot\;; L ⊢ P ▹  Δ$ with $P$  lock-free, and let $s=(P_i, α_i)_i$
    be a maximal transition sequence of $P$. Then $ L ⊆\selected (α)$.
    \label{proposition-discharge-empty} 
\end {prop}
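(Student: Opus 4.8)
\begin{proof*}
The plan is to derive this as a corollary of the two approximation results proved above. Lemma~\ref{lemma-discharge} bounds the still-pending responses \emph{from above} by the purely syntactic set $\A(P)$, while Proposition~\ref{proposition-approximation-sound} shows that $\A(P)$ is \emph{realised} — it is a lower bound on $\selected(\alpha)$ — along every maximal lock-free transition sequence. Sandwiching $\A(P)$ between $L$ and $\selected(\alpha)$ will give the result.

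Concretely, I would proceed as follows. First, since the process variable environment here is empty, the mapping $M$ of Lemma~\ref{lemma-discharge} has $M(\cdot) = \emptyset$ (it is the empty union), so that Lemma specialises to $L = L \setminus M(\cdot) \subseteq \A(P)$. Second, because $P$ is lock-free, every maximal transition sequence of $P$ is lock-free; in particular the given sequence $s$ is maximal and lock-free. Moreover $P$ is closed, hence $P = P\{Q/X\}$ for a process variable $X$ not occurring in $P$ and an arbitrary $Q$, so Proposition~\ref{proposition-approximation-sound} applies to $s$ and yields $\A(P) \subseteq \selected(\alpha)$. Combining the two inclusions gives $L \subseteq \A(P) \subseteq \selected(\alpha)$, which is exactly what is required.

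I do not expect a genuine obstacle here: all the real work has already been invested in Lemma~\ref{lemma-discharge} and Proposition~\ref{proposition-approximation-sound}. The only points that need a moment's care are checking that $M(\cdot) = \emptyset$ for the empty environment and matching the shape of Proposition~\ref{proposition-approximation-sound}, whose statement is phrased for a substituted process $P\{Q/X\}$ rather than for $P$ directly; both are routine since $P$ is closed.
\end{proof*}
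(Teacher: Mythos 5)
Your proposal is correct and follows essentially the same route as the paper's proof: observe $s$ is lock-free, specialise Lemma~\ref{lemma-discharge} to the empty environment to get $L\subseteq\A(P)$, and apply Proposition~\ref{proposition-approximation-sound} to get $\A(P)\subseteq\selected(\alpha)$. Your extra remarks about $M(\cdot)=\emptyset$ and viewing the closed $P$ as a trivial substitution instance $P\{Q/X\}$ are fine routine details that the paper simply leaves implicit.
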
 
\begin {proof*}
     Observe that necessarily $s$ lock-free. We compute:
     \begin{align*}
         L &⊆\A   (P) &&       \text{By Lemma \ref{lemma-discharge}} \\
           &⊆\selected (α) && 
           \text {By Proposition \ref{proposition-approximation-sound}} 
       \end{align*}
\end {proof*}

\begin{exa}
We saw in Example~\ref{ex:typingliveness} that the process $D$ of Example~\ref{example-pi} is typable $\cdot;\{\SI\}⊢ D ▹ \cdots$. By Proposition~\ref{proposition-discharge-empty} above, noting that $D$ is clearly lock-free, every maximal transition sequence of $D$ must eventually select $\SI$. 
\end{exa}

Applying the definition of ``live process'' we now have our desired Theorem:

\begin{thm} Suppose $ ·\; ; L ⊢ P ▹  Δ$ with $P$ lock-free.
    Then $P$ is live for $·\;,\Delta$.  
    \label{thm:live}
\end{thm}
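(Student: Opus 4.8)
Here is a plan for the proof of Theorem~\ref{thm:live}.

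The plan is to transfer each maximal transition sequence of $P$ to the type level via Subject Reduction and then certify the resulting type run live using the ``discharge'' results of this section. Since $\cdot;L\vdash P\triangleright\Delta$ implies $\cdot\vdash_{\standard}P\triangleright\Delta$, the judgement $\cdot\vdash_{\standard}P\triangleright\Delta$ is a legitimate state of the typed transition system, so by Definition~\ref{definition-live-process} it suffices to show that every maximal transition sequence of $P$ lifts to a \emph{live} typed transition sequence of $\cdot\vdash_{\standard}P\triangleright\Delta$.

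So first I would fix an arbitrary maximal transition sequence $(P_i,\lambda_i)_i$ of $P$; since $P$ is lock-free this sequence is lock-free. Starting from $\Delta_1:=\Delta$ and $L_1:=L$, I would iterate Theorem~\ref{theorem-subject-reduction}: at step $i$ it yields a type transition label $\delta_i\correspond\lambda_i$, a transition $\Delta_i\transition{\delta_i}\Delta_{i+1}$, and a judgement $\cdot;L_{i+1}\vdash P_{i+1}\triangleright\Delta_{i+1}$ with $L_{i+1}=(L_i\setminus\responded(\delta_i))\cup\requested(\delta_i)$. This produces a typed transition sequence $((P_i,\Delta_i),(\lambda_i,\delta_i))_i$ of $\cdot\vdash_{\standard}P\triangleright\Delta$ whose process projection is the given sequence and whose type projection $(\Delta_i,\delta_i)_i$ is a transition sequence of $\Delta$; it remains only to check that the label sequence $(\delta_i)_i$ is live.

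For that, I would fix a position $i$ and consider the suffix $s_{i+1}=(P_{i+1+k},\lambda_{i+1+k})_k$. A left-truncation of a maximal transition sequence is again maximal (an infinite fair sequence stays fair when truncated from the left, a finite terminated one stays terminated), and a left-truncation of a lock-free sequence stays lock-free; hence $s_{i+1}$ is a maximal lock-free transition sequence of the well-typed process $P_{i+1}$, with $\cdot;L_{i+1}\vdash P_{i+1}\triangleright\Delta_{i+1}$. Lemma~\ref{lemma-discharge}, applied with the empty process-variable environment so that $M(\cdot)=\emptyset$, then gives $L_{i+1}\subseteq\A(P_{i+1})$, and Proposition~\ref{proposition-approximation-sound} gives $\A(P_{i+1})\subseteq\selected(\alpha')$, where $\alpha'=\lambda_{i+1}\lambda_{i+2}\cdots$ is the label trace of $s_{i+1}$. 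Combining these with $\requested(\delta_i)\subseteq L_{i+1}$, and using that $\delta_j\correspond\lambda_j$ forces $\selected(\lambda_j)=\selected(\delta_j)$ while $\responded(\delta_j)=\selected(\delta_j)$ for every type transition label, I obtain
\[
\requested(\delta_i) \subseteq L_{i+1} \subseteq \selected(\alpha') = \bigcup_{j>i}\selected(\delta_j) = \bigcup_{j>i}\responded(\delta_j) = \support{(\delta_{i+1}\delta_{i+2}\cdots)}.
\]
Since $i$ was arbitrary, every requested response is performed strictly later, so $(\delta_i)_i$ and hence $(\Delta_i,\delta_i)_i$ is live; this is exactly what Definition~\ref{definition-live-process} demands, so $P$ is live for $\cdot,\Delta$.

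The iteration of Subject Reduction and the label bookkeeping are routine; the step needing care — and the main obstacle — is the permanence of maximality and lock-freedom under left-truncation, which is precisely what lets Lemma~\ref{lemma-discharge} and Proposition~\ref{proposition-approximation-sound} (equivalently, the argument of Proposition~\ref{proposition-discharge-empty}, which uses only that the sequence itself is lock-free) be applied to each residual $P_{i+1}$ rather than only to $P$. The same argument covers the finite case, where the relevant suffix may have length one: there $L_{i+1}\subseteq\A(P_{i+1})\subseteq\selected(\epsilon)=\emptyset$, so $\requested(\delta_i)=\emptyset$ and the liveness condition at the last label holds vacuously.
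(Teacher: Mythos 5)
Your proposal is correct and follows essentially the same route as the paper's proof: lift the maximal transition sequence to the type level by iterating Theorem~\ref{theorem-subject-reduction}, observe that each residual $P_{i+1}$ with its suffix sequence remains maximal and lock-free, and then use Lemma~\ref{lemma-discharge} together with Proposition~\ref{proposition-approximation-sound} (i.e., the content of Proposition~\ref{proposition-discharge-empty}) to show every requested response in $\delta_i$ is selected, hence responded, at some later $\delta_j$. Your treatment is in fact slightly more explicit than the paper's (which only cites Proposition~\ref{proposition-approximation-sound} and asserts lock-freedom of $P_{n+1}$ without comment), but the argument is the same.
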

\begin {proof*} 
    Consider a maximal transition sequence $(P_i,  α_i)$ of $P$. 
    By Definition~\ref{definition-live-process} we must find a live type
    transition sequence $(Δ_i, δ_i)$ of $ Δ$  with $((P_i, Δ_i), ( α_i, δ_i))$
    a typed transition sequence of $ ·\; ⊢ P ▹ Δ$.

    By induction and Theorem \ref{theorem-subject-reduction} there exists a
    sequence $ (Δ_i, L_i, δ_i)_i$ with $ ·\; L_i ⊢ P_i ▹ Δ_i$ 
    and $ Δ_i\transition {δ_i} Δ_{i+1}$ and $ δ_i \correspond  α_i $, and moreover
    $L_{i+1}=L_i\setminus\responded(δ_i) \cup \requested (δ_i)$. Suppose $l\in
    \requested( δ_n)$.  Then $l\in L_{n+1}$. Clearly $P_{n+1}$
    also lock-free, so by Proposition~\ref{proposition-approximation-sound},
    $l\in\selected(\shift^{n}( α))$. That means there exists $j>n$ with
    $l\in\selected( α_j)$. But $  α_j\correspond δ_j$ so $l\in\responded  (δ_j)$.
\end{proof*}

\begin{exa}
\label{ex:typinglivenesslive}
We saw in Example \ref{ex:typingliveness} that $P(D)$ is typable as $\cdot; \emptyset ⊢ P(D) ▹ k: T_P, o^+:T_D, o^-:\overline{T_D}$. Noting $P(D)$ lock-free, by the above Theorem it is live, and so will uphold the liveness guarantee in $T_P$: if $\CO$ is selected, then eventually also $\SI$ is selected. Or in the intuition of the example: If the buyer performs ``Checkout'', he is guaranteed to subsequently receive an invoice. 
\end{exa}

\section{Conclusion and Future Work}
\label{section-conclusion}
We introduced a conservative generalization of binary session types to
\emph{session types with responses}, which allows to specify request-response liveness properties.
We showed that session types with responses
 are 
 strictly more expressive (wrt. the classes of behaviours they can express) 
 than standard binary session types. We provided a typing system for a process calculus similar to a non-trivial subset of collaborative BPMN processes with possibly infinite loops and bounded iteration and proved that lock-free, well typed processes are live. 
 

We have identified several interesting directions for future work: 
\begin{itemize}
  \item The present typing system requires guessing
suitable invariants $I$ for typing checking recursion, i.e., $\recp$. We believe
that the syntactic approximation $\A(P)$ of Definition~\ref{def:A} is the unique maximal
$I$ that will allow typing\footnote{
  We extend our gratitude to Anonymous Referee \#1 for pointing us in this direction.
}.
Lifting this belief to a Theorem would be
an essential foundational step necessary for operationalising the present
typing systems into a type-inference algorithm. 

\item 
The present techniques should be augmented by or combined with existing type systems
for ensuring lock-freedom of session-typed processes (e.g.,
\cite{lock1,lock2,lock3,lock4,lock5,lock6}). 

\item The present work could
be lifted to  multi-party session types,  which guarantees lock-freedom. 

\item The notion of request-response structure invites even more general notion
  of liveness, e.g., rather than requiring a particular future response, one
  might require at least one of a set of possible future responses.

\item Channel passing is presently omitted for simplicity of presentation and not needed for our
expressiveness result (Theorem \ref{theorem-expressivity}).
Introducing it, raises the 
question of 
whether one can delegate the responsibility for doing responses or not? If \emph{not}, then channel passing does not affect the liveness 
properties of a lock-free process, 
and so is not really interesting for the present paper. If 
one \emph{could}, it must be ensured that responses are not forever delegated without ever being fulfilled, which is an interesting challenge for future work. We hope to 
leverage existing techniques for the $\pi$-calculus, e.g.,~\cite{Kobayashi2002122}. 

\item Finally, and more speculatively, we plan to investigate relations to fair subtyping~\cite{DBLP:conf/icalp/Padovani13} and Live Sequence Charts~\cite{DBLP:journals/fmsd/DammH01}.
%

\end{itemize}

\section*{Acknowledgement}
We gratefully acknowledge helpful comments from anonymous reviews at LMCS and
various conferences. 

\iflong
\appendix
\section{Subject reduction proof for the standard session typing system}
\label{appendix-standard-subject-reduction} 

\begin {lem}[Process variable substitution] Suppose that $ Θ  , X:Δ' ⊢_{\standard} P ▹   Δ$. Suppose moreover that $  Θ   ⊢_{\standard} Q ▹   Δ'$  with $X$ is not free in
    $Q$.  Then also $ Θ   ⊢_{\standard} P\{Q/X\} ▹
    Δ$ \label{app-lemma-process-variable-substitution}
\end {lem}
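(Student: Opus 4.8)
The plan is to prove the statement by induction on the derivation of $Θ, X : Δ' ⊢_{\standard} P ▹ Δ$, following exactly the pattern of the proof of Lemma~\ref{lemma-process-variable-substitution} but in the simpler setting of the standard session typing system (no response annotations, no accumulators). For the rules governing the communication prefixes (output, input, branch, select) and the conditional, substitution commutes with the process constructor, the premises have the same shape with $P$ replaced by its subterms, and the conclusion follows immediately from the induction hypothesis. For parallel composition one splits $Δ = Δ_1, Δ_2$ and $P = P_1 \mid P_2$ according to the typing, applies the induction hypothesis to each component (with the same $Q$ and $Δ'$), and recombines with the rule for parallel. For $\INACT$ the substitution is vacuous and the conclusion coincides with the hypothesis.

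The only genuinely interesting cases are the process-variable and recursion rules. If the process being typed is a variable $\varp$ for some $Y \neq X$, then $\varp\{Q/X\} = \varp$ and the same instance of the rule for process variables applies. If instead the variable is $X$ itself, then $\varp\{Q/X\} = Q$, and inspecting the rule for process variables shows that it forces $Δ = Δ'$ (the rule merely requires $\domain(Δ) = \tilde k$ and reads the session type off the environment), so the hypothesis $Θ ⊢_{\standard} Q ▹ Δ'$ is precisely what is needed. For the two recursion rules, $\recp$ and $\recnp$, we peel off the binder: the premise types the body under $Θ, X : Δ', Y : Δ$, where $Y$ is the recursion variable, which by $\alpha$-conversion may be assumed both distinct from $X$ and not free in $Q$. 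To invoke the induction hypothesis we first need $Θ, Y : Δ ⊢_{\standard} Q ▹ Δ'$; this follows from $Θ ⊢_{\standard} Q ▹ Δ'$ by a routine weakening lemma for the process-variable environment (weakening of $Θ$ is admissible in the standard system, since $Θ$ is consulted only in the rule for process variables, so the extra assignment $Y : Δ$ can be carried unchanged through the whole derivation). The induction hypothesis then yields the typing of the substituted body, and reapplying the recursion rule gives the result, using that substitution distributes over the recursion constructor and does not affect the bound variable $Y$.

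I do not expect a real obstacle: this is the standard substitution property and the argument is a routine structural induction. The only points requiring a modicum of care are (i) the $X = Y$ sub-case of the process-variable rule, where one must observe that the rule pins the session environment down to the one recorded for $X$, so no genuine weakening is needed there; and (ii) ensuring freshness of the bound recursion variables with respect to $Q$ in the two recursion cases, together with the accompanying (trivial) weakening of $Θ$.
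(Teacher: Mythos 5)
Your proposal is correct and follows essentially the same route as the paper's proof: a structural induction on the typing derivation in which the only non-routine cases are the process-variable rule (where $X=Y$ forces $\Delta=\Delta'$, so the hypothesis on $Q$ applies directly) and the two recursion rules (handled by the induction hypothesis under the extended environment and re-application of the rule). The only difference is that you make explicit the weakening of $\Theta$ by the bound recursion variable and the $\alpha$-freshness of that variable with respect to $Q$, steps the paper leaves implicit; this is a point in your favour rather than a divergence.
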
 
\begin {proof} 
By induction on the typing
    derivation.

\Case {\inactr}  We have $   Θ  , X: Δ' ⊢_{\standard} \INACT ▹ Δ$. By typing $ Δ$ completed, so by \inactr,  we have $  Θ   ⊢_{\standard}\INACT\{Q/X\} ▹ Δ$.

\Case{\sendr} Immediate from the induction hypothesis.

\Case{\recvr} Immediate from the induction hypothesis.

\Case {\branchr} We  have
\[
 \frac{
       \forall i \in I: \quad  Θ , X: Δ' ⊢_{\standard} P_i  ▹  Δ , k: T_i	        
    }{
         Θ , X: Δ' ⊢_{\standard}\ {\BRANCHS k{l_i}{P_i}}_{i\in I}  ▹  Δ, k:\branchst 
    }
\]
 By the induction hypothesis and \branchr we have 
\begin{equation*}
\frac {
Θ  ⊢_{\standard} P_i\{Q/X\}  ▹  Δ , k: T_i
}{
Θ ⊢_{\standard}\ {\BRANCHS k{l_i}{P_i\{Q/X\}}}_{i\in I}  ▹  Δ, k:\branchst 
}\;\;\;.
\end{equation*}

\Case{\selr}  Similar to \branchr.

\Case{\concr}  We have $   Θ  , X: Δ' ⊢_{\standard} P_1 \PAR P_2  ▹
Δ$. By typing we find some  $ Δ_1,Δ_2= Δ$ such that $ Θ   , X: Δ ⊢_{\standard}  P_i ▹ Δ_i$.  By induction we find $Θ   ⊢_{\standard} P_i\{Q/X\} ▹ Δ_i$, which in turn yields $   Θ 
⊢_{\standard} (P_1 \PAR P_2)\{Q/X\} ▹  Δ_1,Δ_2$.

\Case{\varnr} Suppose first $X\not= Y$; then we have 
\[
 \frac{
\domain (Δ) =\tilde k  }{
        Θ,Y: Δ, X: Δ' 
       ⊢_{\standard} \VAR Y{\tilde k}  ▹  Δ     }
    \;\;,
\]
so by \varnr also
\[
      Θ,Y:   Δ
       ⊢_{\standard} \VAR Y{\tilde k}\{Q/X\}  ▹  Δ\;.
\]
If on the other hand $ X = Y$ we have by typing
 \[
 \frac{
\domain (Δ) =\tilde k   }{
        Θ, X:Δ' 
       ⊢_{\standard} \VAR X{\tilde k}  ▹  Δ
     }
    \;;
\]
and it must be the case that $ Δ = Δ'$. We have by assumption $Θ ⊢_{\standard} Q ▹ Δ'$, that is $Θ ⊢_{\standard} \varp\{Q/X\} ▹ Δ$. 

\Case{\defnr}   We have $ Θ, X:  Δ' ⊢_{\standard} \RECN YiePR  ▹  Δ $. By
typing we have $Θ, X: Δ', \AT{Y}{   Δ} ⊢_{\standard} P  ▹  Δ$
and $Θ, X: Δ' ⊢_{\standard} R  ▹  Δ$. Using $  Θ   ⊢_{\standard} Q ▹   Δ'$, by induction $Θ,
\AT{Y}{  Δ} ⊢_{\standard} P\{Q/X\}  ▹  Δ$ and $Θ ⊢_{\standard}R\{Q/X\}  ▹  Δ$ ,
which in turn yields $ Θ ⊢_{\standard}  \RECN YiePR\{Q/X\} ▹  Δ $.

\Case {\varr} 
Suppose first $X\not= Y$; then by typing we have 
\[
 \frac{
 \domain (Δ) =\tilde k  }{
        Θ,Y:Δ, X: Δ' 
       ⊢_{\standard} \VAR Y{\tilde k}  ▹  Δ     }
    \;\;,
\]
so by \varr also
\[
      Θ,Y: Δ 
       ⊢_{\standard} \VAR Y{\tilde k}\{Q/X\}  ▹  Δ\;.
\]
If on the other hand $ X = Y$ we have by typing
 \[
 \frac{
 \domain (Δ) =\tilde k   }{
        Θ, X: Δ'
       ⊢_{\standard} \VAR X{\tilde k}  ▹  Δ
     }
    \;,
\]
where necessarily $Δ = Δ'$, so $Θ ⊢_{\standard} \varp\{Q/X\} ▹ Δ'$.

\Case{\defr}  We have $ Θ  , X: Δ' ⊢_{\standard} \REC YP ▹ Δ$. We find by
typing $ Θ  , X: Δ', Y: Δ ⊢_{\standard} P ▹ Δ$, hence by the induction hypothesis $ Θ  ,Y: Δ' ⊢_{\standard} P\{Q/X\} ▹ Δ$, and so by
\defr $ Θ   ⊢_{\standard}  (\REC YP)\{Q/X\} ▹ Δ$. 

\Case{\ifr} Immediate from the induction hypothesis.
\end{proof}

\begin {lem}
If $Θ ⊢_{\standard} P ▹ Δ$ then also $Θ ⊢_{\standard} P\{v/x\} ▹ Δ$ .
\label {app-lemma-variable-substitution}
\end {lem}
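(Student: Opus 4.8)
The plan is a straightforward structural induction on the derivation of $\Theta \proves_{\standard} P \has \Delta$, exploiting that value substitution $P\{v/x\}$ only rewrites data expressions $e$ occurring in output prefixes $\sendp$ and in the guards of conditionals $\ifp$, and in particular never touches channel names or process variables. Hence neither $\Theta$ nor $\Delta$ is affected, and every rule instance that derived the typing of $P$ will derive the typing of $P\{v/x\}$ essentially verbatim.

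First I would dispatch the leaf cases. For $\INACT$ we have $\INACT\{v/x\} = \INACT$ and $\Delta$ is still completed, so the inactivity rule applies. For $\varp$, substitution leaves the process untouched (it mentions only channels), so the side condition $\domain(\Delta) = \tilde k$ is unchanged and the variable rule still applies with the same $\Theta$ and $\Delta$.

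For the remaining cases $P\{v/x\}$ keeps the same outermost constructor, and I would simply push the substitution inward and invoke the induction hypothesis on the immediate subterms, together with the same typing rule. The only place where the term genuinely changes is the output prefix: $\sendp\{v/x\} = \SEND k{e\{v/x\}}.(P\{v/x\})$; since the typing rule for output records only the channel $k$ and the continuation type, and says nothing about $e$, the very same rule instance types the substituted term. Branching, selection, input $\receivep$, parallel composition, both recursion forms $\recp$ and $\recnp$, and the conditional $\ifp$ are handled purely by the induction hypothesis; the conditional additionally substitutes into its guard, which is harmless for the same reason as for output.

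The only point needing a little care --- and the nearest thing to an obstacle --- is the interaction of substitution with the binders: the input binder in $\receivep$ and the numeral binder in $\recnp$. As usual one assumes, by $\alpha$-conversion, that these bound names are fresh for $v$, so that substitution commutes with the binder and the induction goes through; this is the same bookkeeping already implicit in Lemma~\ref{lemma-variable-substitution}. No real difficulty arises, because in contrast with process-variable substitution (Lemma~\ref{app-lemma-process-variable-substitution}) value substitution cannot alter the session structure at all.
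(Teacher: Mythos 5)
Your proof is correct and follows exactly the route the paper takes: the paper's own proof of this lemma is just ``straightforward induction,'' and your case analysis (substitution touching only data expressions, the $\alpha$-conversion bookkeeping for the binders in $\receivep$ and $\recnp$) is the standard elaboration of that induction. Nothing is missing.
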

\begin {proof}
Straightforward induction.
\end {proof}

\begin{lem}
If $Θ ⊢_{\standard} P ▹ Δ$ and $Δ (k) \not=\one$ then $ k∈\fn(P)$.
\label {app-lemma-delta-exact-outside-end}
\end{lem}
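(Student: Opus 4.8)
The plan is to proceed by structural induction on the derivation of $\Theta \vdash_{\standard} P \triangleright \Delta$, establishing directly that for every channel $k$ with $\Delta(k) \neq \one$, $k$ occurs free in $P$. The base case, the rule for $\INACT$, is vacuous: it requires $\Delta$ to be completed, i.e.\ $\Delta(k) = \one$ for every $k \in \domain(\Delta)$, so the hypothesis $\Delta(k) \neq \one$ cannot hold. The rule for process variables is also immediate: the side condition $\domain(\Delta) = \tilde k$ together with $\fn(\varp) = \tilde k$ gives $k \in \fn(\varp)$ whenever $k \in \domain(\Delta)$.

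The four prefix rules — output, input, branch and select — are the only cases requiring a small case split. In each, the premise types the continuation(s) against an environment $\Delta'$ that coincides with $\Delta$ everywhere except at the subject $k$ of the leading action, where $\Delta$ carries a type of the form $!.T$, $?.T$, $\branchst$ or $\selst$ and $\Delta'$ carries the corresponding continuation type. If the channel under consideration \emph{is} this subject $k$, then it occurs free in $P$ simply because it is the subject of the prefix — this holds regardless of the continuation type, so one neither can nor needs to route through the induction hypothesis (in particular it covers the case where the continuation type happens to be $\one$). If the channel is some $k' \neq k$, then $\Delta'(k') = \Delta(k') \neq \one$, so the induction hypothesis places $k'$ free in each continuation, hence free in $P$.

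The remaining cases reduce immediately to the induction hypothesis. For parallel composition the domains of $\Delta_1$ and $\Delta_2$ are disjoint, so a channel $k$ with $(\Delta_1,\Delta_2)(k) \neq \one$ lies in exactly one $\Delta_i$ with $\Delta_i(k) \neq \one$, and the induction hypothesis gives $k \in \fn(P_i) \subseteq \fn(P_1 \PAR P_2)$. For the rules for $\recp$, $\recnp$ and $\ifp$, some premise types the relevant subterm against the very same $\Delta$; since recursion binds only the process variable and not channels, $\fn$ of each of these constructs contains $\fn$ of that subterm, so the induction hypothesis transfers verbatim.

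I do not anticipate a genuine obstacle: the lemma is a routine structural induction. The only points deserving a moment's attention are the prefix cases, where the "interesting" direction — the subject channel itself — is discharged by a syntactic occurrence argument rather than by the induction hypothesis, and the parallel case, where one invokes the disjointness of $\domain(\Delta_1)$ and $\domain(\Delta_2)$ to locate $k$ in a single component.
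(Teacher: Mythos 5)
Your proof is correct and follows exactly the route the paper intends: the paper's own proof of this lemma is just the phrase ``Straightforward induction,'' and your structural induction on the typing derivation—with the subject of a prefix handled syntactically and the other channels by the induction hypothesis, plus the disjointness of domains in the parallel case—is the standard way to fill it in.
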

\begin {proof}
Straightforward induction.
\end {proof}

\begin{lem}
 Suppose $ Θ ⊢_{\standard} P ▹ Δ, k:T$  with $Δ, k:T$ balanced, $T\not =\one$, and $\overline k\not  ∈\fn (P)$. Then $\overline k\not ∈\domain (Δ)$.
\label {app-lemma-1}
\end{lem}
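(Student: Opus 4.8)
The plan is to mimic the argument for Lemma~\ref{lemma-1}, using the standard-system versions of the auxiliary lemmas. Suppose for a contradiction that $\overline{k} \in \dom(\Delta)$. Then $k:T$ and $\overline{k}:\Delta(\overline{k})$ both occur in the environment $\Delta, k:T$, which is assumed balanced; hence $T \bowtie \Delta(\overline{k})$.

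Next I would observe that duality never relates a non-$\one$ type to $\one$: inspecting the rules \eqref{eq-duality} defining $\bowtie$, the only type dual to $\one$ is $\one$ itself. Since $T \neq \one$ by hypothesis, it follows that $\Delta(\overline{k}) \neq \one$. Now apply Lemma~\ref{app-lemma-delta-exact-outside-end} to the typing $\Theta \vdash_{\standard} P \triangleright \Delta, k:T$: since $(\Delta, k:T)(\overline{k}) = \Delta(\overline{k}) \neq \one$, we conclude $\overline{k} \in \fn(P)$, contradicting the hypothesis $\overline{k} \notin \fn(P)$. Therefore $\overline{k} \notin \dom(\Delta)$.

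There is no real obstacle here; the only point requiring a moment's care is the remark that $\one \bowtie U$ forces $U = \one$, which is immediate from the shape of the coinductive rules for $\bowtie$ (the only rule with $\one$ on one side has $\one$ on the other). Everything else is a direct citation of Lemma~\ref{app-lemma-delta-exact-outside-end} together with the definition of "balanced". Concretely:

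\begin{proof}
Suppose for a contradiction that $\overline{k} \in \domain(\Delta)$. Since $\Delta, k:T$ is balanced and contains both $k:T$ and $\overline{k}:\Delta(\overline{k})$, we have $T \bowtie \Delta(\overline{k})$. By inspection of the rules~\eqref{eq-duality} defining $\bowtie$, the only type dual to $\one$ is $\one$; since $T \neq \one$, it follows that $\Delta(\overline{k}) \neq \one$, i.e.\ $(\Delta, k:T)(\overline{k}) \neq \one$. By Lemma~\ref{app-lemma-delta-exact-outside-end} we thus have $\overline{k} \in \fn(P)$, contradicting the hypothesis. Hence $\overline{k} \notin \domain(\Delta)$.
\end{proof}
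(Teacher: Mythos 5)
Your proof is correct and is essentially the paper's own argument: assume $\overline k\in\domain(\Delta)$, use balance (with the observation that only $\one$ is dual to $\one$) to get $\Delta(\overline k)\not=\one$, then apply Lemma~\ref{app-lemma-delta-exact-outside-end} to contradict $\overline k\not\in\fn(P)$. The extra justification you give for $\Delta(\overline k)\not=\one$ is exactly what the paper leaves implicit.
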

\begin{proof}
 Supposed for a contradiction $\dual k∈\domain (Δ)$. Because $Δ,k:!.T$ balanced, $Δ (\overline {k})\not = \one$. By Lemma~\ref{app-lemma-delta-exact-outside-end} we thus have $\overline k\in\fn(P)$; contradiction.
\end{proof}

\begin {lem} 
Suppose  that $ Θ  ⊢_{\standard} P ▹  Δ$ with 
$P
\transition{λ}  Q$. Then there exists a type transition $Δ\transition
{δ} Δ'$ with $ δ\correspond λ$, such that $
Θ ⊢_{\standard} Q ▹ Δ'$.
 Moreover, if  $Δ$ balanced, then also $Δ'$ balanced.
\label{app-lemma-subject-reduction} 
\end {lem}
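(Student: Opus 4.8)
The plan is to prove this by induction on the derivation of the transition $P\transition{\lambda}Q$, with one case for each transition rule of Fig.~\ref{fig:transitions}; the argument is exactly that of Lemma~\ref{lemma-subject-reduction}, only with the response bookkeeping deleted, so I would in fact phrase it as a specialisation of that proof. For the prefix rules \rulename{\B-Out}, \rulename{\B-In}, \rulename{\B-Bra} and \rulename{\B-Sel} I would invert the typing of $P$ via \rulename{\C-Out}, \rulename{\C-In}, \rulename{\C-Bra}, \rulename{\C-sel} to expose the typing of the continuation(s) against the residual type(s), fire the matching type transition of Fig.~\ref{figure-basic-type-transitions} lifted by \rulename{\D-Lift} and \rulename{\D-Par}, and check that the resulting label $\delta$ satisfies $\delta\correspond\lambda$; in the \rulename{\B-In} case I would additionally invoke Lemma~\ref{app-lemma-variable-substitution} to retype the continuation after the value substitution. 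In each prefix case, preservation of balance follows by observing that the side-condition $\dual k\notin\fn(P)$ (resp.\ $\dual k\notin\fn(P_i)$) together with Lemma~\ref{app-lemma-1} yields $\dual k\notin\domain(\Delta)$, so Lemma~\ref{type-transitions-preserve-balance} applies.

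For \rulename{\B-ParL}, where $P\mid P'\transition{\lambda}Q\mid P'$, I would invert \rulename{\C-Par} to split $\Delta = \Delta_1,\Delta_2$ with $\Theta\vdash_{\standard}P\triangleright\Delta_1$ and $\Theta\vdash_{\standard}P'\triangleright\Delta_2$, apply the induction hypothesis to $P\transition{\lambda}Q$ to obtain $\Delta_1\transition{\delta}\Delta_1'$ with $\delta\correspond\lambda$ and $\Theta\vdash_{\standard}Q\triangleright\Delta_1'$, use Lemma~\ref{lemma-type-transitions-preserve-domains} to see that $\Delta_1',\Delta_2$ is still defined, and then reassemble with \rulename{\C-Par} and \rulename{\D-Par}. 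For \rulename{\B-Com1} and \rulename{\B-Com2} I would apply the induction hypothesis to both components, note that the two type transitions so obtained carry dual labels, compose them with \rulename{\D-Com1} resp.\ \rulename{\D-Com2} (again using Lemma~\ref{lemma-type-transitions-preserve-domains} for definedness), and retype with \rulename{\C-Par}; here balance is immediate because $\subject(\tau) = \subject(\tau : l) = \tau$, so the side-condition of Lemma~\ref{type-transitions-preserve-balance} holds vacuously. The recursion rules \rulename{\B-Rec}, \rulename{\B-Prec0} and \rulename{\B-PrecN} I would treat by unfolding and substituting: from \rulename{\C-Rec}/\rulename{\C-Var} (resp.\ \rulename{\C-RecP}) one obtains, via Lemma~\ref{app-lemma-process-variable-substitution} and Lemma~\ref{app-lemma-variable-substitution}, a typing of the unfolded body against $\Delta$, whereupon the induction hypothesis applies to the premise transition. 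Finally \rulename{\B-CondT} and \rulename{\B-CondF} fall out immediately from the induction hypothesis, since \rulename{\C-Cond} types both branches against the same $\Delta$.

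The step I expect to be the main obstacle is the balance-preservation argument inside \rulename{\B-ParL}: one must establish $\dual{\subject(\delta)}\notin\domain(\Delta_1,\Delta_2)$ when $\subject(\delta) = k$, which requires combining Lemma~\ref{lemma-transition-no-coname} (to deduce $\dual k\notin\fn(Q\mid P')$ from the transition) with Lemmas~\ref{app-lemma-delta-exact-outside-end} and~\ref{app-lemma-1} to rule out $\dual k\in\domain(\Delta_1')$ and $\dual k\in\domain(\Delta_2)$ separately --- the same delicate free-name reasoning that already appears in the proof of Lemma~\ref{lemma-subject-reduction}. Everything else is routine inversion-and-reassembly, so once that case is settled the remaining induction is mechanical.
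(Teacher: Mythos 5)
Your proposal is correct and follows essentially the same route as the paper's own proof: induction on the transition derivation, inverting the standard typing rules, firing the matching type transition via \rulename{\D-Lift}/\rulename{\D-Par} or \rulename{\D-Com1}/\rulename{\D-Com2}, using the value- and process-variable substitution lemmas for \rulename{\B-In} and the recursion rules, and handling balance via Lemma~\ref{app-lemma-1} in the prefix cases and the free-name argument (Lemma~\ref{lemma-transition-no-coname} together with the ``non-$\one$ types occur free'' lemmas) in the \rulename{\B-ParL} case. Indeed the paper's appendix proof is exactly the specialisation of Lemma~\ref{lemma-subject-reduction} with the response bookkeeping erased, as you anticipated.
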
 
\begin{proof} By induction on the derivation of the transition.

   \Case{\rulename{\B-Out}} We have $\sendp\transition {k!v} P$ with $\overline{k}\not\in P$ and
   $ Θ ⊢_{\standard} \sendp  ▹  Δ, k: !.T$.  By typing $  Θ
    ⊢_{\standard} P  ▹  Δ, k:T $. By \rulename{\D-Lift} we have $k:!.T\transition {k:!} k: T$. 
 By \rulename{\D-Par} $ Δ,k:!.T\transition {k:!}
     Δ,k:T$.
 Observing that  $k:!\correspond k! v$  we have found the
   requisite type transition.
   
    Now suppose $Δ, k:!.T$ balanced; we must show $Δ, k: T$ balanced. It is sufficient to show $\dual{k}\not∈ \domain (Δ)$. But  this follows from Lemma~\ref{app-lemma-1}.

 \Case{\rulename{\B-In}}
 We have $\receivep\transition {k?v} P\{v/x\}$ with $\overline{k}\not\in \fn(P)$ and $ Θ ⊢_{\standard}
 \receivep  ▹ Δ, k: ?.T$. 
  By typing $ Θ  ⊢_{\standard} P  ▹ Δ, k: T $. 
By  \rulename{\D-Lift} and \rulename{\D-Par},  $Δ,
  k:?.T\transition {k:?} Δ, k:T$. By Lemma
 \ref{app-lemma-variable-substitution} we have 
 $ Θ  ⊢_{\standard} P\{v/x\}  ▹ Δ, k: T$.
 Observing   $k:?\correspond k?  v$
  we have found the requisite transition and typing.
   Preservation of balance follows from Lemma~\ref{app-lemma-1}.
 
 \Case{\rulename{\B-Bra}} We have $\branchps\transition {k\& l_i} P_i$ and $ Θ ⊢_{\standard}{\BRANCHS k {l_i}{P_i}}_{i\in I} ▹ Δ, k:\branchst$.  By typing we have $Θ +
 l_i ⊢_{\standard} P_i ▹  Δ, k: T_i$. By \rulename{\D-Lift} and \rulename{\D-Par} we have $Δ,k:\branchst \transition { k:\& l _i[L_i]}
 Δ,  k: T_ i$. Observing that $k:⊕ l_i[L_i]\correspond k \& l_i $, we have found
 the requisite type transition.    
     Preservation of balance  follows from Lemma~\ref{app-lemma-1}.

\Case{\rulename{\B-Sel}} We have $\selp\transition { k ⊕ l_i} P$ and $
  Θ ⊢_{\standard} \SEL{k}{l_i}{P}  ▹  Δ, k:\selst $. By typing $ Θ +
 l_i ⊢_{\standard} P ▹  Δ, k: T_i$. By  \rulename{\D-Lift} and \rulename{\D-Par} we have \[  Δ, ⊕\{l_i [L_i].\; T_i\}_{i ∈ I}\transition{k:⊕
 l_i[L_i]}  Δ,T_i.\;\]   Observing  that $k:⊕ l_i[L_i]\correspond k ⊕ l_i $, we have found
 the requisite type transition.   
      Preservation of balance follows from Lemma~\ref{app-lemma-1}.

 \Case{\rulename{\B-ParL}} We have $P |  P'\transition {λ}  Q | P'$  with  $\dual {\subject(λ)}\not ∈\fn( P')$ and $ Θ ⊢_{\standard} P |  P' ▹  Δ $.  By typing we
 have for some  $L_1 ∪ L_2 = L$  and $Δ_1 ∪ Δ_2$ that $Θ 
 _1 ⊢_{\standard} P  ▹  Δ_1$ and $ Θ ⊢_{\standard} P'  ▹  Δ_2$.  
  By the
 induction hypothesis, we have a transition $ Δ_1\transition {δ}
 Δ_1'$ with  $ Θ  ⊢_{\standard} Q ▹ Δ_1'$ and $δ\correspond λ$.  By Lemma~\ref{lemma-type-transitions-preserve-domains}
 $ \dom(Δ_1) =\dom
 (Δ_1')$ so $Δ'_1, Δ_2$ is defined, and hence by \concr\ we have $
  Θ   
 ⊢_{\standard}  Q | P' ▹ Δ'_1, Δ_2$ and thus the requisite transition.

Now suppose $Δ_1, Δ_2$ balanced. By Lemma~\ref{type-transitions-preserve-balance} it is sufficient to prove that $\dual 
 {\subject  (δ)}\not ∈\domain (Δ_1,Δ_2)$.  If $\subject (δ) = τ$ this is trivial, so say $\subject (δ) = k$ and suppose for a contradiction $\dual k \in \domain(\Delta_1,\Delta_2)$. We must have $\delta = k:\rho$ and because $\delta\correspond\lambda$
 we must have $\subject(\lambda)=\subject(\delta)=k$. 
 By Lemma~\ref{lemma-transition-no-coname} $\dual k\not\in\fn(Q|P')$.  
  By Lemma~\ref{lemma-type-transition-form} we have $ Δ_1 = Δ_1'', k: S$ 
  with $ S\not  =\one$. 
Because $\Delta_1,\Delta_2$ balanced, $(\Delta_1,\Delta_2)(\dual k)\bowtie S$ and so $(\Delta_1,\Delta_2)(\dual k)\not=\one$. 

Suppose first $\dual k\in\domain(\Delta_1)$. Then $\dual k\in\domain(\Delta_1'')$, so also $\Delta''(\dual k)\not=\one$, and it follows that
  $
  \Delta_1'(\dual k)=\Delta_1''(\dual k)\not=\one
  $. By Lemma~\ref{lemma-delta-exact-outside-end} $\dual k\in\fn(Q)$, contradicting $\dual k\not\in\fn(Q|P')$. 
  
Suppose instead $\dual k\in\domain(\Delta_2)$. Then immediately by Lemma~\ref{lemma-delta-exact-outside-end}
$\dual k\in\fn(P')$, contradicting $\dual k\not\in\fn(Q|P')$.

\Case{\rulename {Com-1}}
We have \[
\frac{ P_1  \transition {\overline{k}!v}
        P_1'\qquad P_2 \transition {k?v} P_2' }
    { P_1 | P_2\transition {\tau}  P_1' | P_2' } 
\]
and
 \[  \frac{
   Θ ⊢_{\standard} P_1  ▹  Δ_1 \qquad Θ ⊢_{\standard} P_2  ▹  Δ_2   }
   {
  Θ ⊢_{\standard} P_1 \PAR P_2  ▹  Δ_1,Δ_2
  }\] 
 By induction we find $ Δ_i\transition  { δ_i} Δ'_i $ s.t.~ $
  Θ  ⊢_{\standard} P_I  ▹  Δ'_i$
  with  $δ_1\correspond\overline {k}  !v$ and $δ_2\correspond k? v$. It
  follows that $δ_1 =k: !$ and $δ_ 2 = k:? $.
   By \rulename{\D-Com1} we thus have $ Δ_1, Δ_2\transition {τ} Δ_1', Δ_2'$.
   Noting that $ τ\correspond  τ$, we have the required type transition. Since $\subject (τ) =  τ$ and so $\dual{\subject (τ)}\not ∈\domain (Δ_1, Δ_2)$,  it follows from Lemma \ref{type-transitions-preserve-balance} that $ Δ_1', Δ_2'$ is balanced when $Δ_1, Δ_2$ is.
   
 \Case{\rulename{\B-Com2}}
  We have \[ \frac{P_1  \transition {\overline{k} ⊕ l} P'_1\qquad  P_2
  \transition {k\& l} 
  P'_2
}{ P_1 | P_2 \transition {\tau:l}  P_1' |  P_2'} \]
  and
 \[  \frac{
   Θ ⊢_{\standard} P_1  ▹  Δ_1 \qquad  Θ ⊢_{\standard} P_2  ▹  Δ_2  }{
  Θ ⊢_{\standard} P_1 \PAR P_2  ▹  Δ_1,Δ_2
  }\] 
  By induction we find $ Δ_i\transition  { δ_i} Δ'_i $ s.t.~ $
  Θ ⊢_{\standard} P_i  ▹  Δ'_i$
  with  $δ_1\correspond\overline {k} ⊕ l$ and $δ_2\correspond k\& l$. It
  follows that $δ_1 =\overline {k}: ⊕ l [L'_1]$, and so we find a transition $
Δ_1, Δ_2\transition {τ: l, L_1' ∪ L_2'} Δ_1', Δ_2'$ by \rulename {\D-Com2}.
 By
  \concr\ we find
  $
  Θ ⊢_{\standard} 
  P'_1 \PAR P'_2  ▹  Δ'_1,Δ'_2
  $.
   Noting that $ τ: l, L_1' ∪ L_2'\correspond  τ:l$ 
we have the required type transition.
Since $\subject (τ: l, L_1' ∪  L_2') = τ$ and so  $\dual{\subject (τ: l, L_1' ∪  L_2')}\not ∈\domain (Δ_1, Δ_2)$,  it follows from Lemma \ref{type-transitions-preserve-balance} that $ Δ_1', Δ_2'$ is balanced when $Δ_1, Δ_2$ is.

\Case{\rulename{\B-Rec}}
 We have
\[ \frac{
                P\{\recp/X\}\transition{λ} Q }{ \recp\transition {λ} Q }
\]
and
\[
  \frac{
     Θ, \AT{X}{Δ} 
    ⊢_{\standard} P  ▹   Δ   }{
     Θ ⊢_{\standard} \recp  ▹   Δ
  }
\]
It then follows by Lemma~\ref{app-lemma-process-variable-substitution} that 
\[
     Θ ⊢_{\standard} P\{\recp/X\}▹ Δ,
\]
and so by induction we find the required  balance-preserving type transition.

\Case{\rulename{\B-Prec0}}
We have 
\[
\frac{ e\Downarrow 0 \qquad Q\transition{ λ}  R }{
\RECN XiePQ  \transition {λ} R } 
\]
and 
\[
\frac{
  Θ, X:Δ 
    ⊢_{\standard} P  ▹  Δ 
    \qquad 
     Θ 
    ⊢_{\standard} R  ▹  Δ 
    }{
     Θ ⊢_{\standard} \recnp  ▹  Δ
  }
  \;, 
\]
and so by the induction hypothesis we find the required balance-preserving type transition.
    
\Case{\rulename{\B-PrecN}}
We have 
\[
\frac{ 
    e\Downarrow n+1\qquad P\{n/i\}\{\RECN XinPQ/X\}\transition {λ} R 
}{
    \recnp\transition {λ} R 
} %
\]
and again
\[
\frac{
  Θ, X: Δ
      ⊢_{\standard} P  ▹  Δ 
    \qquad 
     Θ 
    ⊢_{\standard} Q  ▹  Δ 
     }{
     Θ ⊢_{\standard} \recnp  ▹  Δ
  }
  \;. 
\]
 By Lemma~\ref{app-lemma-variable-substitution} we have
    $ Θ, X:(L ',   Δ) ⊢_{\standard} P\{n/i\}  ▹ Δ$. By Lemma~\ref{app-lemma-process-variable-substitution} we have 
    \[ Θ ⊢_{\standard} P\{n/i\}\{\RECN XinPQ/X\} ▹ Δ, \]
and the requisite balance-preserving type transition follows by the induction hypothesis.

\Case{\rulename{\B-CondT} and \rulename{\B-CondF}}
 We have 
 \[
 \frac{
e \Downarrow \truek\qquad P\transition{λ} P'
}{
\ifp\transition {λ} P'
}
\]
and
\[
  \frac{
        Θ ⊢ P ▹ Δ\qquad
            Θ ⊢ Q ▹ Δ
    }{
      Θ ⊢\ifp ▹ Δ
    }\;,
\]
and the requisite balance-preserving type transition follows from the induction hypothesis.
\end{proof}
\newpage

\fi

\bibliographystyle{entcs} 
\bibliography{db} 

\end{document}
